\theoremstyle{definition}
\newtheorem{principle}{Principle}{\bfseries}{\itshape}
\theoremstyle{remark}
\newtheorem{remark}{Remark}
\newif\ifdraft\draftfalse
\newif\ifwithappendix\withappendixtrue
\newif\ifverylong\verylongfalse
\newcommand{\tk}[1]{\textcolor{red}{[{#1}-Tsukada]}}
\newcommand{\hu}[1]{\textcolor{blue}{[{#1}-Unno]}}
\newcommand{\tkchanged}[1]{\textcolor{red}{#1}}
\newcommand{\huchanged}[1]{\textcolor{blue}{#1}}
\newcommand\removed[1]{{\color{red}\xout{#1}}}
\newcommand{\tk}[1]{}
\newcommand{\hu}[1]{}
\newcommand{\tkchanged}[1]{#1}
\newcommand{\huchanged}[1]{#1}
\newcommand\removed[1]{}
\newcommand\impact{{\sc Impact}}
\newcommand\spacer{{\sc Spacer}}
\newcommand\recmc{{\sc RecMC}}
\newcommand\cpa{{\sc CPAchecker}}
\newcommand{\IndPDRmbp}{\textsc{IndPDR/mbp}}
\newcommand{\IndPDR}{\textsc{IndPDR}}
\newcommand{\IndMCR}{\textsc{IndMCR}}
\newcommand{\NaiveMCR}{\textsc{Na\"iveMCR}}
\newcommand{\MCImpact}{\textsc{Impact/mc}}
\begin{document}

%% Title information
\title{Software Model-Checking as Cyclic-Proof Search}
                                        %% [Short Title] is optional;
                                        %% when present, will be used in
                                        %% header instead of Full Title.
%\titlenote{with title note}            %% \titlenote is optional;
                                        %% can be repeated if necessary;
                                        %% contents suppressed with 'anonymous'
%\subtitle{Maximal conservativity and property-directed reachability}
                                        %% \subtitle is optional
%\subtitlenote{with subtitle note}      %% \subtitlenote is optional;
                                        %% can be repeated if necessary;
                                        %% contents suppressed with 'anonymous'

%% Author information
%% Contents and number of authors suppressed with 'anonymous'.
%% Each author should be introduced by \author, followed by
%% \authornote (optional), \orcid (optional), \affiliation, and
%% \email.
%% An author may have multiple affiliations and/or emails; repeat the
%% appropriate command.
%% Many elements are not rendered, but should be provided for metadata
%% extraction tools.

%% Author with single affiliation.
\author{Takeshi Tsukada}

\orcid{0000-0002-2824-8708}             %% \orcid is optional
\affiliation{
  \institution{Chiba University}            %% \institution is required
  \country{Japan}                    %% \country is recommended
}
\email{tsukada@math.s.chiba-u.ac.jp}          %% \email is recommended

%% Author with two affiliations and emails.
\author{Hiroshi Unno}

\orcid{0000-0002-4225-8195}             %% \orcid is optional
\affiliation{
  \institution{University of Tsukuba}           %% \institution is required
  \country{Japan}
}
\affiliation{
  \institution{RIKEN AIP}           %% \institution is required
  \country{Japan}                   %% \country is recommended
}
\email{uhiro@cs.tsukuba.ac.jp}         %% \email is recommended

%% Abstract
%% Note: \begin{abstract}...\end{abstract} environment must come
%% before \maketitle command
\begin{abstract}
    %The most significant challenge in software model checking is to find an appropriate property, called an \emph{inductive invariant}.
    %\tk{Many algorithm have been proposed to efficiently construct them.}
    %
    This paper shows that a variety of software model-checking algorithms can be seen as proof-search strategies for a non-standard proof system, known as a \emph{cyclic proof system}.
    %
    %reveals a tight connection between software model checking and proof search in a cyclic proof system.
    %
    %The cyclic-proof-based logical foundation of software model checking enables us to compare ...
    %
    Our use of the cyclic proof system as a logical foundation of software model checking enables us to compare different algorithms, to reconstruct well-known algorithms from a few simple principles, and to obtain soundness proofs of algorithms for free.
    Among others, we show the significance of a heuristics based on a notion that we call \emph{maximal conservativity}; this explains the cores of important algorithms such as property-directed reachability (PDR) and reveals a surprising connection to an efficient solver of games over infinite graphs that was not regarded as a kind of PDR.
    %
    %\keywords{software model checking \and linear CHC \and cyclic proof search \and property-directed reachability.}
\end{abstract}

%The abstract should briefly summarize the contents of the paper in 15--250 words.

%% 2012 ACM Computing Classification System (CSS) concepts
%% Generate at 'http://dl.acm.org/ccs/ccs.cfm'.
\begin{CCSXML}
<ccs2012>
<concept>
<concept_id>10011007.10011006.10011008</concept_id>
<concept_desc>Software and its engineering~General programming languages</concept_desc>
<concept_significance>500</concept_significance>
</concept>
<concept>
<concept_id>10003456.10003457.10003521.10003525</concept_id>
<concept_desc>Social and professional topics~History of programming languages</concept_desc>
<concept_significance>300</concept_significance>
</concept>
</ccs2012>
\end{CCSXML}

\ccsdesc[500]{Software and its engineering~General programming languages}
\ccsdesc[300]{Social and professional topics~History of programming languages}
%% End of generated code

%% Keywords
%% comma separated list
%\keywords{keyword1, keyword2, keyword3}  %% \keywords are mandatory in final camera-ready submission

%% \maketitle
%% Note: \maketitle command must come after title commands, author
%% commands, abstract environment, Computing Classification System
%% environment and commands, and keywords command.
\maketitle

\section{Introduction}
\label{sec:intro}

% \begin{itemize}
%     \item Cyclic proof vs. pre-fixed point (An advantage of a cyclic proof search is that the search process proceeds without knowing a pre-fixed point)
%     \item Soundness of the algorithms are now trivial (as the cyclic proof system is sound)
%     \item More complicated covering is allowed (basically global trace condition of cyclic proof system is complicated but powerful) \tk{Future work?}
%     \item Invariants (of internal states need to manage during the verification process) are automatically ensured
% \end{itemize}

% Contributions:
% \begin{itemize}
%     \item 
% \end{itemize}

%\tk{A general comment.  There are only few citations.  Give more.}
%
%\tk{todo: This paragraph is just an introduction of software model-checekers.  The aim is to convince the importance of the topic, and to show our repsect to exisitng work (by citing them).}
%
% PolyPDR: Convex models for Linear CHC \cite{Bjorner2015}
% simulating Numeric Abstract Interpretation with PDR
%
% ArrayPDR: CHC with constraints over Arithmetic + Arrays \cite{Komuravelli2015}
% Required to model heap manipulating programs
\emph{Software model-checkers}~\cite{Jhala2009a,Beyer2018} are tools for verifying systems described by programs.  They have been successfully applied to industrial software systems, in particular OS device drivers~\cite{Ball2002,Ball2004,Khoroshilov2010}.
%Beyer2012a
To address the so-called ``state explosion problem'' in model checking real-world programs, the past decades have witnessed a significant development of state-space abstraction and refinement techniques including \emph{predicate abstraction}~\cite{Graf1997,Ball2001}, \emph{CounterExample-Guided Abstraction Refinement (CEGAR)}~\cite{Clarke2003a}, \emph{lazy abstraction}~\cite{Henzinger2002,Henzinger2004,McMillan2006,Beyer2012}, and \emph{Property Directed Reachability (PDR)}~\cite{Bradley2011,Een2011,Hoder2012,Cimatti2012,Cimatti2014,Komuravelli2013,Komuravelli2014,Birgmeier2014,Vizel2014,Beyer2020}.

Software model-checkers can be classified by their target programming language and target properties.
This paper mainly focuses on the safety verification problem for while languages.

The safety of a given program can be witnessed by an over-approximation of reachable states that does not intersect with the set of bad states.
Such a desirable over-approximation can be characterized by the following three conditions: (1) it contains all initial states, (2) it is closed under the transition relation, and (3) it does not contain any bad state.
Once a candidate of an approximation is given, it is relatively easy to check whether it satisfies the above conditions.
Hence the most challenging part of the software model-checking is to find an appropriate over-approximation.

A modern software model checker guesses a candidate of over-approximation and iteratively refines it until a desirable one is found.
Many papers were devoted to provide efficient procedures, including the above mentioned techniques for abstraction and refinement, to find an appropriate over-approximation, which vary in the structure of candidates and the candidate update method~\cite{Ball2001,Henzinger2002,Henzinger2004,McMillan2006,Hoder2012,Cimatti2012,Cimatti2014,Komuravelli2013,Komuravelli2014,Birgmeier2014}.

This paper aims to provide a unified account for a variety of approaches to software model-checking in terms of logic, or more precisely, proof search.
Since the notion of reachable states is an inductive notion, we need a proof system with (co-)induction.

A famous proof rules for reasoning about (co-)induction are based on pre- and/or post-fixed points (see, e.g., \cite{MartinLoef1971}). However, they are not appropriate for the purpose of interpreting various abstraction and refinement techniques as proof-search strategies, because these proof rules are applicable only after appropriate pre- or post-fixed-points are found.
Therefore the main process of software model-checking, trial-and-error search of appropriate over-approximation, has no logical interpretation.

We show that well-known procedures can naturally be seen as proof-search strategies of a non-standard proof system, known as a \emph{cyclic proof system}~\cite{Sprenger2003,Brotherston2011a}.
Figure~\ref{fig:illustration-of-the-coincidence} illustrates a correspondence between McMillan's \emph{lazy abstraction with interpolants}~\cite{McMillan2006} and cyclic proof search.
We do not explain the details here, which are the topic of Section~\ref{sec:IMPACT}, but we believe that the correspondence should be intuitively understandable.

Although the connection is somewhat expected, establishing a precise connection is not trivial.
This is because the most natural logical expression of the software model-checking, based on least fixed-point, does not fit to practical algorithms (see Section~\ref{sec:backward-symbolic-execution}).
Our key observation to establish a tight connection between cyclic proof search and software model-checking is that we use the dual notion, \emph{safe states}.
%the goal sequent, to which software model-checking is reduced.
% based on our key idea of using the notion of \emph{safe states} instead of its dual, \emph{reachable states}.

Our framework covers other important procedures including variants of \emph{PDR}~\cite{Hoder2012,Cimatti2012}, where we identify the significance of an underlying principle we call \emph{maximal conservativity} of refinement.  It turns out that the principle has also been adopted implicitly by highly efficient refinement algorithms, partially witnessing the usefulness of the principle in practice: one based on Craig interpolation~\cite{Craig1957a} for a variant of PDR~\cite{Vizel2014} and one based on quantified satisfaction~\cite{Farzan2016} for a solver of games over infinite graphs~\cite{Farzan2018}.  To the best of our knowledge, we are the first to apply the maximal conservativity, as one of the fundamental principles of software model checking, to formalize and compare variants of PDR and McMillan's lazy abstraction within a unified framework.  This allowed us to construct a counterexample for the refutational completeness of existing PDR variants~\cite{Hoder2012,Komuravelli2014,Komuravelli2016} and to obtain the first PDR variant with refutational completeness.

\begin{figure}[t]
    \centering
    \includegraphics[scale=0.20]{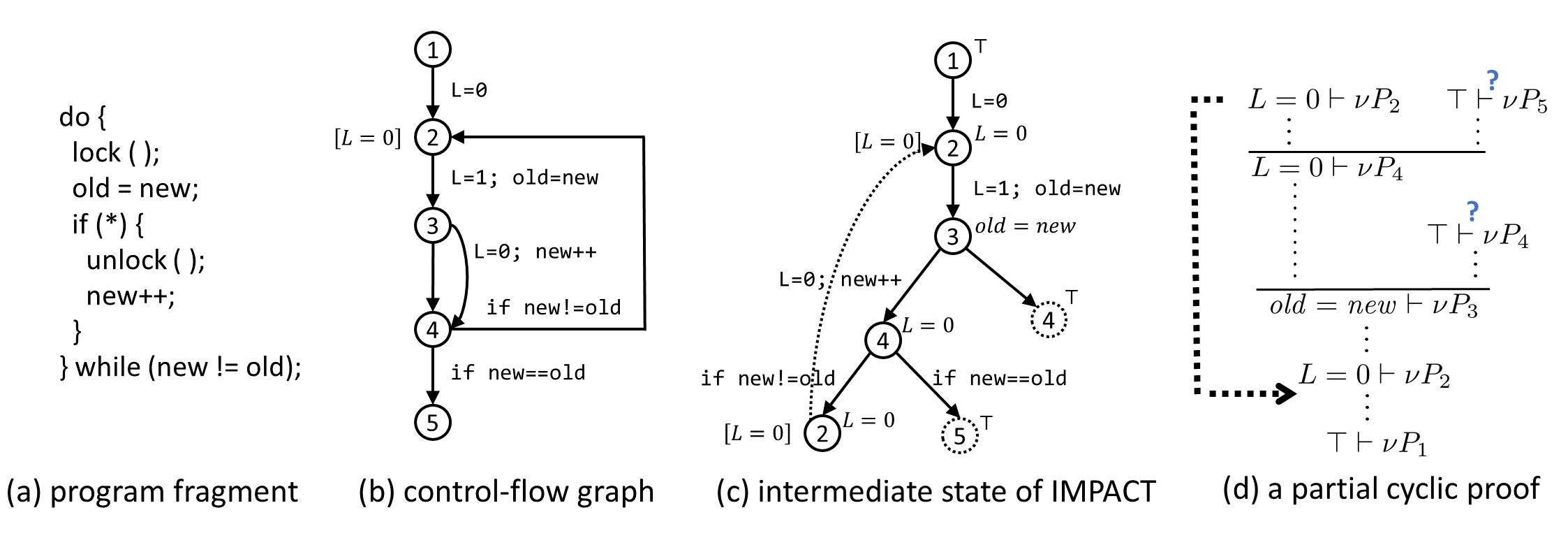}
    %\vspace{-20pt}
    %\vspace{10cm}
    \caption{%
        Correspondence between an intermediate state of McMillan's \impact{} and an intermediate state of cyclic-proof search.
        Figures (a-c) originate from \cite{McMillan2006}.
        (a) An example of a program.
        (b) The control flow graph of (a).  The assertion is that \( L = 0 \) at location \( 2 \).
        (c) An intermediate state of \impact{}.  Nodes with the dotted line are needed to be expanded to confirm the safety.
        (d) An intermediate state of cyclic-proof search.  The sequents with ? mark are open, which are not yet proved.  For each location \( i \), \( \nu P_i \) is the formula corresponding to it; a sequent \( \varphi \vdash \nu P_i \) corresponds to the node in (c) of location \( i \) \huchanged{labeled} with the formula \( \varphi \).
    }
    \label{fig:illustration-of-the-coincidence}
\end{figure}
    
The logical foundation of software model checking based on cyclic proof search has the following advantages.
\begin{itemize}
    \item It allows us to compare different approaches to software model checking.  Different approaches are different proof-search strategies for the same goal sequent in the same proof system.
    % Although different approaches correspond to different proof-search strategies, they share the same proof system and the same goal sequent.
    It is even possible to compare a software model-checking algorithm with algorithms for other problems.
    As an example, we show that PDR shares the same heuristics based on the principle of maximal conservativity as the solver for infinite games~\cite{Farzan2018}. 
    \item It allows us to reconstruct well-known algorithms from a few simple principles.  This is because a proof system is demanding: use of a proof rule requires us to ensure side conditions.  For example, an internal state of PDR is a sequence of formulas that satisfies certain conditions; these conditions indeed coincide with the side conditions of proof rules.
    \item We obtain soundness of model-checking algorithms for free.  It is just a consequence of soundness of the proof system.
\end{itemize}
Note that the logical foundation goes beyond related ones~\cite{Flanagan2004,Podelski2007a,Bjorner2015a} based on constraint logic programming (CLP): they use CLP to capture mainly the process of \emph{modeling} verification problems, while we use a cyclic-proof system to capture both processes of \emph{modeling} and \emph{solving} verification problems; we expect that the logical foundation based on cyclic proofs would also be convenient for developing new algorithms and heuristics.
We shall briefly discuss some possible directions, though the detailed studies are left for future work.
Furthermore, we expect that the logical foundation paves the way for transferring abstraction and refinement techniques studied in the software model checking community to cyclic proof-search, in particular, for finding necessary cut-formulas, which is considered essential for the success of proof-search because cyclic proof systems usually do not admit cut-elimination~\cite{Kimura2020,Masuoka2021}.

\paragraph{Organization of this paper.}
The rest of the paper is organized as follows.  Section~\ref{sec:background} briefly reviews software model checking and two approaches: one is based on an over-approximation of reachable states and the other is based on a cyclic proof system.
%from the view points of linear CHCs and a cyclic proof system.
Section~\ref{sec:idea} discusses what is an appropriate goal sequent to establish the connection between cyclic proof search and software model-checking.
%An important observation is that the standard formalization based on reachable states does not suit for the purpose; instead, we use the dual notion, \emph{safe states}.
%the goal sequent, to which software model-checking is reduced.
%establishes a connection between cyclic proof search and software model-checking based on our key idea of using the notion of \emph{safe states} instead of its dual, \emph{reachable states}.
Sections~\ref{sec:se}--\ref{sec:pdr} exemplify the generality of the logical foundation based on cyclic proofs, by presenting proof-search strategies corresponding to well-known software model-checking techniques: symbolic execution, bounded model checking, 
%$k$-induction,
predicate abstraction, lazy abstraction, and PDR.  We discuss related work in Section~\ref{sec:related} and conclude the paper with final remarks in Section~\ref{sec:conc}.
% background.tex imports chc.tex and cyclic.tex
\section{Background}
\label{sec:background}

\subsection{Definition of Software Model-Checking}
Recall that we have focused on verification of safety properties of while programs.
A program induces a transition system with finite control states and infinite data, and we are interested in whether a bad state is reachable.
For simplicity, we assume that the target program has exactly one control state.
\tkchanged{It is not difficult to apply the ideas of the paper to programs with multiple but finite control \huchanged{states}, by introducing an auxiliary variable representing the program counter.}
% It is not difficult to apply the ideas of the paper to programs with multiple but finite control states, but details are omitted due to limitations of space.

A transition system consists of a domain \( \mathcal{D} \) of infinite data and a subset%
  \footnote{We shall identify a subset \( A \subseteq X \) with a predicate over \( X \) given by \( A(x) \Leftrightarrow (x \in A) \).}
\( \iota \subseteq \mathcal{D} \) of \emph{initial states} and a \emph{transition relation} \( \tau \subseteq \mathcal{D} \times \mathcal{D} \); an \emph{assertion} is given as a subset \( \alpha \subseteq \mathcal{D} \).
A typical example of \( \mathcal{D} \) is \( \mathbb{Z}^k \) where \( k \) is the number of variables in the program.

A state \( x \in \mathcal{D} \) is \emph{reachable} if there is a finite sequence \( x_0, x_1, \dots, x_n \) (\( n \ge 0 \)) such that \( \iota(x_0) \wedge \bigwedge_{i=1}^n \tau(x_{i-1}, x_i) \wedge x_n = x \) holds.
The \emph{safety verification problem} asks whether, given a while program and an assertion, there is a reachable state that violates the assertion.

\subsection{Approach Based on Over-approximation of Reachable States}\label{sec:pre:chc}
Unfortunately the set \( R \) of reachable states is hard to compute, and modern model-checkers avoid the exact computation of \( R \) by using an equational characterization of \( R \).
The set \( R \)
%\( R \subseteq S \times \mathcal{D} \)
of reachable states satisfies the following condition:
\begin{equation}
    \iota(y) \vee (\exists x. R(x) \wedge \tau(x,y))
    ~~\Leftrightarrow~~
    R(y).
    \label{eq:reachable-states}
\end{equation}
Actually this condition characterizes the set of reachable states: it is the \emph{least} solution of \autoref{eq:reachable-states}.
Defining a notion as the least solution of an equation in a certain class is called an \emph{inductive definition}.
So \( R \) is an inductively defined predicate.

Thanks to the inductive definition of reachable states by \autoref{eq:reachable-states}, it suffices to find any solution \( R \) of \autoref{eq:reachable-states}, which is not necessarily least, such that \( R(x) \Rightarrow \alpha(x) \).
If any solution \( R \) of \autoref{eq:reachable-states} satisfies \( R(x) \Rightarrow \alpha(x) \), then obviously so does the least solution of \autoref{eq:reachable-states}.

The condition can be further relaxed.
It is well-known that the least solution of \autoref{eq:reachable-states} coincides with the least solution of
\begin{equation}
    \iota(y) \vee (\exists x. R(x) \wedge \tau(x,y))
    ~~\Rightarrow~~
    R(y).
    \label{eq:reachable-states-ineq}
\end{equation}
Theoretically the least solution can be computed by iteratively applying the predicate transformer \( \mathcal{F}[\varphi](y) := \iota(y) \vee (\exists x. \varphi(x) \wedge \tau(x,y)) \) to the empty set.

\subsection{Approach Based on a Proof System for Inductive Predicates}
The previous subsection reduces the safety verification problem to the entailment problem \( R(x) \Rightarrow \alpha(x) \), where \( R \) is an inductively defined predicate.
Hence a proof system for inductively defined predicates would serve as a basis for software model-checking.

Let us first fix notations.
Assume an equation \( P(x) \Leftrightarrow \delta[P](x) \), where \( P \) occurs only positively in \( \delta[P] \) \tkchanged{(i.e.~every occurrence of \( P \) \huchanged{is} under an even number of negation operators)}.
Here \( \delta \) is an unary predicate that defines \( P \), which itself may depend on \( P \); this dependency is written explicitly as \( \delta[P] \).
We write the least solution of this equation as \( (\mu P)(x) \),%
  \footnote{To simplify the notation, the defining equation is implicit.  By explicitly writing the defining equation, \( \mu P \) can be written as \( \mu P. \lambda x. \delta[P](x) \).}
where \( \mu \) is the standard symbol for the least fixed-points.

This paper focuses on a \emph{cyclic proof system}~\cite{Sprenger2003,Brotherston2011a}.
It is an extension of the sequent calculus for the first-order classical logic, with two additional mechanisms to deal with inductively defined predicates.
The first one is the additional proof rules for inductively defined predicates: 
\begin{equation*}
    \infer[\textsc{($\mu$-L)}]{
        (\mu P)(x) \vdash \varphi(x)
    }{
        \delta[(\mu P)](x) \vdash \varphi(x)
    }
    \qquad\qquad
    \infer[\textsc{($\mu$-R)}]{
        \varphi(x) \vdash (\mu P)(x)
    }{
        \varphi(x) \vdash \delta[(\mu P)](x)
    }
\end{equation*}
\huchanged{These rules just expand} the definition of the predicate \( \mu P \).
For example, for the reachability predicate \( \mu R \) defined by \autoref{eq:reachable-states}, we have
\begin{equation*}
    \infer[\textsc{($\mu$-L)}.]{
        (\mu R)(x) \vdash \varphi(x)
    }{
        \iota(x) \vee (\exists y. (\mu R)(y) \wedge \tau(y,x)) \vdash \varphi(x)
    }
\end{equation*}
A characteristic feature of the cyclic proof system is that a proof is not a tree but can have cycles; instead of proving the sequent of a leaf node, one can make a link to its ancestor with the same sequent.

A \emph{pre-proof} is a proof-like tree whose leaves are either instances of the axiom rule or equipped with links to their ancestors.
\autoref{fig:global-trace-condition} shows two examples of pre-proofs.
Intuitively a link corresponds to a use of induction; for example, the proof of \autoref{fig:global-trace-condition}(a) says that ``we prove the sequent \( (\mu R)(x) \vdash \varphi(x) \) at the second line from the bottom by induction on \( x \)" and that ``the leaf node \( (\mu R)(y) \vdash \varphi(x) \) follows from the induction hypothesis.''

Not all pre-proofs are valid; for example, the proof in \autoref{fig:global-trace-condition}(b) has an invalid conclusion.
Intuitively the invalidity is caused by a wrong use of the induction hypothesis.
Recall that the induction hypothesis is applicable to only smaller elements, i.e.~when proving \( P(t) \), the induction hypothesis tells us \( P(u) \) for \( u < t \) but not for \( u \ge t \).
So if one applies the induction hypothesis to an element not smaller than \( t \), it is a wrong proof.
\autoref{fig:global-trace-condition}(b) makes this kind of mistake.

A pre-proof is valid if it uses the ``induction hypotheses'' appropriately.
The appropriateness in the context of cyclic proof systems is called the \emph{global trace condition}.
Because its definition is rather complicated (see~\cite{Brotherston2011a}), we do not formally define the global trace condition.
For pre-proofs appearing in this paper, the following condition is equivalent to the global trace condition:
\begin{quote}
    If there is a link from a leaf \( l \) to its ancestor \( n \), \textsc{($\mu$-L)} rule is used in the path from \( n \) to \( l \).
    \label{sec:gtc}
\end{quote}
% \begin{proposition}
%     A pre-proof satisfies the global trace condition it it satisfies the following conditions: (1) each sequent has at most one occurrence of an inductively defined predicate, and (2) if there is a link from a leaf \( l \) to its ancestor \( n \), then \textsc{($\mu$-L)} rule is used in the path from \( n \) to \( l \).
%     \tk{Note: This claim seems wrong.}
%     %
%     \qed
% \end{proposition}
\autoref{fig:global-trace-condition}(a) satisfies the global trace condition in the above sense, but \autoref{fig:global-trace-condition}(b) does not.
A \emph{proof} is a pre-proof that satisfies the global trace condition.

\begin{figure}[t]
    \centering
    \includegraphics[scale=0.4]{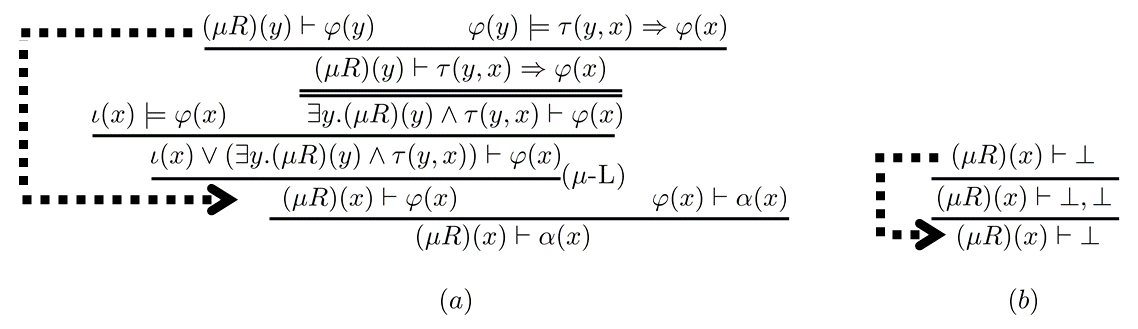}
    \vspace{-10pt}
    \caption{Pre-proofs and the global trace condition.
    The preproof (a) satisfies the global trace condition, but the preproof (b) does not.
    So (a) is a cyclic proof but (b) is not.
    The inductive predicate \( R \) is defined by \autoref{eq:reachable-states}.}
    \label{fig:global-trace-condition}
\end{figure}

\begin{example}
    Suppose that \( \varphi \) is a solution of \autoref{eq:reachable-states} such that \( \varphi(x) \Rightarrow \alpha(x) \).
    Then the cyclic proof of \autoref{fig:global-trace-condition}(a) shows the sequent corresponding to the safety of the program.
    \qed
\end{example}

\subsection{Goal-oriented proof search}
\tkchanged{This paper studies a connection between software model-checking and the proof-search problem for a cyclic proof system, which asks to construct a proof of a given sequent.}
Proof search strategies of this paper are goal-oriented, i.e.~a proof is constructed in a bottom-up manner.
So an intermediate state is \removed{thus} a proof-like structure in which some leaves are not yet proved.
We call such an unproved leaf sequent an \emph{open sequent} and a proof-like structure with open leaves a \emph{partial proof}.

We fix a signature and a theory and assume that \( \iota \), \( \tau \) and \( \alpha \) are formulas of the signature.
We assume an external solver of the theory.
We use \( \models \) instead of \( \vdash \) for the provability of a sequent handled by the external solver.
For example, if a proof rule contains \( \exists x. \iota(x) \wedge \tau(x,y) \models \alpha(y) \) as a premise, its validity should be checked by the external solver; our proof search strategies do not give any details of how to prove or disprove such sequents.
We assume that a partial proof has no open sequent of the form \( \varphi \models \psi \).

A partial proof is \emph{valid} if all \( \models \)-sequents in it are valid.
We sometimes think of an \emph{invalid partial proof}, which contains a false \( \models \)-sequent.
A partial proof is either valid or invalid.

\section{Basic Notions of Software Model-Checking}
\label{sec:base}

\subsection{The Goal Sequent}
\label{sec:idea}

As the first step to establish a connection between software model-checking and cyclic-proof search, this section describes the goal sequent corresponding to software model-checking.
%
%This section describes the connection between software model-checking and cyclic-proof search.
It might seem natural to use the sequent
\begin{equation*}
    (\mu R)(y) \vdash \alpha(y)
    \qquad\mbox{where}~
    R(y) \Leftrightarrow \big(\iota(y) \vee (\exists x. R(x) \wedge \tau(x,y))\big),
\end{equation*}
which directly expresses that all reachable states satisfy the assertion \( \alpha \).
Unfortunately proof search for this sequent does not precisely correspond to existing procedures.

The key observation of this paper is that we should employ another logically-equivalent sequent, which uses the dual notion.
We explain the dual notion from the view point of program verification, introducing the notion of \emph{safe states}.
We also briefly explain a proof system that handles \emph{co-inductively} defined predicates, the duals of inductively defined ones.

%Once the appropriate goal sequent is set, the subsequent development are rather straightforward as we shall see in the following sections.

% \subsection{Old}
% \subsubsection{Maximal Safe Subset}
% The verification problem is often described in terms of reachable states:
% \begin{quote}
%     If \( \mathit{Reach}_{\ell}(s) \) and \( \ell \colon \mathtt{assert}(\alpha_\ell) \), then \( \alpha_\ell(s) \).
%     For every assertion \( \ell \colon \mathtt{assertion}(\alpha) \),
%     if the state \( s \) is reachable at \( \ell \),
%     then \( s \) satisfies \( \alpha \).
% \end{quote}

% \begin{quote}
%     For the initial location \( \ell_0 \),
%     If \( s \) satisfies \( \iota \), then \( s \) is safe.
% \end{quote}

% \begin{remark}
%     The duality between the reachable states and the safe states corresponds that between the strongest post condition and the weakest precondition.
%     The validity of a Hoare triple \( \{ \iota \} c \{ \alpha \} \) is equivalent to \( \mathrm{sp}(\iota, c) \Rightarrow \alpha \) as well as \( \iota \Rightarrow \mathrm{wp}(\alpha, c) \), where \( \mathrm{sp} \) is the strongest position condition and \( \mathrm{wp} \) is the weakest precondition.
%     %
%     \qed
% \end{remark}

\paragraph{Safe states}
From the view point of program verification, the dual formalization can be obtained by using the notion of \emph{safe states}.
A state is \emph{safe} if the execution from the state will never reach a bad state.
The safety verification problem can be alternatively formalized as the problem asking whether all initial states are safe.

Let us formally define safe states.
A state \( x \) is \emph{unsafe} if there exists a sequence \( x_0, \dots, x_n \) (\( n \ge 0 \)) such that \( x = x_0 \wedge \bigwedge_{i=1}^{n} \tau(x_{i-1}, x_i) \wedge \neg \alpha(x_n) \).
A state is \emph{safe} if it is not unsafe.
The set \(S\) of safe states has an equational characterization similar to the set of reachable states:
it is the \emph{greatest} solution of
\begin{equation}
    \alpha(x) \wedge (\forall y. \tau(x,y) \Rightarrow S(y))
    ~~\Leftrightarrow~~
    S(x).
    \label{eq:defining-equation-for-safe-states}
\end{equation}
This kind of definition, specifying something as the greatest solution of an equation, is called \emph{coinductive definition}, which is the dual of inductive definition.
We write \( \nu P \) for the greatest solution (provided that the equation for \( P \) is understood).
Following this notation, the set of safe states is \( \nu S \).

Now the safety verification is reduced to the validity of the following sequent:
\begin{equation*}
    \iota(x) \vdash (\nu S)(x)
    \qquad\mbox{where}~
    S(x) \Leftrightarrow \big(\alpha(x) \wedge (\forall y. \tau(x,y) \Rightarrow S(y))\big).
\end{equation*}
This paper shows a tight connection between software model-checkers and cyclic-proof search for the above sequent.

\paragraph{Cyclic proofs for coinductively defined predicates}
The idea of cyclic proof systems is applicable to coinductively defined predicates.
The proof system has the following rules:
\begin{equation*}
    \infer[\textsc{($\nu$-L)}]{
        (\nu P)(x) \vdash \varphi(x)
    }{
        \delta[(\nu P)](x) \vdash \varphi(x)
    }
    \qquad\qquad
    \infer[\textsc{($\nu$-R)}]{
        \varphi(x) \vdash (\nu P)(x)
    }{
        \varphi(x) \vdash \delta[(\nu P)](x)
    }
\end{equation*}
where \( P(x) \Leftrightarrow \delta[P](x) \) is the defining equation for \( P \).
A proof is a preproof that satisfies the following condition:
    % \footnote{This is not the precise description of the global trace condition (cf.~\autoref{sec:gtc}), but this condition coincides with the global trace condition for pre-proofs appearing in this paper.}
\begin{quote}
    If there is a link from a leaf \( l \) to its ancestor \( n \), \textsc{($\nu$-R)} rule is used in the path from \( n \) to \( l \).    
\end{quote}

\subsection{Symbolic Execution}
\label{sec:se}
\tkchanged{The \textsc{($\nu$-R)} rule with some postprocessing corresponds to the symbolic execution of the one-step transition}.
This observation, which can be easily shown, is quite important because it is why software model-checking processes coincide with proof-search strategies of \( \iota(x) \vdash (\nu P)(x) \).
It also tells us why \( (\mu P)(x) \vdash \alpha(x) \), which seems a more natural expression of the model-checking problem, does not suit for the purpose.

\begin{definition}[Symbolic execution]
    Let \( \varphi(x) \) be a formula denoting a set of states.
    % and\( \tau(x,y) \) be a formula representing the transition relation of a given program.
    Then the set of states after the transition \( \tau \) from \( \varphi \) can be represented by
    \(
        \varphi'(y) := \exists x. \varphi(x) \wedge \tau(x,y)
    \),
    and we write \( \varphi \rightsquigarrow \varphi' \).
    This transition on formulas is called \emph{symbolic execution}.
    \qed
\end{definition}

Let us examine the connection between symbolic execution and \textsc{($\nu$-R)} rule.
We discuss a proof-search strategy for sequents of the form \( \varphi(x) \vdash (\nu P)(x) \), which slightly generalize the goal sequent \( \iota(x) \vdash (\nu P)(x) \).

Applying the \textsc{($\nu$-R)} rule to the sequent \( \varphi(x) \vdash (\nu P)(x) \), we obtain
\begin{equation*}
    \infer[\textsc{($\nu$-R)}.]{
        \varphi(x) \vdash (\nu P)(x)
    }{
        \varphi(x) \vdash \big(\forall y. \tau(x,y) \Rightarrow (\nu P)(y)\big) \wedge \alpha(x)
    }
\end{equation*}
Notice that the new open sequent, which is the premise of the above partial proof, has a shape much different from the goal sequent.
For the purpose of proof search, it is convenient to restrict the shape of open sequents.
We introduce the first proof-search principle of this paper.
\begin{principle}\label{principle:shape-of-goals}
    Try to fit the shape of open sequents into the form \( \varphi(x) \vdash (\nu P)(x) \).
\end{principle}

\tkchanged{We see that this principle leads to a proof strategy that corresponds to symbolic execution.}
Following Principle~\ref{principle:shape-of-goals}, we try to simplify \tkchanged{the open sequent \( \varphi(x) \vdash \big(\forall y. \tau(x,y) \Rightarrow (\nu P)(y)\big) \wedge \alpha(x) \)} of the above partial proof.
Fortunately its validity is equivalent to the validity of two simple sequents as the following partial proof shows:
%
%    \footnote{The double-lined rule is an admissible rule, which is derived by using several rules.}
\begin{equation*}
    \infer[\textsc{($\nu$-R)}]{
        \varphi(x) \vdash (\nu P)(x)
    }{
        \infer{
            \varphi(x) \vdash \big(\forall y. \tau(x,y) \Rightarrow (\nu P)(y)\big) \wedge \alpha(x)
        }{ 
            \infer{
                \varphi(x) \vdash \forall y. \tau(x,y) \Rightarrow (\nu P)(y)
            }{
                \infer{
                    \varphi(x) \vdash \tau(x,y) \Rightarrow (\nu P)(y)
                }{
                    \infer[\textsc{(Cut)}]{
                        \varphi(x), \tau(x,y) \vdash (\nu P)(y)
                    }{
                        \infer{
                            \varphi(x), \tau(x,y) \vdash \exists x. \varphi(x) \wedge \tau(x,y)
                        }{
                            \vdots
                            % \infer{
                            %     \varphi(x), \tau(x,y) \vdash \varphi(x) \wedge \tau(x,y)
                            % }{
                            %     \infer{\varphi(x), \tau(x,y) \vdash \varphi(x) \wedge \tau(x,y)}{}
                            %     &&
                            %     \infer{\varphi(x), \tau(x,y) \vdash \varphi(x) \wedge \tau(x,y)}{}
                            % }
                        }
                        &&
                        \exists x. \varphi(x) \wedge \tau(x,y) \vdash (\nu P)(y)
                    }
                }
            }
            &&
            \varphi(x) \models \alpha(x)
        }
    }
\end{equation*}
(where the proof of \( \varphi(x), \tau(x,y) \vdash \exists x. \varphi(x) \wedge \tau(x,y) \) is omitted).
% which is a proof-candidate with an open leaf \( (\exists x. \varphi(x) \wedge \tau(x,y)) \vdash (\nu P)(y) \), provided that \( \varphi(x) \models \alpha(x) \) holds.
One of the \huchanged{leaves} is an open sequent of the required form, and the other is fixed-point free and passed to an external solver.
So this partial proof meets the criterion of \huchanged{Principle~\autoref{principle:shape-of-goals}}.

The above argument shows that the rule
\begin{equation*}
    \infer[\textsc{(SE)}]{
        \varphi(x) \vdash (\nu P)(x)
    }{
        \exists x. \varphi(x) \wedge \tau(x,y) \vdash (\nu P)(y)
        &&
        \varphi(x) \models \alpha(x)
    }
\end{equation*}
is a derived rule.
We call it the (precise) symbolic execution rule \textsc{(SE)}.
Since we are interested in proof-search strategies, this rule should be read bottom-up: In order to prove the satefy of \( \varphi \), it suffices to (1) check if \( \varphi \) does not \huchanged{violate} the assertion, (2) do the symbolic execution \( \varphi \rightsquigarrow \varphi' \), and (3) show the safety of \( \varphi' \) by proving the sequent \( \varphi'(x) \vdash (\nu P)(x) \).

\subsection{Bounded Model-Checking}
\label{sec:bmc}
The bounded model-checking problem~\cite{Biere1999} asks, given a bound \( k \) and an initial state \( \varphi_0 \), whether a bad state is reachable from a state satisfying \( \varphi_0 \) by transitions of steps \( \le k \).
More formally, it checks if \( \varphi_0 \rightsquigarrow \varphi_1 \rightsquigarrow \dots \rightsquigarrow \varphi_k \) implies \( \varphi_i(x) \models \alpha(x) \) for every \( i = 0, \dots, k \).
If it is the case, we say \( \varphi_0 \) is \emph{safe within \( k \) steps}.
The bounded model-checking problem is important because it is decidable in certain settings, and software model-checkers often have subprocedures corresponding to bounded model-checking.

In terms of cyclic-proof search, bounded model-checking is to construct a valid partial proof consisting only of \textsc{(SE)}.
By iteratively applying \textsc{(SE)}, we obtain a (valid or invalid) partial proof in \autoref{fig:bounded-model-checking}.
Here \( \varphi_0 \rightsquigarrow \varphi_1 \rightsquigarrow \dots \rightsquigarrow \varphi_k \rightsquigarrow \varphi_{k+1} \).
This is a valid partial proof if \( \varphi_i(x) \models \alpha(x) \) for every \( i = 0, \dots, k \).
Therefore bounded model-checking with bound \( k \) is equivalent to the problem asking if the \( k \) consecutive applications of \textsc{(SE)} to the corresponding sequent is a valid partial proof.
\begin{figure}[t]
    \begin{equation*}
        \infer[\textsc{(SE)}]{
            \varphi_0(x) \vdash (\nu P)(x)
        }{
            \infer[\textsc{(SE)}]{
                \varphi_1(x) \vdash (\nu P)(x)
            }{
                \infer*{
                    \varphi_2(x) \vdash (\nu P)(x)
                }{
                    \infer[\textsc{(SE)}]{
                        \varphi_k(x) \vdash (\nu P)(x)
                    }{
                        \varphi_{k+1}(x) \vdash (\nu P)(x)
                        &&
                        \varphi_k(x) \models \alpha(x)
                    }
                }
                \hspace{-60pt}
                &&
                \varphi_1(x) \models \alpha(x)
            }
            &&
            \varphi_0(x) \models \alpha(x)
        }
    \end{equation*}
    \vspace{-20pt}
    \caption{%
        A partial proof corresponding to bounded model-checking.
        Here \( \varphi_0 \rightsquigarrow \varphi_1 \rightsquigarrow \dots \rightsquigarrow \varphi_k \rightsquigarrow \varphi_{k+1} \).
        The partial proof is valid if \( \varphi_i(x) \models \alpha(x) \) for every \( i = 0, \dots, k \).}
    \label{fig:bounded-model-checking}
\end{figure}
\begin{proposition}
    The \( k \) consecutive applications of \textsc{(SE)} to \( \varphi(x) \vdash (\nu X)(x) \) is a valid partial proof if and only if \( \varphi \) is safe within \( k \) steps.
    \qed
\end{proposition}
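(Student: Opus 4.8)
The plan is to reduce the statement to a single bookkeeping lemma about symbolic execution and then read off validity directly from the definition of a valid partial proof. First I would observe that, since symbolic execution is deterministic on formulas --- the successor $\varphi'(y) := \exists x.\,\varphi(x)\wedge\tau(x,y)$ is determined by $\varphi$ up to logical equivalence --- applying \textsc{(SE)} consecutively to $\varphi(x) \vdash (\nu X)(x)$ forces exactly the shape of \autoref{fig:bounded-model-checking}. Writing $\varphi_0 := \varphi$ and $\varphi_i \rightsquigarrow \varphi_{i+1}$, the resulting partial proof is the chain whose $\models$-sequents are precisely the assertion checks $\varphi_i(x) \models \alpha(x)$ and whose sole open sequent is $\varphi_k(x) \vdash (\nu X)(x)$; the latter is not a $\models$-sequent and is therefore irrelevant to validity. (The auxiliary premise $\varphi(x),\tau(x,y) \vdash \exists x.\,\varphi(x)\wedge\tau(x,y)$ built into the derivation of \textsc{(SE)} is a fixed-point-free logical validity and is already discharged, so it contributes no open goal.)

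The key step is a reachability lemma, proved by induction on $i$: the formula $\varphi_i(x)$ is logically equivalent to the statement that $x$ is reachable from some $\varphi$-state in exactly $i$ transitions, i.e.\ that there exist $x_0,\dots,x_i$ with $\varphi(x_0) \wedge \bigwedge_{j=1}^{i}\tau(x_{j-1},x_j) \wedge x_i = x$. The base case $i=0$ is immediate, and the inductive step is nothing more than the defining equation of symbolic execution, $\varphi_{i+1}(y) = \exists x.\,\varphi_i(x)\wedge\tau(x,y)$, combined with the induction hypothesis. Consequently $\varphi_i(x)\models\alpha(x)$ holds if and only if every state reachable from $\varphi$ in exactly $i$ steps satisfies the assertion.

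Combining the two observations settles both directions simultaneously. By definition a partial proof is valid exactly when all of its $\models$-sequents are valid; by the first step these are precisely the checks $\varphi_i(x)\models\alpha(x)$ produced by the consecutive applications, and by the reachability lemma their conjunction says exactly that every state reachable from $\varphi$ within the given step bound satisfies $\alpha$, which is the definition of $\varphi$ being safe within $k$ steps. Hence the partial proof is valid iff $\varphi$ is safe within $k$ steps, and the equivalence is really just an unfolding of definitions once the lemma is in hand.

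I would expect no genuinely hard step here: the argument is a one-line induction together with the definition of validity. The only point that demands care is the index bookkeeping --- verifying that the $\models$-premises discharged by the consecutive \textsc{(SE)} applications range over exactly the indices $i = 0,\dots,k$ appearing in the definition of ``safe within $k$ steps'' (as displayed in \autoref{fig:bounded-model-checking}), so that the two conjunctions coincide on the nose rather than up to an off-by-one in the number of applications.
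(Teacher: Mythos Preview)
Your proposal is correct and is essentially the paper's own argument made explicit: the paper gives no separate proof (the proposition carries an immediate \qed), relying on the preceding paragraph, which already observes that the partial proof in \autoref{fig:bounded-model-checking} is valid exactly when every $\varphi_i(x)\models\alpha(x)$ holds. Your reachability lemma is in fact more than you need, because the paper \emph{defines} ``$\varphi_0$ is safe within $k$ steps'' directly as the conjunction of these checks over the symbolic-execution sequence $\varphi_0\rightsquigarrow\cdots\rightsquigarrow\varphi_k$, not via the semantic reachability relation; once the $\models$-sequents of the partial proof are identified, the equivalence is literally by definition and no induction on $i$ is required.
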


\paragraph{Forward Criterion and States of Distance $k$}
Sometimes postprocessing the result of bounded model-checking gives a proof of unbounded safety.
We call the criterion in \cite{Sheeran2000} based on a notion of distance (called \emph{forward diameter} in \cite{Sheeran2000}) \emph{forward criterion}.

Let us fix the set \( \iota \) of initial states.
A reachable state \( x \) is of \emph{distance} \( k \) if there exists a sequence \( x_0, \dots, x_k \) witnessing the reachability of \( x \) from \( \iota \) (i.e.~\( \iota(x_0) \wedge \bigwedge_{i=1}^{k} \tau(x_{i-1}, x_i) \wedge (x_k = x) \)) and there is no shorter witness.
If \( \iota = \varphi_0 \rightsquigarrow \varphi_1 \rightsquigarrow \dots \rightsquigarrow \varphi_k \), then the set of states of distance \( k \) is represented by \( \varphi_k(x) \wedge \bigwedge_{i=0}^{k-1} \neg \varphi_i(x) \).
The \emph{forward criterion} says that a program is safe if the program is safe with bound \( k \) and the set of states of distance \( k+1 \) is empty.

The set of distance-\( k \) states naturally appears in cyclic proof search.
Recall that, in a cyclic proof system, a leaf sequent can be proved by putting a link to its ancestor \huchanged{labeled} by the same judgement.
For example, a proof of the open sequent \( \varphi_{k+1}(x) \vdash (\nu P)(x) \) in \autoref{fig:bounded-model-checking} is able to use the assumptions \( \varphi_i(x) \vdash (\nu P)(x) \) (\( 0 \le i < k+1 \)).
The set of distance-\( k \) states naturally arises when we try to exploit the assumptions to the full.

The idea is to split \( \varphi_{k+1}(x) \) into \( \varphi_i(x) \) (\( 0 \le i < k+1 \)) and the remainder, say \( \psi(x) \).
Assume a \tkchanged{formula} \( \psi \) such that \( \varphi_{k+1}(x) \models \psi(x) \vee \bigvee_{0 \le i < k+1} \varphi_i(x) \).
Let us write \( \Psi(x) \) for \( \psi(x) \vee \bigvee_{0 \le i < k+1} \varphi_i(x) \).
The open sequent in the partial proof in \autoref{fig:bounded-model-checking} can be expanded as
\begin{equation*}
    \infer[\textsc{(Cut)},]{
        \varphi_{k+1}(x) \vdash (\nu P)(x)
    }{
        \varphi_{k+1}(x) \models \Psi(x)
        &&
        \hspace{-10pt}
        \infer[\textsc{($\vee$-L)}\times (k+1)]{
            \Psi(x) \vdash (\nu P)(x)
        }{
            \psi(x) \vdash (\nu P)(x)
            &&
            \varphi_0(x) \vdash (\nu P)(x)
            &&
            \hspace{-10pt}
            \dots
            \hspace{-10pt}
            &&
            \varphi_k(x) \vdash (\nu P)(x)
        }
    }
\end{equation*}
which has \( k+1 \) open sequents.
All but one open sequent can be immediately proved by links to ancestors; the only remaining open sequent is \( \psi(x) \vdash (\nu P)(x) \).
Since \( \psi \) appears in a contra-variant position, a stronger \( \psi \) is more desirable.
The strongest \( \psi \) that satisfies the requirement is given by
\(
    \psi(x)
    \Leftrightarrow
    \Big( \varphi_{k+1}(x) \wedge \bigwedge_{0 \le i < k+1} \neg \varphi_i(x) \Big)
\).
This is exactly the set of distance-\( (k+1) \) states.
In particular, if the forward criterion holds, one can choose \( \psi(x) \) as \( \bot \).
In this case, the remaining open sequent is \( \bot \vdash (\nu P)(x) \), which is trivially provable independent of \( P \); hence we complete the proof of \( \varphi_0(x) \vdash (\nu P)(x) \).\footnote{
    Actually what we obtained is a pre-proof, i.e.~a proof-like structure with cycles, and we should check the global \huchanged{trace} condition to conclude that it is a proof.
    To see that the pre-proof satisfies the global trace condition, recall that \textsc{(SE)} is a derived rule using \textsc{($\nu$-R)} rule.
}

\subsection{Predicate Abstraction}
\label{sec:pa}

As shown in the previous section, \textsc{(Cut)} plays an important role in the construction of a cyclic proof.
Let us introduce the derived rule
\begin{equation*}
    \infer[\textsc{(SE+Cut)},]{
        \varphi(x) \vdash (\nu P)(x)
    }{
        \psi(x) \vdash (\nu P)(x)
        &&
        \exists y. \varphi(y) \wedge \tau(y,x) \models \psi(x)
        &&
        \varphi(x) \models \alpha(x)
    }
\end{equation*}
which can be proved by
\begin{equation*}
    \infer[\textsc{(SE)}.]{
        \varphi(x) \vdash (\nu P)(x)
    }{
        \infer[\textsc{(Cut)}]{
            \exists x. \varphi(x) \wedge \tau(x,y) \vdash (\nu P)(y)
        }{
            \exists x. \varphi(x) \wedge \tau(x,y) \models \psi(y)
            &&
            \psi(x) \vdash (\nu P)(x)
        }
        &&
        \varphi(x) \models \alpha(x)
    }
\end{equation*}
We call \( \psi \) in \textsc{(SE+Cut)} the \emph{cut formula}.

The choice of the cut formula is quite important.
If it is too strong, it is not an invariant; if it is too weak, the sequent \( \psi(x) \vdash (\nu P)(x) \) becomes invalid.

\emph{Predicate abstraction}~\cite{Graf1997,Ball2001} is a heuristic for the choice of the cut formula.
The idea is to further restrict the shape of open sequents: an open sequent must be of the form \( \vartheta(x) \vdash (\nu P)(x) \) where \( \vartheta \) comes from a finite set of predicates, say \( \Xi \).
Suppose that \( \Xi \) is closed under conjunction.
In order to avoid making the open sequent invalid, the best choice of the cut formula is the strongest formula in \( \Xi \).

This idea leads to the following principle.
\begin{principle}\label{principle:strongest-cut}
    Use \textsc{(SE+Cut)}.
    Choose the strongest formula \( \psi \in \Xi \) satisfying \( \exists x. \varphi(x) \wedge \tau(x,y) \models \psi(\huchanged{y}) \) as the cut formula.
\end{principle}

The proof-search procedure following Principles~\ref{principle:shape-of-goals} and \ref{principle:strongest-cut} terminates.
Since \( \Xi \) is finite,
the length of the path in a partial proof is bounded by the number of candidate formulas; if it exceeds, then a sequent appears twice in the path and thus putting a link completes the proof.

\section{Lazy Abstraction with Interpolants: \impact{}}
\label{sec:IMPACT}

This section explains another proof-search strategy that corresponds to the software model-checker \impact{} developed by McMillan~\cite{McMillan2006}.
For readers familiar with \impact{}, Table~\ref{table:impact-cyclic-proof-dictionary} gives a dictionary of terminologies of \impact{}~\cite{McMillan2006} and cyclic-proof search and \autoref{fig:impact-partial-proof} illustrates the correspondence between internal states of \impact{} (called \emph{program unwinding}) and partial proofs.

\begin{table}[t]
    \centering
    \begin{tabular}{l|ll}
        \impact{} & & Cyclic proof \\ \hline
        program unwinding & {}\quad{} & partial proof
        \\
        %safe, well-\huchanged{labeled} program unwinding & valid partial proof
        \qquad safe, well-\huchanged{labeled} --- & & \qquad valid ---
        \\
        \qquad safe, well-\huchanged{labeled}, complete --- \qquad {} & & \qquad cyclic proof
        \\
        covering relation & & collection of links
        \\
        \textsc{Expand} procedure & & application of \textsc{(SE+Cut)}
    \end{tabular}
    \vspace{5pt}
    \caption{An \impact{}-Cyclic-proof dictionary.}
    \vspace{-15pt}
    \label{table:impact-cyclic-proof-dictionary}
\end{table}

\begin{figure}[t]
    %\centering
    \begin{equation*}
        \makebox[-70pt][0pt]{}
        \infer{
            \iota(x) \vdash (\nu P)(x)
        }{
            \infer{
                \varphi_1(x) \vdash (\nu P)(x)
            }{
                \infer*{
                    \varphi_2(x) \vdash (\nu P)(x)
                }{
                    \makebox[70pt][0pt]{}
                    \infer{
                        \varphi_n(x) \vdash (\nu P)(x)
                        \makebox[70pt][0pt]{}
                    }{
                        \varphi_{n+1}(x) \vdash (\nu P)(x)
                        &&
                        % \begin{array}{rl}
                        %     \varphi_k(x) &\models \alpha(x) \\
                        %     \exists x. \varphi_k(x) \wedge \tau(x,y) & \models \varphi_{k+1}(x)
                        % \end{array}
                        \raisebox{1em}{\oalign{
                            \hfill $\varphi_n(x) \models \alpha(x)\;\;\!\!$
                            \crcr
                            $\exists x. \varphi_n(x) \wedge \tau(x,y) \models \varphi_{n+1}(y)$
                        }}
                    }
                }
                &&
                \hspace{-100pt}
                % \begin{array}{rl}
                %     \varphi_1(x) &\models \alpha(x) \\
                %     \exists x. \varphi_1(x) \wedge \tau(x,y) & \models \varphi_2(x)
                % \end{array}
                \raisebox{1em}{\oalign{
                    \hfill $\varphi_1(x) \models \alpha(x)~\,$
                    \crcr
                    $\exists x. \varphi_1(x) \wedge \tau(x,y) \models \varphi_2(y)$
                }}
            }
            &&
            \hspace{-10pt}
            % \begin{array}{rl}
            %     \iota(x) &\models \alpha(x) \\
            %     \exists x. \iota(x) \wedge \tau(x,y) & \models \varphi_1(x)
            % \end{array}
            \raisebox{1em}{\oalign{
                \hfill $\iota(x) \models \alpha(x)~\,$
                \crcr
                $\exists x. \iota(x) \wedge \tau(x,y) \models \varphi_1(y)$
            }}
        }
    \end{equation*}
    %\makebox[20pt][0pt]{}\includegraphics[scale=.45]{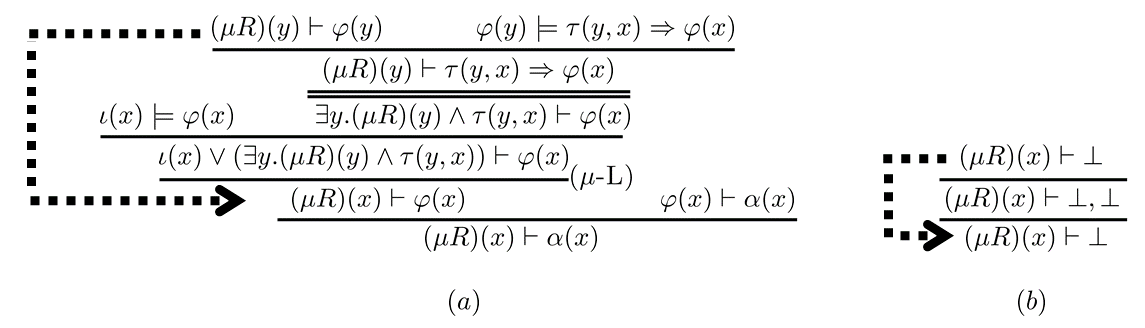}
    \vspace{-10pt}
    \caption{%
        A partial proof (above) and the corresponding internal state of \impact{} (below).
        All rules used in the partial proof are \textsc{(SE+Cut)}.}
    \label{fig:impact-partial-proof}
\end{figure}

A significant difference from the previously discussed proof-search strategies is that \impact{} does not hesitate to generate an invalid sequent.
The difference allows us to simplify the strategy of the cut-formula choice.
%The choice of the cut formula in \impact{} when applying \textsc{(SE+Cut)} rule is quite simple.
Recall that the only requirement for the cut formula \( \psi \) in \textsc{(SE+Cut)} rule is \( \exists x. \varphi(x) \wedge \tau(x,y) \vdash \psi(y) \), and no upper bound is given.
\impact{} tentatively chooses \( \top \) as the cut formula.
\begin{principle}\label{principle:optimistic-cut}
    Use \textsc{(SE+Cut)}.
    Tentatively choose \( \top \) as the cut formula.
\end{principle}

Of course, \( \top \) is usually too weak and the open sequent \( \top \vdash (\nu P)(x) \) is often invalid.
The invalidity of the sequent is revealed by applying \textsc{(SE+Cut)} again:
\begin{equation*}
    \infer%[\textsc{(SE+Cut)}]
    {
        \iota(x) \vdash (\nu P)(x)
    }{
        \infer%[\textsc{(SE+Cut)}]
        {
            \top \vdash (\nu P)(x)
        }{
            \top \vdash (\nu P)(x)
            &&
            \hspace{-12pt}
            \exists x. \top \wedge \tau(x,y) \models \top
            &&
            \hspace{-12pt}
            \top \models \alpha(x)
        }
        &&
        \hspace{-15pt}
        \exists x. \iota(x) \wedge \tau(x,y) \models \top
        &&
        \hspace{-12pt}
        \iota(x) \models \alpha(x)
    }
\end{equation*}
This partial proof is invalid as \( \top \models \alpha(x) \) does not hold (otherwise the verification problem is trivial).
\impact{} tries to fix an invalid partial proof by strengthening the cut formulas, following the next principle.\footnote{
    \tkchanged{Here we note some differences between \impact{} and the proof-search strategy following Principle~\ref{principle:refinement}.}
    See Remark~\ref{rem:impact-diff}.
}
\begin{principle}\label{principle:refinement}
    When the partial proof is found to be invalid,
    replace each cut formula \( \varphi_i(x) \) with \( \varphi_i(x) \wedge Q_i(x) \), where \( Q_i \) is a fresh predicate variable, and solve the constraints on \( Q_i \).
\end{principle}

If the current partial proof is that in \autoref{fig:impact-partial-proof}, cut-formulas \( \varphi_1, \dots, \varphi_n \) must be strengthened if \( \varphi_n(x) \nvDash \alpha(x) \).
For simplicity, we assume \( \varphi_i(x) \models \alpha(x) \) for \( i < n \).
We replace \( \varphi_i(x) \) in the partial proof with \( \varphi_i(x) \wedge Q_i(x) \) (\( 1 \le i \le n \)), where \( Q_1, \dots, Q_n \) are distinct predicate variables.
The side conditions \( \varphi_i(x) \models \alpha(x) \) and \( \exists x. \varphi_i(x) \wedge \tau(x,y) \models \varphi_{i+1}(x) \) are transformed into \( \varphi_i(x) \wedge Q_i(x) \models \alpha(x) \) and \( \exists x. \varphi_i(x) \wedge Q_i(x) \wedge \tau(x,y) \models \varphi_{i+1}(y) \wedge Q_{i+1}(y) \).
Slightly simplifying the constraints by using \( \exists x. \varphi_i(x) \wedge \tau(x,y) \models \varphi_{i+1}(y) \) (\( i = 0, \dots, n \)) and \( \varphi_i(x) \models \alpha(x) \) (\( i = 0, \dots, n-1 \)), we obtain the following constraint set:
\begin{align*}
    \iota(x_0) \wedge \tau(x_0,x_1) &~~\Rightarrow~~ Q_1(x_1), \\
    Q_i(x_i) \wedge \big(\varphi_i(x_i) \wedge \tau(x_i,x_{i+1})\big) &~~\Rightarrow~~ Q_{i+1}(x_{i+1}),
    &\qquad
    i = 1, \dots, n-1 \\
    Q_{n}(x_n) \wedge \varphi_n(x_n) &~~\Rightarrow~~ \alpha(x_n).
\end{align*}
If a solution \( (\psi_i)_{i = 1,\dots,n} \) for \( (Q_i)_i \) is given, we can fix the partial proof by replacing \( \varphi_i \) with \( \varphi_i \wedge \psi_i \).
A solution of the constraints is known as an \emph{interpolant}.
\begin{definition}
    Given a sequence \( \vec{\vartheta} = \vartheta_1 \vartheta_2 \dots \vartheta_{n+1} \) of formulas,
    its \emph{interpolant} is a sequence \( \vec{\psi} = \psi_0 \psi_1 \dots \psi_{n+1} \) of formulas that satisfies the following conditions:
    (1) \( \psi_0 = \top \) and \( \psi_{n+1} = \bot \),
    (2) \( \models \psi_{i-1} \wedge \vartheta_i \Rightarrow \psi_i \) for every \( 1 \le i \le n \), and
    (3) \( \mathrm{fv}(\psi_i) \subseteq \mathrm{fv}(\vartheta_1,\dots,\vartheta_i) \cap \mathrm{fv}(\vartheta_{i+1}, \dots, \vartheta_{n+1}) \) for every \( 1 \le i \le n \).
    \qed
\end{definition}
A solution of the above constraint is an interpolant of \( \vec{\vartheta} \) where \( \vartheta_1 := \iota(x_0) \wedge \tau(x_0, x_1) \), \( \vartheta_{i+1} := \varphi_i(x_i) \wedge \tau(x_i, x_{i+1}) \) for \( i = 1,\dots,n-1 \), and \( \vartheta_{n+1} = \varphi_n(x_n) \wedge \neg \alpha(x_n) \).

The termination condition of \impact{} is also closely related to cyclic proofs.
\impact{} terminates if \( \varphi_i(x) \models \varphi_j(x) \) for some \( i > j \).
Then replacing the partial proof above \( \varphi_i(x) \vdash (\nu P)(x) \) with
\begin{equation*}
    \infer[\textsc{(Cut)}]{
        \varphi_i(x) \vdash (\nu P)(x)
    }{
        \varphi_i(x) \models \varphi_j(x)
        &&
        \varphi_j(x) \vdash (\nu P)(x)
    }
\end{equation*}
and making a link from \( \varphi_j(x) \vdash (\nu P)(x) \) to the ancestor yields a cyclic proof.

In summary, \impact{} can be seen as a cyclic-proof search strategy following Principles~\ref{principle:shape-of-goals}, \ref{principle:optimistic-cut} and \ref{principle:refinement}.

\begin{remark}\label{rem:impact-diff}
    We discuss differences between \impact{} and the proof-search strategy following Principles~\ref{principle:shape-of-goals}, \ref{principle:optimistic-cut} and \ref{principle:refinement}.
    \tkchanged{First the proof-search strategy does not cover an important heuristic, known as \emph{forced covering}, which tries to refine predicates so that one can make a cycle from a certain node to another;
    this feature can be understood as an additional proof-search rule}.
    \tkchanged{Second \impact{} computes an interpolant of a sequence slightly different from \( \vec{\vartheta} \)}:
    the sequence \( \vec{\vartheta} \) contains the information of the current cut-formulas \( \varphi_i \).
    Theoretically, the difference does not affect the refinement process (modulo logical equivalence), since the interpolants are conjoined with the current cut-formulas.  In practice, however, \cite{Vizel2014} claims that the use of the current cut-formulas is one of the keys to achieve their \emph{incremental} interpolation-based model checking, which indeed shows an improvement over previous methods based on interpolation a la \impact{}.
    \tkchanged{Third the actual implementation of\/ \impact{} uses a special kind of interpolant, which follows Principle~\ref{principle:maximally-conservative} below as well}.\footnote{\tkchanged{Personal communication}.}
    \qed
\end{remark}

% Unlike the previous algorithms, \impact{} always uses \textsc{(SE+cut)} rule instead of \textsc{(SE)}.
% When it expands an open leaf \( \varphi(x) \vdash (\nu X)(x) \), it chooses \( \top \) as the cut-formula:
% \begin{equation*}
%     \infer[\textsc{(SE+cut)}]{
%         \varphi(x) \vdash (\nu X)(x)
%     }{
%         \top \vdash (\nu X)(x)
%         &&
%         \exists x. \varphi(x) \wedge \tau(x,y) \models \top
%         &&
%         \varphi(x) \models \alpha(x)
%     }
%     \quad.
% \end{equation*}
% The third premise checks if a current state does not immediately violates the requirement, and the second premise trivially holds.
% The first premise does not seem to be true; the cut-formula looks too weak.
%
%The trick of \impact{} is to \emph{refine} the cut-formula when the proof-search of \( \top \vdash (\nu X)(x) \) fails.

% \ifdraft
% \tkchanged{This example is removed when the draft switch is off.}
% \begin{example}
%     \tk{Is an example of completing a proof nedded?}
% \end{example}
% \fi

\begin{remark}
    One may notice that the combination of Principles~\ref{principle:shape-of-goals}, \ref{principle:strongest-cut} and \ref{principle:refinement} (instead of \ref{principle:shape-of-goals}, \ref{principle:optimistic-cut} and \ref{principle:refinement}) would also be possible.
    The resulting cyclic-proof search strategy corresponds to the lazy abstraction in the style of \textsc{Blast}~\cite{Henzinger2002,Henzinger2004}.
    %(The set \( \Xi \) should be enlarged in accordance with the refinement, but the details are not discussed here.)
    %
    \qed
\end{remark}

\section{Dual sequent and Dual \impact{}}
\label{sec:backward-symbolic-execution}

So far, we have discussed a relationship between cyclic-proof search for \( \iota(x) \vdash (\nu P)(x) \) and software model-checking algorithms.
As we \huchanged{emphasized} in Section~\ref{sec:idea}, the identification of the goal sequent \( \iota(x) \vdash (\nu P)(x) \) corresponding to software model-checking is an important contribution of this paper, and once the goal sequent is appropriately set, the contents of Sections~\ref{sec:base} and \ref{sec:IMPACT} are natural and expected, which one might find relatively straightforward.

This section briefly discusses proof-search for \( (\mu R)(x) \vdash \alpha(x) \) (where \( R \) is defined by \autoref{eq:reachable-states}) from the view point of software model-checking, and shows how the choice of the goal sequent affects the whole picture.

We derive the counterparts of \textsc{(SE)} and \textsc{(SE+Cut)}, called \textsc{(BackSE)} and \textsc{(BackSE+Cut)}.
We have
\begin{equation*}
    \infer[\textsc{($\mu$-L)}]{
        (\mu R)(x) \vdash \varphi(x)
    }{
        \iota(x) \vee (\exists y. (\mu R)(y) \wedge \tau(y,x)) \vdash \varphi(x)
    }
\end{equation*}
of which left-hand-side of \( \vdash \) in the premise is a complicated formula containing a least fixed-point.
Following the dual of Principle~\ref{principle:shape-of-goals}, moving logical connectives to the right-hand-side, we obtain the following derived rule:
\begin{equation*}
    \infer[\textsc{(\huchanged{BackSE})}.]{
        (\mu R)(x) \vdash \psi(x)
    }{
        (\mu R)(x) \vdash \forall y. \tau(x,y) \Rightarrow \psi(y)
        &&
        \iota(x) \models \psi(x)
    }
\end{equation*}
The formula \( \forall y. \tau(x,y) \Rightarrow \psi(y) \) has a clear computational interpretation: it is the weakest precondition of \( \psi \) with respect to \( \tau \).
This is the dual of the strongest \huchanged{postcondition}\sout{,} which coincides with the symbolic execution.

The duality becomes clearer in \textsc{(BackSE+Cut)}
\begin{equation*}
    \infer[\textsc{(BackSE+Cut)}]{
        (\mu R)(x) \vdash \psi(x)
    }{
        (\mu R)(x) \vdash \varphi(x)
        &&
        \varphi(x) \models \forall y. \tau(x,y) \Rightarrow \psi(y)
        &&
        \iota(x) \models \psi(x)
    }
\end{equation*}
which can be proved by
\begin{equation*}
    \infer[\textsc{(BackSE)}.]{
        (\mu R)(x) \vdash \psi(x)
    }{
        \infer[\textsc{(Cut)}]{
            \huchanged{(\mu R)(x) \vdash \forall y. \tau(x,y) \Rightarrow \psi(y)}
        }{
            (\mu R)(x) \vdash \varphi(x)
            &&
            \varphi(x) \models \forall y. \tau(x,y) \Rightarrow \psi(y)
        }
        &&
        \iota(x) \models \psi(x)
    }
\end{equation*}
Since \( \varphi(x) \models \forall y. \tau(x,y) \Rightarrow \psi(y) \) is equivalent to \( \exists x. \varphi(x) \wedge \tau(x,y)\models \psi(y) \), \textsc{(BackSE+Cut)} computes an \huchanged{over-approximation} of the \huchanged{symbolic} execution as in \textsc{(SE+Cut)}.
The only difference is the direction; in a partial proof for \( (\mu R)(x) \vdash \psi(x) \), an execution starts from a leaf and ends at the root, which should be understood as an assertion.

So model-checking algorithms corresponding to proof-search for \( (\mu R)(x) \vdash \psi(x) \) can be understood as procedures checking the backward (un)reachability.
Here is a twist: the natural logical expression \( (\mu R)(x) \vdash \psi(x) \) of the model-checking problem corresponds to the execution of programs in the unnatural direction.
We think that this twist might be a reason why the connection between cyclic proofs and software model-checking \sout{has}\huchanged{had} not yet \huchanged{been} formally revealed despite that it is natural and expected.

% We are interested in whether this rule is useful in practice, in particular, for model-checking of programs with first-order functions, which does not have the dual form (cf.~\autoref{sec:conc}).

\section{Maximal Conservativity and Property-Directed Reachability}
\label{sec:pdr}

Sections~\ref{sec:IMPACT} reviewed McMillan's approach based on interpolants~\cite{McMillan2006}, as well as its variants, from the view point of cyclic proof search.
His algorithm tentatively uses \( \top \) as the cut-formula at the beginning and improves it using interpolants.
The efficiency of the algorithm heavily relies on the quality of interpolants.

This section introduces a heuristic, or a criterion, that determines what is a ``better'' interpolation.
The criterion is \emph{conservativity}: it prefers an interpolation that introduces smaller changes to the current partial proof.
We study a variant of McMillan's algorithm that uses a \emph{maximally conservative} interpolation instead of arbitrary ones and discuss its connection to the approach to software model-checking called \emph{property directed reachability} (or \emph{PDR})~\cite{Hoder2012,Cimatti2012}.

% For simplicity, this section focuses on a safety verification problem for while-programs with a single control state.
% This corresponds to a satisfiability problem of linear CHCs (see e.g., \cite{Bjorner2015a} for the definition of CHCs) with a single predicate variable, namely
% \begin{equation*}
%     \iota(x) \Rightarrow P(x),
%     \qquad
%     P(x) \wedge \tau(x,y) \Rightarrow P(y),
%     \qquad
%     P(x) \Rightarrow \alpha(x).
% \end{equation*}
% We assume that \( \iota(x) \models \alpha(x) \) (otherwise this is obviously unsatisfiable).

\subsection{Maximal conservativity}
We introduce the key notion, \emph{maximal conservativity}, to the heuristic studied in this section.

Let us first recall McMillan's algorithm~\cite{McMillan2006}, regarded as a goal-oriented proof search.
The refinement process takes an invalid partial proof
\begin{equation*}
    \makebox[-70pt][0pt]{}
    \infer{
        \iota(x) \vdash (\nu P)(x)
    }{
        \infer{
            \varphi_1(x) \vdash (\nu P)(x)
        }{
            \infer*{
                \varphi_2(x) \vdash (\nu P)(x)
            }{
                \makebox[70pt][0pt]{}
                \infer{
                    \varphi_n(x) \vdash (\nu P)(x)
                    \makebox[70pt][0pt]{}
                }{
                    \top \vdash (\nu P)(x)
                    &&
                    % \begin{array}{rl}
                    %     \varphi_k(x) &\models \alpha(x) \\
                    %     \exists x. \varphi_k(x) ｚ\wedge \tau(x,y) & \models \varphi_{k+1}(x)
                    % \end{array}
                    \raisebox{1em}{\oalign{
                        \hfill $\varphi_n(x) \models \alpha(x)$
                        \crcr
                        $\exists x. \varphi_n(x) \wedge \tau(x,y) \models \top\;\;\;\;\;\!$
                    }}
                }
            }
            &&
            \hspace{-90pt}
            % \begin{array}{rl}
            %     \varphi_1(x) &\models \alpha(x) \\
            %     \exists x. \varphi_1(x) \wedge \tau(x,y) & \models \varphi_2(x)
            % \end{array}
            \raisebox{1em}{\oalign{
                \hfill $\varphi_1(x) \models \alpha(x)~\;\!$
                \crcr
                $\exists x. \varphi_1(x) \wedge \tau(x,y) \models \varphi_2(y)$
            }}
        }
        &&
        \hspace{-10pt}
        % \begin{array}{rl}
        %     \iota(x) &\models \alpha(x) \\
        %     \exists x. \iota(x) \wedge \tau(x,y) & \models \varphi_1(x)
        % \end{array}
        \raisebox{1em}{\oalign{
            \hfill $\iota(x) \models \alpha(x)~\;\!$
            \crcr
            $\exists x. \iota(x) \wedge \tau(x,y) \models \varphi_1(y)$
        }}
    }
\end{equation*}
% \begin{equation*}
%     \dfrac{
%         \Pi
%     }{
%         \iota(x) \vdash (\nu P)(x)
%     }
% \end{equation*}
in which all judgements except for \( \varphi_n(x) \models \alpha(x) \) are valid.
It returns a valid proof of \( \iota(x) \vdash (\nu P)(x) \) obtained by replacing each \( \varphi_i \) with a stronger predicate \( \varphi_i' \).

We represent the above partial proof as a sequence \( \varphi_0,\varphi_1,\dots,\varphi_n \) of formulas (where \( \varphi_0 = \iota \)).
Then an input to the refinement process can be represented as a sequence \( \varphi_0,\dots,\varphi_n \) of formulas that satisfies
\begin{equation}
    \varphi_0 = \iota
    \qquad\mbox{and}\qquad
    \varphi_i(x) \models \alpha(x)
    \mbox{ and }
    \varphi_i(x) \wedge \tau(x,y) \models \varphi_{i+1}(y)
    \mbox{ for } i = 0,\dots,n-1
    \label{eq:pdr:input-condition}
\end{equation}
(recall that all judgements in the input partial proof are valid except for \( \varphi_n(x) \models \alpha(x) \)).
The refinement process tries to construct a sequence \( \varphi'_0,\dots,\varphi'_n \) such that
\begin{equation}
    \begin{cases}
        \varphi'_0 = \iota \\
        \varphi'_i(x) \models \varphi_i(x) & \quad\mbox{for every \( i = 0,\dots,n \)} \\
        \varphi'_i(x) \wedge \tau(x,y) \models \varphi'_{i+1}(y) & \quad\mbox{for every \( i = 0,\dots,n-1 \)} \\
        \varphi'_n(x) \models \alpha(x).
    \end{cases}
    \label{eq:pdr:interpolation-constraints}
\end{equation}
We call a sequence that satisfies \autoref{eq:pdr:interpolation-constraints} a \emph{refinement} of \( \varphi_0,\dots,\varphi_n \).

In McMillan's approach~\cite{McMillan2006}, we let \( \varphi_i'(x) \equiv \varphi_i(x) \wedge Q_i(x) \) (\( i = 0,\dots,n \)), where \( Q_i \) is a fresh predicate variable, and solve the constraints
\begin{equation}
    \{~~ (\varphi_i(x) \wedge Q_i(x)) \wedge \tau(x,y) \Rightarrow Q_{i+1}(y) \mid i = 0,\dots,n-1 ~~\} \cup \{~ Q_0 = \top,~ \varphi_n(x) \wedge Q_n(x) \Rightarrow \alpha(x) ~\}
    \label{eq:pdr:constraints-for-refinement}
\end{equation}
using an interpolating theorem prover, finding a suitable assignment for \( (Q_i)_{0 \le i \le n} \).
The constraint solving of (\ref{eq:pdr:constraints-for-refinement}) is a kind of interpolation problem, known as \emph{sequential interpolation}~\cite{Henzinger2004,McMillan2006}, we call a solution \( (Q_i)_{0 \le i \le n} \) an \emph{interpolant}.
Observe that this refinement process may change any formula in the current partial proof.

The heuristic studied in this section tries to minimize the changes.
That means, it prefers an interpolant \( (Q_i)_{0 \le i \le n} \) of (\ref{eq:pdr:constraints-for-refinement}) such that \( Q_i = \top \) for more indices \( i \).
It is not difficult to see that, if \( Q_0,\dots,Q_n \) is an interpolant and \( Q_k = \top \), then \( \top,\dots,\top,Q_{k+1},\dots,Q_n \) is also an interpolant.
Hence the heuristic chooses an interpolant \( \top,\dots,\top,Q_k,Q_{k+1},\dots,Q_n \) such that no interpolant of the form \( \top,\dots,\top,\top,Q'_{k+1},\dots,Q'_n \) exists, and we call such an interpolant a \emph{maximally conservative interpolant}.
The corresponding \emph{maximally conservative refinement} \( \varphi_0',\dots,\varphi'_n \), which is defined by \( \varphi'_i = \varphi_i \wedge Q_i \) (\( 0 \le i \le n \)), satisfies \( \varphi'_i = \varphi_i \) as much as possible.
\begin{definition}[Maximally conservative refinement]
    Let \( \varphi_0,\dots,\varphi_n \) be a sequential representation of a partial proof that satisfies (\ref{eq:pdr:input-condition}).
    Its \emph{refinement} is a valid partial proof represented as \( \varphi'_0,\dots,\varphi'_n \) that satisfies the constraints in (\ref{eq:pdr:interpolation-constraints}).
    It is \emph{\( k \)-conservative} if \( \varphi'_j = \varphi_j \) for every \( j \le k \).
    It is \emph{maximally conservative} if it is \( k \)-conservative and no \( (k+1) \)-conservative refinement exists.
    \qed
\end{definition}
% \tk{Should we give a pointer to the conditions for refinement?  parhaps in the section of impact}

% \begin{example}
%     \tk{
%         Maximally conservative interpolants.
%         Non-(maximally conservative) interpolants.
%     }
% \end{example}

The heuristic based on maximal conservativity introduces the following principle.
\begin{principle}\label{principle:maximally-conservative}
    Use a maximally-conservative-interpolation theorem prover as the backend solver.\footnote{
        The actual implementation of \impact{} follows this principle, as we mentioned in Remark~\ref{rem:impact-diff}.
    }
\end{principle}

We define \emph{lazy abstraction with maximally conservative interpolants}, or \MCImpact{}, as a variant of McMillan's algorithm in which the interpolating theorem prover always chooses a maximally conservative one.
So \textsc{Impact/mc} is an algorithm following Principles~\ref{principle:shape-of-goals}, \ref{principle:optimistic-cut}, \ref{principle:refinement} and \ref{principle:maximally-conservative}.
Since Principle~\ref{principle:maximally-conservative} does not specify an algorithm that computes a maximally conservative interpolant, the behavior of \textsc{Impact/mc} depends on the backend maximally-conservative-interpolation theorem prover.

The rest of this section develops algorithms for maximally conservative interpolation and compares \textsc{Impact/mc} with other software model-checking algorithms.
For technical convenience, we shall focus on maximally conservative refinement instead of interpolation in the sequel.

\begin{remark}
    The idea of maximally conservative refinement or interpolant can be found in the literature, e.g.~in \cite{Cimatti2012} and \cite{Vizel2014}.
    We shall compare their work with this paper in Remark~\ref{rem:originality-of-maximal-conservativity}, after establishing a connection between maximal conservativity and PDR.
    \qed
\end{remark}

\subsection{A Na\"ive algorithm for maximally conservative refinement}
A maximally conservative refinement is computable if binary interpolants (and thus sequential interpolants) are computable.
Here we give a na\"ive algorithm in order to prove the computability; more sophisticated algorithms shall be introduced in the following subsections.

Algorithm~\ref{alg:simple-mcr} is a na\"ive algorithm that constructs a maximally conservative refinement.
The na\"ive algorithm tries to construct a \( k \)-conservative refinement for \( k = n, n-1, n-2,\dots,0 \) until it finds a refinement.
A \( k \)-conservative refinement can be computed using a sequential interpolation algorithm: letting \( \varphi'_i(x) = \varphi_i(x) \wedge Q_i(x) \), where \( Q_i \) is a fresh predicate variable, it suffices to solve the constraints (\ref{eq:pdr:constraints-for-refinement}) with \( \top \Rightarrow Q_i(x) \) for every \( i \le k \).

\begin{algorithm}[t]
    \caption{~~Na\"ive algorithm to compute a maximally conservative refinement}\label{alg:simple-mcr}
    \begin{algorithmic}[1]
    \Require{Predicates \( \iota(x), \tau(x,y), \alpha(x) \) defining linear CHCs}
    \Require{A partial proof $ \varphi_0,\dots,\varphi_n $ satisfying (\ref{eq:pdr:input-condition})}
    \Ensure{A maximally conservative refinement $ \varphi_0',\dots,\varphi_n'$}
    \Comment{\textsc{Na\"iveMCR} may fail}
    \Function{Na\"iveMCR}{$\iota,\tau,\alpha;~~ \varphi_0,\dots,\varphi_n$} %\Comment{The g.c.d. of a and b}
    \State $ \mathcal{C} := \{~~ (\varphi_i(x) \wedge Q_{i}(x)) \wedge \tau(x,y) \Rightarrow Q_{i+1}(y) ~\mid i = 0,\dots,n-1 \,\} \cup \{~ \varphi_n(x) \wedge Q_n(x) \Rightarrow \alpha(x) ~\} $
    \State $ j := n $
    \While{$ j \geq 0 $}
        \If {$\mathcal{C} \cup \{ \top \Rightarrow Q_i(x) \mid i = 0,\dots,j \}$ has a solution}
            \State \textbf{let} $Q_0,\dots,Q_n$ \textbf{be} a solution
            \State \textbf{return} $(\varphi_i \wedge Q_i)_{i=0,\dots,n}$
        \EndIf
        \State $ j := j - 1 $
    \EndWhile
    \State $\mathbf{fail}$
    \EndFunction
    \end{algorithmic}
\end{algorithm}

\subsection{Inductive characterization of maximally conservative refinement}
We show that a subproof of a maximally conservative refinement must be maximally conservative as well in a certain sense, and this property gives us an inductive characterization of maximally conservative refinements.
After giving an inductive characterization of maximally conservative refinements, we develop an algorithm inspired by the characterization.
We say that \( \varphi'_0,\dots,\varphi'_n \) is a \emph{maximally conservative refinement of \( \varphi_0,\dots,\varphi_n \) satisfying \( \psi(x) \)} if it is a maximally conservative refinement with respect to the constraint system\footnote{The class of constraint systems of this form is called \emph{linear CHCs} (see e.g., \cite{Bjorner2015a} for the definition).} \( S' = \{ \iota(x) \Rightarrow P(x), P(x) \wedge \tau(x,y) \Rightarrow P(y), P(x) \Rightarrow \psi(x) \} \), in which the assertion \( \alpha \) is replaced with \( \psi \).

\begin{lemma}\label{lem:pdr:subrefine}
    Let \( (\varphi_i)_{0\le i \le n} \) be a partial proof satisfying (\ref{eq:pdr:input-condition}).
    For every partial proof \( (\psi_0,\dots,\psi_{n-1}) \), the following conditions are equivalent:
    \begin{enumerate}
        \renewcommand{\labelenumi}{(\roman{enumi})}
        \item There exists \( \psi_n \) such that \( (\psi_0,\dots,\psi_{n-1},\psi_n) \) is a refinement of \( (\varphi_0,\dots,\varphi_n) \) satisfying \( \alpha \).
        \item \( (\psi_0,\dots,\psi_{n-1}) \) is a refinement of \( (\varphi_0,\dots,\varphi_{n-1}) \) satisfying \( \neg \exists y. \tau(x,y) \wedge \neg \alpha(y) \).
    \end{enumerate}
\end{lemma}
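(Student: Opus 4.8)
The plan is to prove the two implications separately after unfolding the definition of ``refinement satisfying \( \psi \)''. The crucial preliminary observation is that the formula \( \neg\exists y.\,\tau(x,y)\wedge\neg\alpha(y) \) appearing in (ii) is logically equivalent to \( \forall y.\,\tau(x,y)\Rightarrow\alpha(y) \), i.e.\ the weakest precondition of \( \alpha \) under one transition step. With this reading, condition (ii) says precisely that \( (\psi_0,\dots,\psi_{n-1}) \) is conservative over \( (\varphi_0,\dots,\varphi_{n-1}) \), closed under the step entailments \( \psi_i(x)\wedge\tau(x,y)\models\psi_{i+1}(y) \) for \( i\le n-2 \), and that its last component satisfies \( \psi_{n-1}(x)\models\forall y.\,\tau(x,y)\Rightarrow\alpha(y) \). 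Thus each direction amounts to trading the final assertion entailment \( \psi_n\models\alpha \) against this weakest-precondition entailment on \( \psi_{n-1} \).

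For \( (i)\Rightarrow(ii) \) I would simply forget \( \psi_n \). All the constraints of (ii) involving only \( \psi_0,\dots,\psi_{n-1} \)---namely \( \psi_0=\iota \), the conservativity entailments \( \psi_i\models\varphi_i \) for \( i\le n-1 \), and the step entailments for \( i\le n-2 \)---are literally a subset of those asserted by (i). The only genuinely new requirement is \( \psi_{n-1}(x)\models\forall y.\,\tau(x,y)\Rightarrow\alpha(y) \), and this follows by composing the two constraints \( \psi_{n-1}(x)\wedge\tau(x,y)\models\psi_n(y) \) and \( \psi_n(y)\models\alpha(y) \) supplied by (i).

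For \( (ii)\Rightarrow(i) \) the key step is to exhibit a witness \( \psi_n \), and the natural candidate is the symbolic execution (strongest postcondition) of \( \psi_{n-1} \), namely \( \psi_n(y):=\exists x.\,\psi_{n-1}(x)\wedge\tau(x,y) \). I would then verify the three new obligations of (i): the step closure \( \psi_{n-1}(x)\wedge\tau(x,y)\models\psi_n(y) \) holds by construction, and the final assertion \( \psi_n(y)\models\alpha(y) \) is exactly the weakest-precondition hypothesis on \( \psi_{n-1} \) from (ii).

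The remaining obligation, conservativity \( \psi_n\models\varphi_n \), is the step I expect to be the only real obstacle, because it cannot be derived from the \( \psi \)-constraints alone---it requires the \emph{input} hypothesis (\ref{eq:pdr:input-condition}) on the \( \varphi \)-sequence. Here I would chain \( \psi_{n-1}\models\varphi_{n-1} \) (from (ii)) with the input step constraint \( \varphi_{n-1}(x)\wedge\tau(x,y)\models\varphi_n(y) \) to obtain \( \psi_{n-1}(x)\wedge\tau(x,y)\models\varphi_n(y) \), and then existentially quantify \( x \) (legitimate since \( \varphi_n(y) \) is \( x \)-free) to conclude \( \psi_n(y)\models\varphi_n(y) \). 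This closes both directions and, read correctly, exhibits (ii) as the ``peeled'' version of (i) in which the last predicate is dropped and the assertion is replaced by its one-step weakest precondition---precisely the inductive structure the section is after.
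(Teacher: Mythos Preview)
Your argument is correct and matches the paper's proof almost exactly; the only difference is in the witness for the direction \((ii)\Rightarrow(i)\). You take \(\psi_n(y):=\exists x.\,\psi_{n-1}(x)\wedge\tau(x,y)\), the strongest postcondition, whereas the paper takes \(\psi_n\) to be an interpolant of \(\psi_{n-1}(x)\wedge\tau(x,y)\) and \(\alpha(y)\wedge\varphi_n(y)\). Your choice is the minimal such interpolant and makes the verification slightly more direct (no appeal to an interpolation theorem), while the paper's phrasing foreshadows the algorithmic use of \(\mathit{interpolation}(\cdot,\cdot)\) in \IndMCR{} a few lines later; for the lemma itself there is no substantive difference.
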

\begin{proof}
    Assume (1).
    Then \( \psi_{n-1}(x) \wedge \tau(x,y) \models \psi_n(x) \) and \( \psi_n(x) \models \alpha(x) \).  Hence \( \psi_{n-1}(x) \wedge \tau(x,y) \models \alpha(y) \), which implies \( \psi_{n-1}(x) \models \neg \exists y. \tau(x,y) \wedge \neg \alpha(y) \).

    Assume (2).
    Then \( \psi_{n-1}(x) \models \neg \exists y. \tau(x,y) \wedge \neg \alpha(y) \), which implies \( \psi_{n-1}(x) \wedge \tau(x,y) \models \alpha(y) \).
    Since \( \psi_{n-1}(x) \models \varphi_{n-1}(x) \) and \( \varphi_{n-1}(x) \wedge \tau(x,y) \models \varphi_{n}(y) \) by the assumption (\ref{eq:pdr:input-condition}), \( \psi_{n-1}(x) \wedge \tau(x,y) \models \varphi_n(y) \) as well.
    Letting \( \psi_n(x) \) be an interpolant of \( \psi_{n-1}(x) \wedge \tau(x,y) \) and \( \alpha(y) \wedge \varphi_n(y) \), the partial proof \( (\psi_0,\dots,\psi_{n-1},\psi_n) \) is a refinement of \( (\varphi_0,\dots,\varphi_n) \) satisfying \( \alpha \).
\end{proof}
\begin{theorem}\label{thm:mcr-ind}
    Let \( (\varphi_i)_{0\le i \le n} \) be a partial proof satisfying (\ref{eq:pdr:input-condition}).
    Suppose that \( (\varphi'_i)_{0 \le i \le n} \) is a refinement of \( (\varphi_i)_{0 \le i \le n} \) satisfying \( \alpha \).
    It is maximally conservative if and only if the following conditions hold:
    \begin{enumerate}
        \renewcommand{\labelenumi}{(\alph{enumi})}
        \item If \( \varphi_n(x) \models \alpha(x) \), then \( (\varphi'_0,\dots,\varphi'_n) = (\varphi_0,\dots,\varphi_n) \).
        \item If \( \varphi_n(x) \not\models \alpha(x) \), then \( (\varphi'_0,\dots,\varphi'_{n-1}) \) is a maximally conservative refinement of \( (\varphi_0,\dots,\varphi_{n-1}) \) satisfying \( \neg \exists y. \tau(x,y) \wedge \neg\alpha(y) \).
    \end{enumerate}
\end{theorem}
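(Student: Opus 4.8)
The plan is to prove both directions of the equivalence by splitting on whether $\varphi_n(x)\models\alpha(x)$, exactly as conditions (a) and (b) do. Since these two antecedents are mutually exclusive and exhaustive, it suffices to show, in each case, that maximal conservativity of $(\varphi'_i)_{0\le i\le n}$ is equivalent to the corresponding consequent. I would first record the trivial but useful fact that every refinement is $0$-conservative, since $\varphi'_0=\iota=\varphi_0$ by (\ref{eq:pdr:input-condition}); hence the conservativity level of any refinement is well defined in $\{0,\dots,n\}$. In case (a), where $\varphi_n(x)\models\alpha(x)$, the input sequence $(\varphi_0,\dots,\varphi_n)$ is itself a refinement satisfying $\alpha$: every clause of (\ref{eq:pdr:interpolation-constraints}) is immediate from (\ref{eq:pdr:input-condition}) except $\varphi_n\models\alpha$, which is the case hypothesis. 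This refinement is $n$-conservative, the maximal possible level, and an $n$-conservative refinement must equal $(\varphi_0,\dots,\varphi_n)$ componentwise. Thus any maximally conservative refinement is $n$-conservative, hence equal to $(\varphi_0,\dots,\varphi_n)$, which gives the desired equivalence.

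For case (b), where $\varphi_n(x)\not\models\alpha(x)$, set $\beta(x):=\neg\exists y.\,\tau(x,y)\wedge\neg\alpha(y)$, equivalently $\forall y.\,\tau(x,y)\Rightarrow\alpha(y)$. I would first verify that the truncated input $(\varphi_0,\dots,\varphi_{n-1})$ satisfies the analogue of (\ref{eq:pdr:input-condition}) for assertion $\beta$: for $i\le n-2$ we have $\varphi_i(x)\wedge\tau(x,y)\models\varphi_{i+1}(y)\models\alpha(y)$ by (\ref{eq:pdr:input-condition}), whence $\varphi_i\models\beta$, so the notion ``maximally conservative refinement of $(\varphi_0,\dots,\varphi_{n-1})$ satisfying $\beta$'' is well posed. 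The engine of the argument is Lemma~\ref{lem:pdr:subrefine}, which sets up a correspondence between refinements of $(\varphi_0,\dots,\varphi_n)$ satisfying $\alpha$, restricted to their first $n$ components, and refinements of $(\varphi_0,\dots,\varphi_{n-1})$ satisfying $\beta$: truncation sends the former to the latter by the implication (i)$\Rightarrow$(ii), and every instance of the latter extends to one of the former by (ii)$\Rightarrow$(i).

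The key step is that this correspondence preserves conservativity level below $n$. For $k\le n-1$, being $k$-conservative is the condition $\varphi'_j=\varphi_j$ for all $j\le k$, which involves only indices $\le n-1$ and is therefore unaffected by truncation or by the choice of the extending last component; hence for every $k\le n-1$ a $k$-conservative $\alpha$-refinement exists iff a $k$-conservative $\beta$-refinement exists. Next I would note that in case (b) no $\alpha$-refinement can be $n$-conservative, since an $n$-conservative one would have $\varphi'_n=\varphi_n$ and thus force $\varphi_n\models\alpha$, contradicting the case hypothesis; so the maximal achievable level on the $\alpha$-side is $\le n-1$, while on the $\beta$-side it is $\le n-1$ trivially. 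Combining, the two maximal levels coincide. Finally, since $(\varphi'_i)_{0\le i\le n}$ has conservativity level $\le n-1$, its level equals that of its truncation $(\varphi'_0,\dots,\varphi'_{n-1})$; therefore $(\varphi'_i)$ attains the maximal $\alpha$-level iff its truncation attains the maximal $\beta$-level, which is exactly the equivalence asserted by condition (b). The degenerate subcase $n=0$ is vacuous: there $\varphi_0=\iota\not\models\alpha$ admits no refinement satisfying $\alpha$, so the standing hypothesis on $(\varphi'_i)$ is unsatisfiable.

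I expect the main obstacle to be the bookkeeping around conservativity levels under truncation and extension, and in particular the clean transfer of the quantifier ``no $(k+1)$-conservative refinement exists'' across the correspondence. The logical content is entirely supplied by Lemma~\ref{lem:pdr:subrefine}; the remaining care is to exploit the restriction $k\le n-1$ (guaranteed precisely by $\varphi_n\not\models\alpha$) so that the maximal levels on the two sides match and a refinement's level is unchanged by dropping its last component.
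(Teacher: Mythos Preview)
Your proposal is correct and follows essentially the same route as the paper: case (a) is dispatched by observing that the input sequence is itself an $n$-conservative refinement, and case (b) reduces to the equivalence, for each $k<n$, of the existence of a $k$-conservative $\alpha$-refinement of $(\varphi_i)_{i\le n}$ and a $k$-conservative $\beta$-refinement of $(\varphi_i)_{i\le n-1}$, which is exactly what Lemma~\ref{lem:pdr:subrefine} provides. Your write-up is more explicit than the paper's (e.g., verifying the truncated input condition and handling the $n=0$ subcase), but the underlying argument is the same.
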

\begin{proof}
    The first case is trivial.  Assume that \( \varphi_n(x) \not\models \alpha(x) \).  Then there is no \( n \)-conservative refinement.
    So it suffices to show that the following conditions are equivalent for every \( k < n \):
    \begin{itemize}
        \item \( (\varphi_i)_{0 \le i \le n} \) has a \( k \)-conservative refinement satisfying \( \alpha \).
        \item \( (\varphi_i)_{0 \le i \le n-1} \) has a \( k \)-conservative refinement satisfying \( \neg \exists y. \tau(x,y) \wedge \neg \alpha(y) \).
    \end{itemize}
    This equivalence is a consequence of Lemma~\ref{lem:pdr:subrefine}.
\end{proof}

\IndMCR{} (Algorithm~\ref{alg:rec-mcr}) is an algorithm based on the characterization in Theorem~\ref{thm:mcr-ind}.
The function \( \mathit{interpolation}(\psi(\vec{x},\vec{y}), \vartheta(\vec{y},\vec{z})) \) nondeterministically returns a formula from \( \{ \gamma(\vec{y}) \mid \psi(\vec{x},\vec{y}) \models \gamma \mbox{ and } \gamma(\vec{y}) \models \vartheta(\vec{y},\vec{z}) \} \), provided that \( \psi(\vec{x},\vec{y}) \models \vartheta(\vec{y},\vec{z}) \).
Line 3 corresponds to the case (a), whereas lines 4, 5 and 7 correspond to the case (b).
Hence the correctness of \IndMCR{} is a direct consequence of Theorem~\ref{thm:mcr-ind}.
The termination can be shown by induction on \( n \), since it has no loop and a recursive call takes the subproof \( (\varphi_0,\dots,\varphi_{n-1}) \) as its argument; \tkchanged{note that \( \varphi_0 = \iota \) and thus, if \( n=0 \), exactly one of the conditions in lines 2 and 3 is true}.
In summary, we have the following theorem.
\begin{theorem}
    \IndMCR{} always terminates.
    The set of possible return values of \IndMCR{} coincides with the set of maximally conservative refinements.
    \qed
\end{theorem}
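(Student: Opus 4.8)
The plan is to establish the two assertions—termination and the characterization of possible return values—separately, both by induction on $n$, with the characterization leaning entirely on the inductive description already proved in Theorem~\ref{thm:mcr-ind}. Since \IndMCR{} is nondeterministic (through the $\mathit{interpolation}$ oracle) and may fail, the second assertion should be read as two inclusions: every non-failing run returns a maximally conservative refinement (soundness), and every maximally conservative refinement is the output of some run (completeness); in particular, when no maximally conservative refinement exists, every run must fail.

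For termination I would induct on $n$. The key structural fact is that \IndMCR{} contains no loop, so the only possible source of divergence is the recursive call, which is issued on the strictly shorter input $(\varphi_0,\dots,\varphi_{n-1})$. For the base case $n=0$ I would use the observation recorded after the algorithm: because $\varphi_0=\iota$, exactly one of the guards in lines~2 and~3 holds, so the function returns (or fails) immediately without recursing. The inductive step is then routine: each branch either returns directly via line~3, or performs a single recursive call on a shorter input followed by loop-free post-processing (one $\mathit{interpolation}$ call and finitely many entailment checks), so termination of the sub-call gives termination of the whole.

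For the characterization I would again induct on $n$, proving that the set of possible outputs of \IndMCR{}$(\iota,\tau,\alpha;\varphi_0,\dots,\varphi_n)$ equals the set of maximally conservative refinements of $(\varphi_0,\dots,\varphi_n)$ satisfying $\alpha$. In the base case $n=0$, a refinement is forced to have $\varphi'_0=\iota$ and to satisfy $\varphi'_0\models\alpha$, so a refinement exists iff $\iota\models\alpha$; in that case it is unique and hence trivially maximally conservative, matching lines~2--3. For the inductive step I split on whether $\varphi_n(x)\models\alpha(x)$. When it holds, Theorem~\ref{thm:mcr-ind}(a) says the unique maximally conservative refinement is $(\varphi_0,\dots,\varphi_n)$ itself, which is exactly what line~3 returns. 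When it fails, Theorem~\ref{thm:mcr-ind}(b) reduces maximal conservativity of $(\varphi'_0,\dots,\varphi'_n)$ to maximal conservativity of the prefix $(\varphi'_0,\dots,\varphi'_{n-1})$ as a refinement of $(\varphi_0,\dots,\varphi_{n-1})$ satisfying $\neg\exists y.\,\tau(x,y)\wedge\neg\alpha(y)$; by the induction hypothesis the recursive call in line~4 produces precisely these prefixes.

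The step I expect to be the main obstacle is matching the last-coordinate extension computed in lines~5 and~7 with the set of valid completions $\varphi'_n$, and it must be argued in both directions because $\mathit{interpolation}$ is nondeterministic. For soundness I must check that every $\varphi'_n$ returned by $\mathit{interpolation}(\varphi'_{n-1}(x)\wedge\tau(x,y),\,\alpha(y)\wedge\varphi_n(y))$ satisfies the three constraints $\varphi'_{n-1}(x)\wedge\tau(x,y)\models\varphi'_n(y)$, $\varphi'_n(x)\models\varphi_n(x)$ and $\varphi'_n(x)\models\alpha(x)$, so that the whole sequence is a refinement; this is immediate from the specification of $\mathit{interpolation}$ once its arguments are instantiated as above. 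For completeness I must show that every valid completion $\varphi'_n$ is a possible output of $\mathit{interpolation}$—again immediate from its specification, since the interpolant set coincides with the completion set—and that such a completion exists at all, which is exactly the content of Lemma~\ref{lem:pdr:subrefine}: once the prefix is a refinement satisfying $\neg\exists y.\,\tau(x,y)\wedge\neg\alpha(y)$, the premise $\varphi'_{n-1}(x)\wedge\tau(x,y)\models\alpha(y)\wedge\varphi_n(y)$ required to call $\mathit{interpolation}$ holds. Finally I would observe that, since the $n$-th coordinate is genuinely strengthened (no $n$-conservative refinement exists when $\varphi_n\not\models\alpha$), the conservativity index of the full sequence is determined entirely by its prefix; this is what lets the nondeterministic extension range freely over all valid completions without disturbing the characterization.
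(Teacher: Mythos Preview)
Your proposal is correct and follows essentially the same approach as the paper: termination by induction on $n$ (no loops, recursive call on a strictly shorter prefix, base case handled because $\varphi_0=\iota$ forces one of lines~2--3 to fire), and correctness as a direct consequence of Theorem~\ref{thm:mcr-ind} together with Lemma~\ref{lem:pdr:subrefine}. You spell out the two inclusions and the role of the nondeterministic $\mathit{interpolation}$ oracle in far more detail than the paper, which dispatches correctness in a single sentence, but the underlying reasoning is identical; the only slip is that your line references are off by one (the recursive call is line~5, not line~4).
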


\begin{algorithm}[t]
    \caption{~~Maximally conservative refinement, defined inductively}\label{alg:rec-mcr}
    \begin{algorithmic}[1]
    \Require{Predicates \( \iota(x), \tau(x,y), \alpha(x) \) defining linear CHCs}
    \Require{A partial proof $ \varphi_0,\dots,\varphi_n $ satisfying (\ref{eq:pdr:input-condition})}
    \Ensure{A maximally conservative refinement $ \varphi_0',\dots,\varphi_n'$} \Comment{\textsc{IndMCR} may fail}
    \Function{IndMCR}{$\iota,\tau,\alpha;~~ \varphi_0,\dots,\varphi_n$}
    \State {\textbf{if} \( \models \exists x. \iota(x) \wedge \neg \alpha(x) \) \textbf{then} \textbf{fail}}
    \State {\textbf{if} \( \varphi_n(x) \models \alpha(x) \) \textbf{then} \textbf{return} $(\varphi_0,\dots,\varphi_n)$}
    \State $ \gamma(x) := \exists y. \tau(x,y) \wedge \neg \alpha(y) $
    \State $ (\varphi_0,\dots,\varphi_{n-1}) := \textsc{IndMCR}(\iota,\tau,\neg \gamma;~~ \varphi_0,\dots,\varphi_{n-1})$
    %\\ \Comment{\( \exists y.\tau(x,y) \wedge \neg \alpha(y) \) is regarded as a predicate over \( x \).  This recursive call may fail}
    \\ \Comment{\( \exists x. \varphi_{n-1}(x) \wedge \tau(x,y) \models \varphi_n(y) \wedge \alpha(y) \) holds}
    \State $\varphi_n(y) := \mathit{interpolation}\big(\varphi_{n-1}(x) \wedge \tau(x,y),~~~~ \varphi_n(y) \wedge \alpha(y) \big)$
    \Comment{Nondeterministic}
    \State \textbf{return} $(\varphi_0,\dots,\varphi_n)$
    \EndFunction
    \end{algorithmic}
\end{algorithm}

\subsection{Approximating quantified-formulas by quantifier-free formulas}
\IndMCR{} given in the previous subsection is defined by a simple induction on the structure of the partial proof.
However it is not practical since it uses quantified formulas, which are hard to reason about.
This section introduces an algorithm that \tkchanged{does not pass a quantified formula as an argument}.

Our goal is to remove the quantifier of \( \exists y. \tau(x,y) \wedge \neg \alpha(y) \) in \IndMCR{}.
We appeal to the logical equivalence
\begin{equation}
    \exists y. \tau(x,y) \wedge \neg \alpha(y)
    \qquad\Leftrightarrow\qquad
    \bigvee_{c} (\tau(x,c) \wedge \neg \alpha(c))
\end{equation}
where the right-hand-side is the infinite disjunction indexed by concrete values \( c \).
This equivalence allows us to decompose a single complicated (i.e.~quantified) formula into infinitely-many simple (i.e.~quantifier-free) formulas.
We reduce the computation of a maximally conservative refinement satisfying \( \neg(\exists y. \tau(x,y) \wedge \neg \alpha(y)) \) to iterative computation of maximally conservative refinements satisfying \( \neg(\tau(x,c_i) \wedge \neg \alpha(c_i)) \) for \( i = 1,2,3,\dots \), where \( \{ c_1,c_2,\dots \} \) is an enumeration of values.

\IndPDR{} (Algorithm~\ref{alg:simple-pdr}) is an algorithm based on this idea.
As its name suggests, it is closely related to a family of algorithms known as \emph{property-directed reachability}.
We shall discuss the connection in the next subsection.

\begin{algorithm}[t]
    \caption{~~PDR defined by structural induction on partial proofs}\label{alg:simple-pdr}
    \begin{algorithmic}[1]
    \Require{Predicates \( \iota(x), \tau(x,y), \alpha(x) \) defining linear CHCs}
    \Require{A partial proof $ \varphi_0,\dots,\varphi_n $ satisfying (\ref{eq:pdr:input-condition})}
    \Ensure{A maximally conservative refinement $ \varphi_0',\dots,\varphi_n'$} \Comment{\textsc{IndPDR} may fail or diverge}
    \Function{\IndPDR}{$\iota,\tau,\alpha;~~ \varphi_0,\dots,\varphi_n$}
    \State {\textbf{if} \( \models \exists x. \iota(x) \wedge \neg \alpha(x) \) \textbf{then} \textbf{fail}}
    \State {\textbf{if} \( \varphi_n(x) \models \alpha(x) \) \textbf{then} \textbf{return} $(\varphi_0,\dots,\varphi_n)$}
    \While{$ \models \exists x. \exists y. \varphi_{n-1}(x) \wedge \tau(x,y) \wedge \neg \alpha(y)$}
        \State \textbf{let} $M$ \textbf{be} an assignment such that $ M \models \varphi_{n-1}(x) \wedge \tau(x,y) \wedge \neg \alpha(y) $
        \State $ \gamma(x) := \tau(x,M(y)) \wedge \neg \alpha(M(y)) $
        \State $ (\varphi_0,\dots,\varphi_{n-1}) := \IndPDR(\iota,\tau,\neg\gamma;~~ \varphi_0,\dots,\varphi_{n-1})$
    \EndWhile
    %\\ \Comment{\( \exists y.\tau(x,y) \wedge \neg \alpha(y) \) is regarded as a predicate over \( x \).  This recursive call may fail}
    \State{} \Comment{\( \exists x. \varphi_{n-1}(x) \wedge \tau(x,y) \models \varphi_n(y) \wedge \alpha(y) \) holds}
    \State $\varphi_n(y) := \mathit{interpolation}\big(\varphi_{n-1}(x) \wedge \tau(x,y),~~~~ \varphi_n(y) \wedge \alpha(y) \big)$
    \Comment{Nondeterministic}
    \State \textbf{return} $(\varphi_0,\dots,\varphi_n)$
    \EndFunction
    \end{algorithmic}
\end{algorithm}

The loop at lines~4--8 is the computation of a maximally conservative refinement satisfying \( \neg \exists y. \tau(x,y) \wedge \neg \alpha(y) \).
If the algorithm actually computed maximally conservative refinement satisfying \( \tau(x,c) \wedge \neg \alpha(c) \) for all \( c \), it would never terminate.
So \IndPDR{} computes it only for relevant values.
Line~5 chooses a value \( M(y) \) and line~7 computes a maximally conservative refinement satisfying \( \neg (\tau(x,c) \wedge \neg \alpha(c)) \) where \( c = M(y) \).
Note that this is an over-approximation:
\begin{equation*}
    \neg \exists y. \tau(x,y) \wedge \neg \alpha(y) \models \neg (\tau(x,c) \wedge \neg \alpha(c)).
\end{equation*}
If the loop terminates, then \( \models \neg \exists x. \exists y. \varphi_{n-1}(x) \wedge \tau(x,y) \wedge \neg \alpha(y) \), which implies \( \varphi_{n-1}(x) \models \neg \exists y. \tau(x,y) \wedge \neg \alpha(y) \) as expected.

We prove the correctness.
\begin{lemma}\label{lem:pdr:iterative-mcr}
    Let \( (\varphi_i)_{0 \le i \le n} \) be a partial proof and let \( \alpha_1, \alpha_2 \) be formulas such that \( \alpha_2(x) \models \alpha_1(x) \).
    Suppose that \( (\varphi'_i)_{0 \le i \le n} \) is a maximally conservative refinement of \( (\varphi_i)_i \) satisfying \( \alpha_1 \) and that \( (\varphi''_i)_{0 \le i \le n} \) is a maximally conservative refinement of \( (\varphi'_i)_{0 \le i \le n} \) satisfying \( \alpha_2 \).
    Then \( (\varphi''_i)_i \) is a maximally conservative refinement of \( (\varphi_i)_i \) satisfying \( \alpha_2 \).
\end{lemma}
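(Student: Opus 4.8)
The plan is to verify separately the two clauses in the definition of maximal conservativity: that $(\varphi''_i)_i$ is a refinement of $(\varphi_i)_i$ satisfying $\alpha_2$, and that it attains the maximal conservativity level. The refinement clause is immediate from transitivity of the constraints in (\ref{eq:pdr:interpolation-constraints}). We have $\varphi''_0 = \iota$; chaining the two hypotheses gives $\varphi''_i(x) \models \varphi'_i(x) \models \varphi_i(x)$ for each $i$; the transition constraints $\varphi''_i(x) \wedge \tau(x,y) \models \varphi''_{i+1}(y)$ are inherited verbatim from the refinement of $(\varphi'_i)_i$; and $\varphi''_n(x) \models \alpha_2(x)$ holds by hypothesis. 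Hence $(\varphi''_i)_i$ is a refinement of $(\varphi_i)_i$ satisfying $\alpha_2$.

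For the conservativity clause, I would name the two conservativity levels: let $k'$ be the level of $(\varphi'_i)_i$ over $(\varphi_i)_i$ and $k''$ the level of $(\varphi''_i)_i$ over $(\varphi'_i)_i$, so that (by maximality of each) $k'$ is the largest conservativity level of any refinement of $(\varphi_i)_i$ satisfying $\alpha_1$, and $k''$ the largest for any refinement of $(\varphi'_i)_i$ satisfying $\alpha_2$. Put $m = \min(k', k'')$. For $j \le m$ we get $\varphi''_j = \varphi'_j$ (since $j \le k''$) and $\varphi'_j = \varphi_j$ (since $j \le k'$), so $\varphi''_j = \varphi_j$; thus $(\varphi''_i)_i$ is $m$-conservative over $(\varphi_i)_i$. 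It remains to rule out any $(m+1)$-conservative refinement of $(\varphi_i)_i$ satisfying $\alpha_2$.

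I would finish by a case split on which level realizes $m$. If $k' \le k''$, so $m = k'$, then any $(k'+1)$-conservative refinement $(\psi_i)_i$ of $(\varphi_i)_i$ satisfying $\alpha_2$ is, because $\alpha_2(x) \models \alpha_1(x)$, also a $(k'+1)$-conservative refinement satisfying $\alpha_1$, contradicting the maximality of $k'$. The delicate case is $k'' < k'$, where $m = k''$: a candidate $(k''+1)$-conservative refinement $(\psi_i)_i$ of $(\varphi_i)_i$ satisfying $\alpha_2$ need not refine $(\varphi'_i)_i$, since we only know $\psi_i \models \varphi_i$ and not $\psi_i \models \varphi'_i$. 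The device that resolves this — which I expect to be the crux of the argument — is to pass to the pointwise conjunction $\chi_i := \psi_i \wedge \varphi'_i$: then $(\chi_i)_i$ is a refinement of $(\varphi'_i)_i$ satisfying $\alpha_2$ (the transition constraints survive because both $(\psi_i)_i$ and $(\varphi'_i)_i$ satisfy them, and $\chi_n \models \psi_n \models \alpha_2$), and it is $(k''+1)$-conservative because for $j \le k''+1 \le k'$ we have $\psi_j = \varphi_j = \varphi'_j$ and hence $\chi_j = \varphi'_j$. This contradicts the maximality of $k''$. In either case no $(m+1)$-conservative refinement satisfying $\alpha_2$ exists, and since $(\varphi''_i)_i$ is $m$-conservative, this is exactly the definition of maximal conservativity, completing the proof.
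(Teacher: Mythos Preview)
Your proof is correct and relies on the same key device as the paper's proof: passing to the pointwise conjunction $\psi_i \wedge \varphi'_i$ to obtain a refinement of $(\varphi'_i)_i$ from one of $(\varphi_i)_i$. The only difference is structural: rather than introducing $k'$, $k''$, $m$ and splitting into cases, the paper proves directly that for \emph{every} $k$-conservative refinement $(\psi_i)_i$ of $(\varphi_i)_i$ satisfying $\alpha_2$, the sequence $(\varphi''_i)_i$ is also $k$-conservative --- which subsumes both of your cases in one stroke (first $(\psi_i)_i$ satisfies $\alpha_1$, so $(\varphi'_i)_i$ is $k$-conservative; then $(\psi_i \wedge \varphi'_i)_i$ is a $k$-conservative refinement of $(\varphi'_i)_i$ satisfying $\alpha_2$, so $(\varphi''_i)_i$ is $k$-conservative). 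Your case split is not wrong, just unnecessary.
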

\begin{proof}
    Assume a \( k \)-conservative refinement \( (\psi_i)_{0 \le i \le n} \) of \( (\varphi_i)_i \) satisfying \( \alpha_2 \).
    It suffices to show that \( (\varphi''_i)_i \) is also \( k \)-conservative.
    Since \( \alpha_2(x) \models \alpha_1(x) \), the partial proof \( (\psi_i)_i \) is also a refinement of \( (\varphi_i)_i \) satisfying \( \alpha_1 \).
    Since \( (\varphi'_i)_i \) is maximally conservative, it is \( k \)-conservative.
    Then \( (\psi_i \wedge \varphi'_i)_{0 \le i \le n} \) is a \( k \)-conservative refinement of \( (\varphi'_i)_i \) satisfying \( \alpha_1 \).
    By the maximal conservativity of \( (\varphi''_i)_i \), it is \( k \)-conservative as well.
\end{proof}

\begin{lemma}
    If \IndPDR{} terminates, it returns a maximally conservative refinement.
\end{lemma}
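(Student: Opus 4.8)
The plan is to prove the statement by structural induction on the length \( n \) of the input partial proof, reducing at each step to the inductive characterization of maximal conservativity in Theorem~\ref{thm:mcr-ind}. If \IndPDR{} returns at line~3 (because \( \varphi_n(x)\models\alpha(x) \)), the output is the unchanged input, which is maximally conservative by clause~(a) of Theorem~\ref{thm:mcr-ind}; the base case \( n=0 \) falls here, since then exactly one of lines~2 and~3 fires and the loop is never entered. The real work is the case \( \varphi_n(x)\not\models\alpha(x) \): by clause~(b) of Theorem~\ref{thm:mcr-ind} it suffices to show that when the loop exits, the accumulated prefix \( (\varphi_0,\dots,\varphi_{n-1}) \) is a maximally conservative refinement of the original prefix satisfying \( \Psi(x) := \neg\exists y.\,\tau(x,y)\wedge\neg\alpha(y) \), because line~10 then produces a legitimate interpolant exactly as in \IndMCR{}.

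The heart of the argument is a loop invariant. Writing \( \gamma_j(x)=\tau(x,c_j)\wedge\neg\alpha(c_j) \) for the formula built from the witness chosen at the \( j \)-th iteration (\( c_j=M(y) \)) and \( \beta_j := \bigwedge_{k=1}^{j}\neg\gamma_k \) (so \( \beta_0=\top \)), I would show by an inner induction on \( j \) that after \( j \) iterations the current prefix \( (\varphi^{(j)}_i)_i \) is a maximally conservative refinement of the original prefix \( (\varphi_i)_i \) satisfying \( \beta_j \). The outer induction hypothesis, applied to each terminating recursive call at line~7, gives that \( (\varphi^{(j)}_i)_i \) is a maximally conservative refinement of \( (\varphi^{(j-1)}_i)_i \) satisfying \( \neg\gamma_j \). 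Since the invariant at \( j-1 \) yields \( \varphi^{(j-1)}_{n-1}\models\beta_{j-1} \), every refinement of \( (\varphi^{(j-1)}_i)_i \) already satisfies \( \beta_{j-1} \) at its last index; hence the refinements satisfying \( \neg\gamma_j \) coincide with those satisfying \( \beta_j=\beta_{j-1}\wedge\neg\gamma_j \), so the recursive call in fact returns a maximally conservative refinement satisfying \( \beta_j \). Applying Lemma~\ref{lem:pdr:iterative-mcr} with \( \alpha_1=\beta_{j-1} \), \( \alpha_2=\beta_j \) (and \( \beta_j\models\beta_{j-1} \)) then closes the invariant. Along the way I must check that each recursive call meets its precondition~(\ref{eq:pdr:input-condition}) for the assertion \( \neg\gamma_j \): the transition constraints are inherited from the original proof, and \( \varphi^{(j-1)}_i\wedge\tau(x,c_j)\models\alpha(c_j) \) for \( i\le n-2 \) follows from \( \varphi^{(j-1)}_i\wedge\tau\models\varphi^{(j-1)}_{i+1}\models\varphi_{i+1}\models\alpha \).

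Finally, when the loop terminates the guard fails, so \( \varphi^{(N)}_{n-1}(x)\wedge\tau(x,y)\models\alpha(y) \), i.e.\ \( \varphi^{(N)}_{n-1}\models\Psi \). Since \( \Psi\models\beta_N \) (the conjunction defining \( \Psi \) ranges over all values, including the \( c_k \)), the prefix is a refinement satisfying the stronger assertion \( \Psi \), and a short monotonicity argument upgrades maximal conservativity from \( \beta_N \) to \( \Psi \): if \( (\varphi^{(N)}_i)_i \) is \( k \)-conservative and some \( \Psi \)-refinement were \( (k+1) \)-conservative, that refinement would also satisfy the weaker \( \beta_N \), contradicting maximal conservativity for \( \beta_N \). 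This establishes the hypothesis of clause~(b) of Theorem~\ref{thm:mcr-ind}, completing the induction.

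I expect the main obstacle to be the mismatch between what each iteration computes — a maximally conservative refinement for the single over-approximating assertion \( \neg\gamma_j \) — and the global target \( \Psi \): the per-iteration assertions are not monotonically ordered, so Lemma~\ref{lem:pdr:iterative-mcr} does not apply to them directly. The accumulation through \( \beta_j \), together with the observation that refining a sequence already satisfying \( \beta_{j-1} \) makes ``\( \neg\gamma_j \)'' and ``\( \beta_j \)'' interchangeable, is what makes the telescoping composition legitimate; the concluding strengthening from \( \beta_N \) to \( \Psi \) is the second delicate point.
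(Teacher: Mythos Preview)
Your proof is correct and follows essentially the same approach as the paper: an outer reduction via Theorem~\ref{thm:mcr-ind} to showing that the prefix at loop exit is a maximally conservative refinement satisfying \(\Psi\), and an inner induction on the number of iterations using Lemma~\ref{lem:pdr:iterative-mcr} together with the entailment \(\Psi \models \neg\gamma_j\). The paper's proof is a two-sentence sketch that leaves the invariant implicit; your accumulated assertion \(\beta_j\) is a clean way to make that same telescoping argument precise, but it is not a genuinely different route.
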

\begin{proof}
    Let \( (\psi_i)_{0 \le i \le n} \) be the initial value of \( (\varphi_i)_{0 \le i \le n} \) and \( (\psi_i')_{0 \le i \le n} \) be the value of \( (\varphi_i)_i \) at the end of the loop (or, equivalently, line~9).
    It suffices to show that \( (\psi'_i)_{0 \le i \le n-1} \) is a maximally conservative refinement of \( (\psi_i)_{0 \le i \le n-1} \) satisfying \( \neg \exists y. \tau(x,y) \wedge \neg \alpha(y) \).
    Clearly \( \psi'_{n-1}(x) \models \neg \exists y. \tau(x,y) \wedge \neg \alpha(y) \) since the condition in line~4 is false for the values at line~9.
    The maximal conservativity \tkchanged{can be proved by induction on the number of iterations}
    using Lemma~\ref{lem:pdr:iterative-mcr} since \( \neg \exists y. \tau(x,y) \wedge \neg \alpha(y) \models \neg (\tau(x,c) \wedge \neg \alpha(c)) \) for every \( c \).
\end{proof}

Other properties of \IndPDR{} shall be discussed below.

\subsection{PDR vs.~\IndPDR{} as a transition system}
\emph{Property-directed reachability} (or \emph{PDR}) refers to the IC3 algorithm~\cite{Bradley2011,Een2011} and its derivatives~\cite{Hoder2012,Cimatti2012,Cimatti2014,Komuravelli2013,Komuravelli2014,Birgmeier2014,Beyer2020}.
This section compares PDR and \MCImpact{}, in particular \MCImpact{} with \IndPDR{}.
Since PDR is formalized as an abstract transition system in \cite{Hoder2012}, we give a transition system corresponding to \MCImpact{} and compare it with the transition system in \cite{Hoder2012}.

\paragraph{\IndPDR{} as a transition system}
An abstract transition system for \IndPDR{} can be derived by the following argument.
The starting point is operational semantics of the program \IndPDR{}.
Since \IndPDR{} is a recursive program, a configuration in the operational semantics consists of the current assignment to variables and the call-stacks.
So a configuration can be written as
\begin{equation*}
    (\alpha';~~\varphi_0,\dots,\varphi_{k-1}) \blacksquare
    (M^{k}, \alpha^k;~~\varphi^k_0,\dots,\varphi^k_k) (M^{k+1},\alpha^{k+1};~~\varphi^{k+1}_0,\dots,\varphi^{k+1}_{k+1}) \dots (M^n,\alpha^n;~~\varphi^n_0,\dots,\varphi^n_n),
\end{equation*}
where the left-hand-side of \( \blacksquare \) is the current variable assignment and the right-hand-side is the call-stack.
% If the current call of \IndPDR{} returns \( (\psi_0,\dots,\psi_{k-1}) \), then the program pops the top \((M^{k}, \alpha^k;~~\varphi^k_0,\dots,\varphi^k_k)\) of the call-stack, performs the assignment at line~7 and moves to line~4 with variable assignment \( \alpha = \alpha^k \) and \( (\varphi_0,\dots,\varphi_k) = (\psi_0,\dots,\psi_{k-1},\varphi^k_k) \).
An abstract transition system can be obtained by using this representation of the call-stack.

This representation, however, has redundancy.
First forgetting \( \varphi^\ell_0,\dots,\varphi^\ell_{\ell-1} \) in the frame \( (M^{\ell}, \alpha^\ell;~~\varphi^\ell_0,\dots,\varphi^\ell_\ell) \) does not cause any problem, since they shall be overwritten when the control will come back to the frame.
Second \( \alpha^\ell \) is redundant: \( \alpha^n = \alpha \) and \( \alpha^\ell(x) = (\tau(x,M^{\ell+1}(y)) \wedge \neg \alpha^{\ell+1}(M^{\ell+1}(y))) \) for \( \ell = n-1,\dots,k \).
Third \( \alpha' = (\tau(x,M^{k}(y)) \wedge \neg \alpha^{k}(M^k(y))) \).
Fourth \( M^\ell(x) \) is not used.
So we obtain a compact representation:
\begin{equation*}
    (M^k(y) \dots M^n(y) \parallel \varphi_0 \dots \varphi_{k-1} \varphi^k_k \varphi^{k+1}_{k+1} \dots \varphi^n_n).
\end{equation*}
Figure~\ref{fig:SPDR} defines the abstract transition system corresponding to \IndPDR{}, which is the operational semantics of \IndPDR{} with the above representation of configurations.

\begin{figure}[t]
    \begin{gather*}
        \infer[\textsc{(Candidate)}]{
            (\epsilon \parallel \varphi_0 \dots \varphi_n) \longrightarrow (c \parallel \varphi_0 \dots \varphi_n)
        }{
            \models \varphi_n(c) \wedge \neg \alpha(c)
        }
        \\[5pt]
        \infer[\textsc{(Decide)}]{
            (c_k\,\dots\,c_n \parallel \varphi_0 \dots \varphi_n) \longrightarrow (c' c_k \dots c_n \parallel \varphi_0 \dots \varphi_n)
        }{
            \models \neg \iota(c_k)
            \qquad
            \models \varphi_{k-1}(c') \wedge \tau(c', c_k)
        }
        \\[5pt]
        % \infer[\textsc{(Fail)}]{
        %     (c_k\,\dots\,c_n \parallel \varphi_0 \dots \varphi_n) \longrightarrow \mathbf{fail}
        % }{
        %     \models \iota(c_k)
        % }
        % \\[5pt]
        \infer[\textsc{(Conflict)}]{
            (c_k \dots c_n \parallel \varphi_0 \dots \varphi_n)
            \longrightarrow
            (c_{k+1} \dots c_n \parallel \varphi_0 \dots \varphi_{k-1} (\varphi_{k} \wedge \psi) \varphi_{k+1} \dots \varphi_n)
        }{
            \varphi_{k-1}(x) \wedge \tau(x,y) \models \psi(y)
            \qquad
            \psi(x) \models x \neq c_k
        }
    \end{gather*}
    \vspace{-20pt}
    \caption{The transition rules for \IndPDR{}.}
    \label{fig:SPDR}
\end{figure}

The above argument shows the following proposition.
\begin{proposition}
    \( \IndPDR{}(\iota,\tau,\alpha;~~\varphi_0,\dots,\varphi_n) \) may return \( (\varphi'_0,\dots,\varphi'_n) \) if and only if \( (\epsilon \parallel \varphi_0,\dots,\varphi_n) \longrightarrow^* (\epsilon \parallel \varphi'_0,\dots,\varphi'_n) \) with \( \varphi'_n(x) \models \alpha(x) \).
    \qed
\end{proposition}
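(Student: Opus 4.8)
The plan is to read the proposition off the operational semantics of \IndPDR{}, exactly as the paragraph preceding the statement suggests, but to turn the informal ``this component is redundant'' remarks into a verified bisimulation. First I would fix a small-step operational semantics for \IndPDR{} in which a configuration is a variable store together with a call stack of activation records, one per pending recursive invocation; each record stores the arguments \( \iota,\tau,\alpha^\ell \), the current sequence \( (\varphi_0,\dots,\varphi_\ell) \), a program counter, and the witness \( M^\ell \) selected at line~5. The transition relation then has one clause per control point: the guards of lines~2--3, the loop test of line~4, the model selection of line~5, the recursive call (push) of line~7, the interpolation that follows the loop, and the two returns (pop). The top-level invocation \( \IndPDR{}(\iota,\tau,\alpha;\varphi_0,\dots,\varphi_n) \) is the initial configuration, and its possible return values are precisely the frame sequences carried by the stacks reached when this initial record is finally popped; a \textbf{fail} at line~2 and divergence correspond, respectively, to a stuck configuration and to an infinite run, neither of which produces a return.

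Next I would prove that the four redundancies noted in the text are genuine invariants of the reachable configurations, by induction on the length of the run (equivalently, on the structure of the stack). Concretely: (i)~in every non-top record the prefix \( \varphi^\ell_0,\dots,\varphi^\ell_{\ell-1} \) coincides with the data already held by the records below it and is overwritten on return, so it carries no information; (ii)~the local assertion \( \alpha^\ell \) is determined, through the definition of \( \gamma \) at line~6 and the recursive call at line~7, by \( M^{\ell+1} \) and \( \alpha^{\ell+1} \), with \( \alpha^n = \alpha \), hence is recoverable from the sequence of witnesses; (iii)~the live part of the store and the pending assertion \( \alpha' \) are likewise determined by the top witness; and (iv)~the component \( M^\ell(x) \) is never read. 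These invariants make the projection \( \pi \) onto the compact configuration \( (M^k(y)\,\dots\,M^n(y) \parallel \varphi_0\,\dots\,\varphi_{k-1}\,\varphi^k_k\,\dots\,\varphi^n_n) \) well defined, and they show that two full configurations with the same image drive identical future behaviour; thus \( \pi \) is a genuine quotient rather than a merely forgetful map.

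It then remains to check that \( \pi \) is a bisimulation between the operational semantics, grouped into blocks, and the relation \( \longrightarrow \) of \autoref{fig:SPDR}, each abstract rule corresponding to a fixed finite block of concrete steps in both directions. I expect \textsc{(Candidate)} to match a loop-guard-true step of the top record together with the selection of the frontier witness \( c = M(y) \); \textsc{(Decide)} to match a guard-true/selection step in a freshly pushed child record that extends the backward trace by a predecessor \( c' = M(x) \); and \textsc{(Conflict)} to match a loop-guard-false step followed by the interpolation and a pop, which strengthens the frame \( \varphi_k \) by the interpolant \( \psi \) and discards the blocked candidate \( c_k \). Verifying the side conditions is where the input condition (\ref{eq:pdr:input-condition}) does the real work: the entailment \( \varphi_{n-1}(x) \wedge \tau(x,y) \models \varphi_n(y) \) is what lets the frontier test \( \varphi_n(c) \wedge \neg\alpha(c) \) of \textsc{(Candidate)} and the predecessor test \( \varphi_{n-1}(c') \wedge \tau(c',c) \) of \textsc{(Decide)} jointly reconstruct one line~5 witness, while the conditions \( \varphi_{k-1}(x) \wedge \tau(x,y) \models \psi(y) \) and \( \psi(x) \models x \neq c_k \) of \textsc{(Conflict)} are exactly the specification of the interpolation call against the recovered assertion \( \alpha^\ell \). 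Finally, since \textsc{(Candidate)} is enabled at an empty-stack configuration precisely when \( \varphi_n \not\models \alpha \), a configuration \( (\epsilon \parallel \varphi'_0\,\dots\,\varphi'_n) \) with \( \varphi'_n \models \alpha \) is exactly a terminal empty-stack configuration, i.e.\ a top-level return; together with the bisimulation this yields the stated equivalence.

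The main obstacle is the second step rather than the rule-by-rule matching: making the redundancy claims precise enough to conclude that \( \pi \) is a bisimulation, and not only a simulation, requires tracking how the shared frame-prefixes stay synchronised under the push/pop discipline and are correctly restored on return. Because the proposition is an \emph{iff} over the nondeterministic ``may return'' semantics, the abstraction must be complete as well as sound, so a one-directional simulation—which would already suffice for the soundness of \IndPDR{}—is not enough here.
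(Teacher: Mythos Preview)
Your proposal is correct and takes essentially the same approach as the paper. The paper does not give a separate proof of this proposition at all: it simply writes ``the above argument shows the following proposition'' and places a \(\qed\), treating the preceding derivation of the compact configuration representation (the four redundancy observations and the resulting projection) as the entire justification. Your plan---fixing a small-step semantics for \IndPDR{}, proving the four redundancies as invariants, and then checking that the projection \(\pi\) sets up a bisimulation with the rules of \autoref{fig:SPDR}---is precisely the rigorous unpacking of that informal paragraph, and your identification of where condition~(\ref{eq:pdr:input-condition}) is needed (to relate the concrete loop guard on \(\varphi_{n-1}\) to the abstract side condition on \(\varphi_n\)) is the one nontrivial ingredient the paper leaves implicit.
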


Before comparing the transition system with PDR, we prove a part of the correctness of \IndPDR{} using the transition system.
\begin{lemma}
    If \( (\varphi'_i)_{0 \le i \le n} \) is a maximally conservative refinement of \( (\varphi_i)_{0 \le i \le n} \), then
    \begin{equation*}
        (\epsilon \parallel \varphi_0 \dots \varphi_n) \longrightarrow^* (\epsilon \parallel \varphi'_0 \dots \varphi'_n).
    \end{equation*}
\end{lemma}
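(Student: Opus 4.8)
The plan is to reduce the statement to the \emph{completeness} of \IndPDR{}. By the proposition immediately above, it suffices to show that \IndPDR$(\iota,\tau,\alpha;~\varphi_0,\dots,\varphi_n)$ \emph{may return} $(\varphi'_0,\dots,\varphi'_n)$, because the additional requirement $\varphi'_n(x)\models\alpha(x)$ of that proposition is automatic: it is one of the defining constraints of a refinement satisfying $\alpha$ in \autoref{eq:pdr:interpolation-constraints}, and a maximally conservative refinement is in particular such a refinement. I would then prove by induction on $n$ that \emph{every} maximally conservative refinement $(\varphi'_i)_i$ of $(\varphi_i)_i$ is a possible output of \IndPDR{}, following the inductive characterization of \autoref{thm:mcr-ind}.

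The base case $n=0$ and case (a) of \autoref{thm:mcr-ind} (when $\varphi_n(x)\models\alpha(x)$, which forces $(\varphi'_i)_i=(\varphi_i)_i$) are immediate, since line~3 of \IndPDR{} returns the sequence unchanged. All the content is in case (b), where $\varphi_n(x)\not\models\alpha(x)$ and, by \autoref{thm:mcr-ind}, $(\varphi'_0,\dots,\varphi'_{n-1})$ is a maximally conservative refinement of $(\varphi_0,\dots,\varphi_{n-1})$ satisfying $\beta(x):=\neg\exists y.\,\tau(x,y)\wedge\neg\alpha(y)$, while $\varphi'_n$ is some interpolant of $\varphi'_{n-1}(x)\wedge\tau(x,y)$ and $\varphi_n(y)\wedge\alpha(y)$. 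The latter is matched by the nondeterministic interpolation at line~9, so the task is to drive the while loop (lines~4--8) from $(\varphi_0,\dots,\varphi_{n-1})$ to exactly $(\varphi'_0,\dots,\varphi'_{n-1})$. For this I would maintain the invariant that the running prefix $(\phi_0,\dots,\phi_{n-1})$ is a maximally conservative refinement of $(\varphi_0,\dots,\varphi_{n-1})$ satisfying the conjunction of the single-counterexample constraints $\neg(\tau(x,c)\wedge\neg\alpha(c))$ processed so far, and that the target still refines it, i.e.\ $\varphi'_i\models\phi_i$ for all $i$. The first half follows from the induction hypothesis---each recursive call of \IndPDR{} may return an arbitrary maximally conservative refinement of its argument satisfying $\neg\gamma$---combined with \autoref{lem:pdr:iterative-mcr}, which composes these single-step refinements into one maximally conservative refinement for the accumulated constraint. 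Since $\varphi'_{n-1}\models\beta\models\neg(\tau(x,c)\wedge\neg\alpha(c))$ for every $c$, the target remains a legal refinement for every next constraint, so at each step there is a choice of counterexample (line~5) and of interpolants inside the recursive call keeping the target above the running value.

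The step I expect to be the main obstacle is guaranteeing that a \emph{finite} sequence of choices lands on \emph{exactly} the prescribed $(\varphi'_0,\dots,\varphi'_{n-1})$, rather than on some other maximally conservative refinement satisfying $\beta$. The delicacy is that the loop halts the instant $\phi_{n-1}\models\beta$, so I must argue that, under the steering above, the first moment the top frame satisfies $\beta$ is also the moment at which all lower frames already coincide with the target. I would obtain this by combining the invariant $\varphi'_i\models\phi_i$ with the fact that all maximally conservative refinements of a fixed sequence satisfying a fixed constraint share the same maximal conservativity level $k$, so the frames below $k$ are pinned to $\varphi_i$ on both sides, and then using the interpolant freedom inside the recursive calls to realize the target on the remaining frames $k+1,\dots,n-1$ by selecting counterexamples that force precisely those strengthenings. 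Termination of the chosen run then follows because only finitely many blocking steps are needed to bring the top frame down to $\varphi'_{n-1}$; packaging this steering-and-exactness argument cleanly, so that the endpoint is the \emph{given} refinement and not merely a maximally conservative one, is where the real work lies.
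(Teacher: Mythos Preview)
Your proposal takes a genuinely different route from the paper. The paper does \emph{not} go through \IndPDR{} or induction on $n$; it constructs the transition sequence directly in the abstract system of Figure~\ref{fig:SPDR}. Let $k$ be the conservativity level of the target (so $\varphi'_k=\varphi_k$ but $\varphi'_{k+1}\neq\varphi_{k+1}$). The paper runs precise symbolic execution from $\varphi_{k+1}$ to get $\psi_{k+2},\dots,\psi_n$ and observes that $(\varphi_0,\dots,\varphi_{k+1},\psi_{k+2},\dots,\psi_n)$ would be a $(k{+}1)$-conservative refinement if $\psi_n\models\alpha$, which maximal conservativity forbids. Hence there is a bad state $c_n\in\psi_n\setminus\alpha$, and tracing it back gives a chain $c_{k+1},\dots,c_n$ with $\varphi_{k+1}(c_{k+1})$, $\tau(c_i,c_{i+1})$, and $\neg\alpha(c_n)$. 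One \textsc{(Candidate)}, $n{-}k{-}1$ \textsc{(Decide)}s, and $n{-}k$ \textsc{(Conflict)}s (each with $\psi:=\varphi'_j$) then land exactly on $(\varphi'_0,\dots,\varphi'_n)$. That is the whole proof.

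The step you flag as the main obstacle is real, and your proposed resolution does not close it. Maintaining the invariant $\varphi'_i\models\phi_i$ through the loop is not enough: at the moment the loop condition first fails you only know that $(\phi_i)_{i\le n-1}$ is \emph{some} maximally conservative refinement for $\beta$, not the prescribed one; and conversely, driving the recursive call to return exactly $(\varphi'_i)_{i\le n-1}$ requires $(\varphi'_i)_{i\le n-1}$ to be maximally conservative for the single-point constraint $\neg\gamma_c$, which fails for a generic $c$. What makes your approach go through is the observation that a \emph{single} loop iteration suffices if $c$ is chosen well: take $c=c_n$ from the chain above, and check that $(\varphi'_i)_{i\le n-1}$ is maximally conservative for $\neg(\tau(x,c_n)\wedge\neg\alpha(c_n))$ (any $(k{+}1)$-conservative refinement would contain $c_{k+1}$ at level $k{+}1$ and hence reach $c_n$). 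But discovering that counterexample is precisely the paper's chain construction, so the two arguments converge at the key technical point; the paper's version simply avoids the detour through \IndPDR{} and the need to generalize the induction hypothesis over the assertion parameter.
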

\begin{proof}
    Assume that \( (\varphi'_i)_{0 \le i \le n} \) is a maximally conservative refinement of \( (\varphi_i)_{0 \le i \le n} \) satisfying \( \alpha \).
    Let \( k \) be the natural number such that \( \varphi'_k = \varphi_k \) but \( \varphi'_{k+1} \neq \varphi_{k+1} \).
    For \( i = k+2,\dots,n \), we define \( \psi_i \) by
    \begin{equation*}
        \varphi_{k+1} \rightsquigarrow
        \psi_{k+2} \rightsquigarrow
        \psi_{k+3} \rightsquigarrow
        \dots \rightsquigarrow
        \psi_n
    \end{equation*}
    where \( \rightsquigarrow \) is the precise symbolic computation.
    Then \( (\varphi_0,\dots,\varphi_{k+1},\psi_{k+2},\dots,\psi_n) \) satisfies the requirements for refinement except for \( \psi_n(x) \models \alpha(x) \).
    By the maximal conservativity of \( (\varphi'_i)_i \), we know that \( (\varphi_0,\dots,\varphi_{k+1},\psi_{k+2},\dots,\psi_n) \) is not a refinement satisfying \( \alpha \), and thus \( \models \psi_n(c_n) \wedge \neg \alpha(c_n) \) for some \( c_n \).
    Hence we have a sequence \( (c_{k+1}, \dots, c_n) \) of values such that \( \models \varphi_{k+1}(c_{k+1}) \) and \( \models \tau(c_i,c_{i+1}) \) for every \( i = k+1,\dots,n-1 \).
    So we have the following transition sequence:
    \begin{align*}
        &
        (\epsilon \parallel \varphi_0\dots\varphi_n)
        \longrightarrow^*
        (c_{k+1} \dots c_n \parallel \varphi_0 \dots \varphi_n)
        \longrightarrow
        (c_{k+2} \dots c_n \parallel \varphi_0 \dots \varphi_k \varphi'_{k+1} \varphi_{k+2} \dots \varphi_n)
        \\
        &
        \longrightarrow
        (c_{k+3} \dots c_n \parallel \varphi_0 \dots \varphi_k \varphi'_{k+1} \varphi'_{k+2} \varphi_{k+3}\dots \varphi_n)
        \longrightarrow
        (\epsilon \parallel \varphi_0 \dots \varphi_k \varphi'_{k+1} \varphi'_{k+2} \cdots \varphi'_n).
    \end{align*}
\end{proof}

\begin{corollary}
    The set of possible return values of \IndPDR{} coincides with the set of all maximally conservative refinements.
\end{corollary}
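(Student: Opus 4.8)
The plan is to prove the claimed set equality by two inclusions, both of which follow by stitching together results already established above. First I would unfold the statement: the set of possible return values of $\IndPDR{}$ consists of those tuples $(\varphi'_0,\dots,\varphi'_n)$ output by some (nondeterministically chosen) terminating run of $\IndPDR{}(\iota,\tau,\alpha;\varphi_0,\dots,\varphi_n)$, while the target set is that of all maximally conservative refinements of $(\varphi_i)_i$.

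For the inclusion of return values into maximally conservative refinements, I would invoke the preceding lemma asserting that whenever $\IndPDR{}$ terminates it returns a maximally conservative refinement: every possible return value arises from some terminating run, hence is maximally conservative. For the reverse inclusion, let $(\varphi'_i)_{0\le i\le n}$ be an arbitrary maximally conservative refinement. The lemma just above supplies a transition sequence $(\epsilon \parallel \varphi_0 \dots \varphi_n) \longrightarrow^* (\epsilon \parallel \varphi'_0 \dots \varphi'_n)$, and the final entailment constraint in (\ref{eq:pdr:interpolation-constraints}) yields $\varphi'_n(x) \models \alpha(x)$, since a maximally conservative refinement is in particular a refinement satisfying $\alpha$. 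Feeding this transition sequence into the preceding Proposition --- whose characterization of possible return values in terms of transitions ending at $(\epsilon \parallel \dots)$ with $\varphi'_n \models \alpha$ matches exactly --- shows that $(\varphi'_i)_i$ is a possible return value.

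Thus the corollary is purely a matter of combining the soundness lemma, the Proposition relating operational runs to $\longrightarrow$-sequences, and the completeness lemma; there is no genuinely new content. The only point worth a moment's attention is verifying the side condition $\varphi'_n(x) \models \alpha(x)$ required by the Proposition, but this is immediate from the definition of maximally conservative refinement together with (\ref{eq:pdr:interpolation-constraints}).
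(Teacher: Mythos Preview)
Your proposal is correct and matches the paper's approach exactly: the corollary is stated without proof in the paper, being an immediate combination of the soundness lemma (return values are maximally conservative), the completeness lemma (every maximally conservative refinement is reachable via $\longrightarrow^*$), and the Proposition linking transition sequences to possible return values. Your explicit check of the side condition $\varphi'_n(x)\models\alpha(x)$ is the only detail worth spelling out, and you handle it correctly.
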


\paragraph{Definition of PDR}
We review PDR following Hoder and Bj{\o}rner~\cite{Hoder2012}.
\tkchanged{Here we introduce a simplified version, which we call \emph{simplified PDR} (\emph{sPDR}); see Remark~\ref{rem:pdr:practical} for the differences}.
A \emph{configuration} is of the form \( (c_k \dots c_n \parallel F_0 F_1 \dots F_n) \), consisting of two sequences of \emph{(candidate) counterexamples} \( c_i \in \mathcal{S} \) and \emph{frames} \( F_i \),%
    \footnote{We do not use metavariables \( \varphi \) and \( \psi \) for frames, because a frame does not precisely correspond to a cut-formula.  See a discussion below.}
which are formulas denoting unary predicates.
We call \( c_i \) (resp.~\( F_i \)) the counterexample (resp.~the frame) at \emph{level} \( i \).
Note that counterexamples are associated to only higher levels and that the counterexample sequence can be empty \( \epsilon \).
Intuitively a frame \( F_i \) at level \( i \) denotes an over-approximation of states of distance \( \le i \) and \( c_i \) together with \( c_{i+1} \dots c_n \) is a witness of unsafety of \( F_i \) with bound \( (n-i) \), i.e.~\( F_i(c_i) \) and \( \bigwedge_{j=i}^{n-1} \tau(c_j, c_{j+1}) \wedge \neg \alpha(c_n) \).
Formally each configuration \( (c_k \dots c_n \parallel F_0 \dots F_n) \) must satisfy (a)~\( F_0 = \iota \), (b)~\( F_i(x) \vee \mathcal{F}(F_i)(x) \models F_{i+1}(x) \) for every \( 0 \le i \le n \),%
    \footnote{\( \mathcal{F} \) is the predicate transformer given by \( \mathcal{F}(F)(y) := \iota(y) \vee (\exists x. F(x) \wedge \tau(x,y)) \).}
(c) \( F_{i}(x) \models \alpha(x) \) for every \( 0 \le i < n \), (d)~\( \models F_i(c_i) \) for every \( k \le i \le n \), and (e)~\( \models \bigwedge_{i=k}^{n-1} \tau(c_i, c_{i+1}) \wedge \neg \alpha(c_n) \).

% A configuration \( (\vec{c} \parallel F_0 \dots F_n) \) is \emph{boundedly safe} if \( F_n(x) \models \alpha(x) \).
% Note that this implies \( \vec{c} = \epsilon \).
% If the configuration is boundedly safe, \( F_i \) is safe with bound \( (n-i) \) for every \( 0 \le i \le n \) and, in particular, \( F_0 = \iota \) is safe with bound \( n \).
% To emphasize a configuration is boundedly safe, we write it as \( (\epsilon \parallel F_0 \dots F_n~\checkmark) \).

The transition rules are given in \autoref{fig:PDR}.
The rule \textsc{(Unfold)} is applicable when \( F_0 = \iota \) has been confirmed safe with bound \( n \).
This rule starts to check if \( \iota \) is safe with bound \( (n+1) \), by extending the frame by \( \top \).
\textsc{(Candidate)} is applicable when a state \( c_n \) satisfying \( F_n \) violates the assertion \( \alpha \) and starts to check if the bad state \( c_n \) is \( n \)-step reachable from \( \iota \).
The candidate counterexample is propagated backward by \textsc{(Decide)} rule.
If \textsc{(Decide)} is not applicable, the candidate is spurious, i.e.~it was introduced by the approximation at this level.
Then \textsc{(Conflict)} refines the approximation, strengthening \( F_k \) so that it does not contain the spurious counterexample \( c_k \) (i.e.~\( \psi(x) \models x \neq c_k \)).
\begin{figure}[t]
    %\hu{$\psi$ should be replaced by $G$?}
    %\tk{Right.  Thanks.}
    \begin{gather*}
        % \infer[\textsc{(Unfold/pdr)}]{
        %     (\epsilon \parallel F_0 \dots F_n) \longrightarrow_{\textsc{pdr}} (\epsilon \parallel F_0 \dots F_n \top)
        % }{
        %     F_n(x) \models \alpha(x)
        % }
        % \\[5pt]
        \infer[\textsc{(Candidate/sPDR)}]{
            (\epsilon \parallel F_0 \dots F_n) \longrightarrow_{\textsc{pdr}} (c \parallel F_0 \dots F_n)
        }{
            \models F_n(c) \wedge \neg \alpha(c)
        }
        \\[5pt]
        \infer[\textsc{(Decide/sPDR)}]{
            (c_k\,\dots\,c_n \parallel F_0 \dots F_n) \longrightarrow_{\textsc{pdr}} (c' c_k \dots c_n \parallel F_0 \dots F_n)
        }{
            \models F_{k-1}(c') \wedge \tau(c', c_k)
        }
        \\[5pt]
        \infer[\textsc{(Conflict/sPDR)}]{
            (c_k \dots c_n \parallel F_0 \dots F_n)
            \longrightarrow_{\textsc{pdr}}
            (c_{k+1} \dots c_n \parallel F_0\dots F_{k-1} (F_{k} \wedge \psi) F_{k+1} \dots F_n)    
        }{
            \iota(y) \vee \big(\exists x. F_{k-1}(x) \wedge \tau(x,y))\big) \models \psi(y)
            \qquad
            F_{k-1}(x) \models \psi(x)
            \qquad
            \psi(x) \models x \neq c_k
        }
    \end{gather*}
    \vspace{-20pt}
    \caption{A \tkchanged{simplified version} of the transition rules for PDR in \cite{Hoder2012}; see Remark~\ref{rem:pdr:practical} for the differences.  We omit \textsc{(Unfold)}, \textsc{(Initialize)}, \textsc{(Valid)} and \textsc{(Model)}, as these rules are not a part of the refinement/interpolation process.  These rules corresponds the content described in Definition~\ref{def:pdr-whole-process}.}
    \label{fig:PDR}
\end{figure}

\begin{definition}[Simplified PDR]\label{def:pdr-whole-process}
    The algorithm sPDR is defined as follows.
    It starts from the \emph{initial configuration} \( (\epsilon \parallel \iota) \).
    The algorithm iteratively rewrites the configuration.
    If it reaches \( (\epsilon \parallel F_0 \dots F_n) \) with \( F_\ell(x) \models F_{\ell-1}(x) \) for some \( \ell \), then the program is safe and an invariant is \( F_\ell \).
    If it reaches \( (c_0 \dots c_n \parallel F_0 \dots F_n) \) with a counterexample at level \( 0 \), the program is unsafe.
    If it reaches \( (\epsilon \parallel F_0 \dots F_n) \) with \( F_n(x) \models \alpha \), then it moves to \( \epsilon \parallel F_0 \dots F_n \top) \).
    \qed
\end{definition}

\paragraph{sPDR vs.~\MCImpact{} with \IndPDR{}}
sPDR in \autoref{fig:PDR} is essentially the same as \MCImpact{} with \IndPDR{}.
Consider the relation \( \approx \) between configurations of \IndPDR{} and of sPDR defined by
\begin{equation*}
    \textstyle
    (c_k \dots c_n \parallel \varphi_0 \dots \varphi_n)
    \approx
    (c_k \dots c_n \parallel F_0 \dots F_n)
    \quad\mbox{iff}\quad
    \models F_j(x) \Leftrightarrow \bigvee_{0 \le i \le j} \varphi_i(x)
    \mbox{ for every \( 0 \le j \le n \)}
    % (c_k \dots c_n \parallel \varphi_0 (\varphi_0 \vee \varphi_1) \dots (\bigvee_{0 \le i \le k} \varphi_i) \dots (\bigvee_{0 \le i \le n} \varphi_i))
\end{equation*}
This relation \( \approx \) is a bisimilation.
\begin{theorem}\label{thm:PDR-MC}
    Suppose that \( (\vec{c} \parallel \vec{\varphi}) \approx (\vec{c} \parallel \vec{F}) \).
    If \( (\vec{c} \parallel \vec{\varphi}) \longrightarrow (\vec{c}' \parallel \vec{\varphi}') \), then \( (\vec{c} \parallel \vec{F}) \longrightarrow_{\textsc{pdr}} (\vec{c}' \parallel \vec{F}') \) and \( (\vec{c}' \parallel \vec{\varphi}') \approx (\vec{c}', \parallel \vec{F}') \) for some \((\vec{c}' \parallel \vec{F}')\).
    Conversely, if \( (\vec{c} \parallel \vec{F}) \longrightarrow_{\textsc{pdr}} (\vec{c}' \parallel \vec{F}') \), then \( (\vec{c} \parallel \vec{\varphi}) \longrightarrow (\vec{c}' \parallel \vec{\varphi}') \) and \( (\vec{c}' \parallel \vec{\varphi}') \approx (\vec{c}' \parallel \vec{F}') \) for some \((\vec{c}' \parallel \vec{\varphi}')\).
\end{theorem}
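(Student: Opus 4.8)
The plan is to prove that $\approx$ is a strong bisimulation by a case analysis on the six transition rules (three for $\longrightarrow$, three for $\longrightarrow_{\textsc{pdr}}$), after first checking that $\approx$ transports the two well-formedness disciplines into each other. Using the defining identity $F_j \Leftrightarrow \bigvee_{0\le i\le j}\varphi_i$ of $\approx$, I would show that an \IndPDR{} configuration satisfying the input condition (\ref{eq:pdr:input-condition}) is related exactly to an sPDR configuration satisfying (a)--(e): the base case gives $F_0 = \varphi_0 = \iota$, so (a); the facts $\varphi_i\models\alpha$ for $i<n$ give (c) because $F_i$ is a disjunction of such $\varphi$'s; and the closure $\varphi_i(x)\wedge\tau(x,y)\models\varphi_{i+1}(y)$ together with $\mathcal F(F_i)\models\bigvee_{i'\le i}\varphi_{i'+1}\models F_{i+1}$ gives the frame condition (b). The counterexample conditions (d),(e) transport because the shared sequence $\vec c$ satisfies the stronger shell-membership $\varphi_i(c_i)$ on the \IndPDR{} side and $\varphi_i\models F_i$.

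For the rule-wise matching I would proceed as follows. The \textsc{(Candidate)} case is immediate: since $\varphi_i\models\alpha$ for $i<n$, no disjunct below level $n$ can witness $\neg\alpha$, so $\models F_n(c)\wedge\neg\alpha(c)$ is equivalent to $\models\varphi_n(c)\wedge\neg\alpha(c)$; both rules add the same $c$ and leave all predicates fixed, so $\approx$ is preserved in either direction. For \textsc{(Decide)} I would translate ``$\varphi_{k-1}$ has a predecessor of $c_k$'' into ``$F_{k-1}$ has a predecessor of $c_k$'': the forward direction is trivial from $\varphi_{k-1}\models F_{k-1}$, and the backward direction uses the input-condition consequence that the image of each $\varphi_j$ lands in $\varphi_{j+1}$, together with the membership invariant $c_k\notin F_{k-1}$, to rule out predecessors coming from strictly lower shells. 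I would also reconcile the $\neg\iota(c_k)$ guard present only on the \IndPDR{} side, observing that on $\approx$-related well-formed configurations it is forced (e.g.\ a \textsc{(Candidate)}-introduced $c_n$ has $\neg\alpha(c_n)$, hence $\neg\iota(c_n)$ since $\iota\models\alpha$).

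The \textsc{(Conflict)} case is where the real work lies, and I expect it to be the main obstacle. Here \IndPDR{} replaces a single \emph{shell} $\varphi_k$ by $\varphi_k\wedge\psi$, which in the $\approx$-image perturbs \emph{every} cumulative frame $F_j$ with $j\ge k$, whereas sPDR replaces the single \emph{frame} $F_k$ by $F_k\wedge\psi'$, leaving $F_{k+1},\dots,F_n$ untouched. To match them I would take $\psi' := \psi\vee F_{k-1}$ (and conversely reuse the sPDR $\psi'$ as $\psi$): the distributive identity $(F_{k-1}\vee\varphi_k)\wedge(F_{k-1}\vee\psi)=F_{k-1}\vee(\varphi_k\wedge\psi)$ shows that the new level-$k$ frame agrees, $F_k\wedge\psi'=\bigvee_{i\le k}\varphi_i'$; the sPDR side condition $F_{k-1}\models\psi'$ holds by construction, $\mathcal F(F_{k-1})\models\psi'$ follows from $\varphi_{k-1}(x)\wedge\tau(x,y)\models\psi(y)$ plus $\varphi_j\wedge\tau\models\varphi_{j+1}\models F_{k-1}$ for $j<k-1$ (and $\iota=\varphi_0\models F_{k-1}$), and $\psi'\models x\neq c_k$ reduces to the membership invariant $c_k\notin F_{k-1}$. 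The genuinely delicate point is the agreement at levels \emph{above} $k$: one must show that strengthening $\varphi_k$ is \emph{frame-local}, i.e.\ that the removed states $\varphi_k\wedge\neg\psi$ are already absorbed by $F_{k-1}\vee\varphi_{k+1}$, so that $\bigvee_{i\le j}\varphi_i'=F_j$ for all $j>k$. I would isolate this (together with $c_k\notin F_{k-1}$) as an auxiliary invariant of reachable configurations and prove it is maintained by each rule, exploiting the increasing order in which the cascade of \textsc{(Conflict)}s strengthens shells and the maximal-conservativity discipline governing $\psi$. Establishing and propagating this invariant, rather than the arithmetic of relating $\psi$ and $\psi'$, is the crux of the proof.
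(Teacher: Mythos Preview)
Your plan is considerably more careful than the paper's own argument.  For \textsc{(Conflict)} the paper's proof is literally the single sentence ``It is easy to see that \textsc{(Conflict)} with interpolation $\psi$ coincides with \textsc{(Conflict/sPDR)} with the same interpolation $\psi$.''  It does not discuss the side-condition mismatch (your $\psi':=\psi\vee F_{k-1}$ is genuinely needed in the \IndPDR{}$\to$sPDR direction, since an \IndPDR{} interpolant need not satisfy the sPDR premise $F_{k-1}\models\psi$), nor the level-$>k$ cascade that you correctly single out as the crux: conjoining $\psi$ to the shell $\varphi_k$ can shrink every cumulative $F_j$ with $j\ge k$, whereas \textsc{(Conflict/sPDR)} touches only $F_k$.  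So you are not overlooking something the paper handles; you have put your finger on a point the paper elides.

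There is, however, a gap in your proposed remedy.  The theorem is stated for \emph{arbitrary} $\approx$-related pairs, and in that generality the level-$>k$ match can fail outright: take $\iota=\{0\}$, $\tau(x,y)\equiv(y=x{+}1)$, $\alpha\equiv(x\neq 6)$, $(\varphi_0,\varphi_1,\varphi_2)=(\{0\},\{1,5\},\{2,6\})$ and $(c_1,c_2)=(5,6)$.  This satisfies (\ref{eq:pdr:input-condition}) and the sPDR conditions (a)--(e), yet the \IndPDR{} step \textsc{(Conflict)} at $k=1$ with $\psi\equiv(x=1)$ drops $5$ from the cumulative $F_2$, while no single \textsc{(Conflict/sPDR)} step can alter $F_2$.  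A reachability invariant therefore proves at best a restricted version of the theorem (configurations reachable from $(\epsilon\parallel\iota,\top,\dots,\top)$), not the statement as written.  If you pursue that route, note two further issues: your candidate invariant ``$\varphi_k\wedge\neg\psi$ is absorbed by $F_{k-1}\vee\varphi_{k+1}$'' is phrased in terms of the not-yet-chosen $\psi$, so it must be recast as a static property of the configuration before you can hope to propagate it; and the appeal to ``maximal conservativity'' is misplaced here, since that principle constrains the overall refinement produced by a \emph{cascade} of \textsc{(Conflict)} steps, not the individual interpolant $\psi$ admissible in any single step.
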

\begin{proof}
    This is intuitively clear as each rule has a counterpart, but we need some care to establish the precise correspondence.

    \textsc{(Candidate/sPDR)} can be simulated by \textsc{(Candidate)}.
    To see this, we have to check that \( \models F_n(c) \wedge \neg \alpha(c) \) implies \( \models \varphi_n(c) \wedge \neg \alpha(c) \).
    This follows from the assumption \( F_n(x) \Leftrightarrow \bigvee_{0 \le i \le n} \varphi_i(x) \) and \( \varphi_i(x) \models \alpha(x) \) for \( i < n \).
    The conserve is trivial because \( \varphi_n(x) \models F_n(x) \).

    The correspondence of \textsc{(Decide/sPDR)} and \textsc{(Decide)} can be show similarly.

    It is easy to see that \textsc{(Conflict)} with interpolation \( \psi \) coincides with \textsc{(Conflict/sPDR)} with the same interpolation \( \psi \).
\end{proof}

\begin{remark}\label{rem:pdr:practical}
    sPDR defined by \autoref{fig:PDR} differs from that by Hoder and Bj{\o}rner~\cite{Hoder2012},
    which has an additional rule \textsc{(Induction/PDR)} and a slightly weaker condition for the interpolants as in \textsc{(Conflict/PDR)} below:
    \begin{gather*}
        \infer[\textsc{(Induction/PDR)}]{
            (c_k \dots c_n \parallel F_0 \dots F_n) \longrightarrow_{\textsc{pdr}} (c_k \dots c_n \parallel (F_0 \wedge \psi) \dots (F_{\ell} \wedge \psi) F_{\ell+1} \dots F_n)
        }{
            \iota(y) \vee \big(\exists x. (F_{\ell-1}(x) \wedge \psi(x)) \wedge \tau(x,y))\big) \models \psi(y)
        }
        \\[5pt]
        \infer[\textsc{(Conflict/PDR)}]{
            (c_k \dots c_n \parallel F_0 \dots F_n)
            \longrightarrow_{\textsc{pdr}}
            (c_{k+1} \dots c_n \parallel (F_0 \wedge \psi) \dots (F_{k} \wedge \psi) F_{k+1} \dots F_n)    
        }{
            \iota(y) \vee \big(\exists x. F_{k-1}(x) \wedge \tau(x,y))\big) \models \psi(y)
            \qquad
            \psi(x) \models x \neq c_k
        }
    \end{gather*}
    Notice that, in order to keep the monotonicity \( F_i(x) \models F_{i+1}(x) \), we have to conjoin \( \psi \) to \( F_j \) (\( 0 \le j < k \)) as well.
    This version of PDR coincides with an extension of \IndPDR{} by the following rule corresponding to \textsc{(Induction/PDR)}:
    \begin{equation*}
        \infer[\textsc{(Induction)}]{
            (c_k \dots c_n \parallel \varphi_0 \dots \varphi_n) \longrightarrow (c_k \dots c_n \parallel (\varphi_0 \wedge \psi) \dots (\varphi_{\ell} \wedge \psi) \varphi_{\ell+1} \dots \varphi_n)
        }{
            \iota(y) \models \psi(y)
            \qquad
            \exists x. (\varphi_i(x) \wedge \psi(x)) \wedge \tau(x,y) \models \psi(y)
            \mbox{ for \( i = 0,1,\dots,\ell-1 \)}
        }
    \end{equation*}
    See Appendix~\ref{sec:appx:practical-pdr} for the details.
    \qed
\end{remark}

% \begin{remark}
%     \label{rem:pdr_interpolants}
%     We discuss a limitation of the logical characterization of PDR.
%     The above-mentioned consequence of Theorem~\ref{thm:PDR-MC} is that PDR coincides with \impact{} with MC interpolation if \emph{the backend interpolating theorem prover returns an interpolant completely nondeterministically}.
%     This assumption is not true in practice.
%     It may be the case that the efficiency of PDR essentially relies on some properties of typical or existing provers and some details of PDR, which cannot be captured by Theorem~\ref{thm:PDR-MC}.
%     For example, Cimatti and Griggio~\cite[Section~5]{Cimatti2012} pointed out that an interpolant obtained by PDR tends to be a conjunction of many simple formulas and discussed the significance of this property.
%     %
%     \qed
% \end{remark}

\begin{remark}\label{rem:pdr:connection-to-impact}\label{rem:originality-of-maximal-conservativity}
    A similarity between PDR and McMillan's lazy abstraction with interpolants was already recognized as in \cite{Een2011,Cimatti2012}.
    They pointed out that a multi-step transition \( (\epsilon \parallel \varphi_0 \dots \varphi_{n-1} \top) \longrightarrow^* (\epsilon \parallel \varphi'_0 \dots \varphi'_{n-1} \varphi'_{n}) \) (with \( \varphi'_n(x) \models \alpha(x) \)) is indeed a calculation of an interpolant.
    Hence PDR ``can be seen as an instance of the lazy abstraction with interpolants algorithm of \cite{McMillan2006}, in which however interpolants are constructed in a very different way''~\cite{Cimatti2012}.
    Here we further investigate the connection, giving a complete characterization of interpolants given by PDR (Theorem~\ref{thm:PDR-MC}).
    To the best of our knowledge, this is the first declarative characterization of interpolants computed by PDR.

    The maximal conservativity heuristic and a na\"ive algorithm for computing a maximally conservative interpolant (Algorithm~\ref{alg:simple-mcr}) can be found in \cite{Vizel2014} (called min-suffix heuristic), which aims to combine the interpolation-based approach and PDR.
    This paper differs from the work in the following points.
    First this paper formally proves that \impact{} with the maximal conservativity heuristic is essentially the same as PDR, whereas \cite{Vizel2014} regards it as a heuristic for interpolation-based model-checkers and does not establish its formal connection to PDR.
    Second, this paper also analyzes other algorithms from the view point of maximal conservativity (see Section~\ref{sec:game-solving}).
    %For discussions on the relationship between \impact{} and PDR in the literature, see also Remark~\ref{rem:pdr:connection-to-impact}.
    %
    \qed
\end{remark}

\subsection{Application: Making PDR refutationally complete by model-based projection}
In contrast to the na\"ive algorithm \NaiveMCR{} (Algorithm~\ref{alg:simple-mcr}), which always terminates,
\IndPDR{} (Algorithm~\ref{alg:simple-pdr}) may diverge.
Therefore \MCImpact{} with \IndPDR{} does not have refutational completeness because,
even if an input program (or an input linear CHC system) has an error path of length \( n \), the backend solver \IndPDR{} may diverge during the construction of a proof of the safety of the program within \( (n-1) \) steps.
% The above example \( S \) is unsatisfiable since \( x = -1 \) is reachable at the third step, which violates \( 0 \le x \le 2 \), but the model checking procedure will not examine executions of length \( 3 \) when the PDR-style refinement algorithm fails to complete the proof of the safety in \( 2 \) steps.

This subsection develops a variant of PDR with refutational completeness, in order to demonstrate the usefulness of our logical analysis of PDR.
To the best of our knowledge, the algorithm developed here is the first PDR for software model-checking with refutational completeness.\footnote{
    An expert of software model-checking may point out that \recmc{}~\cite{Komuravelli2014,Komuravelli2016}, the algorithm of \spacer{}, is a variant of PDR with refutational completeness.
    Although a proof sketch of refutational completeness of \recmc{} is given in \cite{Komuravelli2016},
    we found a counterexample, which shall be discussed at the end of this subsection.
}

\IndPDR{} may diverge because the loop starting from line 4 in Algorithm~\ref{alg:simple-pdr} may continue indefinitely.
Let
\begin{equation*}
    \mathit{BadReachable}_n ~~:=~~ \{~ c \mid {}\models \exists x. \varphi_{n-1}(x) \wedge \tau(x,c) \wedge \neg \alpha(c) ~\}
\end{equation*}
be the set of bad states that is reachable from \( \varphi_{n-1} \) in one step.
In each loop, \IndPDR{} chooses a state \( M(y) = c \in \mathit{BadRechable}_n \) and tries to refine the immediate subproof \( \varphi_0,\dots,\varphi_{n-1} \) to ensure that \( c \) is in fact unreachable.
\( \mathit{BadReachable}_n \) decreases in each iteration, and the loop terminates when \( \mathit{BadReachable}_n \) becomes empty.
In the worst case, each loop removes only one element, namely the chosen element \( c \), from \( \mathit{BadReachable}_n \).
Hence, if \( \mathit{BadReachable}_n \) is infinite, infinite iteration of the loop may be required to make \( \mathit{BadReachable}_n \) empty.
This divergent behavior has been observed in \cite{Komuravelli2014} for a variant of PDR known as \emph{GPDR}~\cite{Hoder2012}.

From a logical point of view, this phenomenon can be explained as follows.
Recall that the aim of \IndPDR{} is to remove the quantification from \( \exists y. \tau(x,y) \wedge \neg \alpha(y) \) in \IndMCR{} (Algorithm~\ref{alg:rec-mcr}).
The idea is to replace the variable \( y \) with a concrete value, and the point is the logical equivalence
\begin{equation}
    \exists y. \tau(x,y) \wedge \neg \alpha(y)
    \qquad\Leftrightarrow\qquad
    \bigvee_{c} \big(\tau(x,c) \wedge \neg \alpha(c)\big)
    \label{eq:pdr:infinite-decomposition}
\end{equation}
where the right-hand-side is the infinite disjunction.
This allows us to decompose a quantified formula \( \exists y. \tau(x,y) \wedge \neg \alpha(y) \) into an infinite collection of quantifier-free formulas \(\tau(x,c) \wedge \neg \alpha(c)\).
So the essence of \IndPDR{} is a decomposition of a single hard query into infinitely many tractable queries, and the infinity causes divergence in the worst case.

\emph{Model-based projection}~\cite{Komuravelli2014} was introduced to overcome the problem.
\begin{definition}[Model-based projection \cite{Komuravelli2014}]
    Let \( \varphi(\vec{x},\vec{y}) \) be a quantifier-free formula.
    A function \( \mathit{MBP}(\varphi, \{\vec{y}\}, {-}) \) from models \( M \) of \( \varphi \) (i.e.~assignments for \( \vec{x} \) and \( \vec{y} \) such that \( M \models \varphi \)) to quantifier-free formulas \( \psi_M = \mathit{MBP}(\varphi, \{\vec{y}\}, M) \) with \( \mathrm{fv}(\psi_M) \subseteq \{\vec{x}\} \) is a \emph{model-based projection for \(\varphi\) with respect to \(\vec{y}\)} if it satisfies
    \begin{equation}
        \exists \vec{y}. \varphi(\vec{x}, \vec{y}) ~~\Leftrightarrow~~ \bigvee_{M \models \varphi}  \psi_M(\vec{x})
        \qquad\mbox{and}\qquad
        M \models \psi_M(\vec{x})
        \mbox{ for every model \(M\)}
        \label{eq:pdr:mbp-def}
    \end{equation}
    and furthermore the image \( \{ \psi_M \mid M \models \varphi \} \) is finite.
    We assume that a model-based projection exists for every pair of \( \varphi \) and \( \vec{y} \) in the constraint language.
    %Choosing a model-based projection \( M \mapsto \psi_M \) for each pair of \( \varphi \) and \( \vec{y} \), we define \( \mathit{MBP}(\varphi, \{ \vec{y} \}, M) := \psi_M \). 
    %
    \qed
\end{definition}
The condition (\ref{eq:pdr:mbp-def}) is an abstraction of the property of the mapping \( \mathit{Subst}(\varphi(\vec{x},\vec{y}), \{\vec{y}\}, M) := \varphi(\vec{x}, M(\vec{y})) \), so model-based projection is a generalization of the decomponition in \autoref{eq:pdr:infinite-decomposition}.
A significant difference is the finiteness of the image: the disjunction \( \bigvee_{M \models \varphi}  \psi_M(\vec{x}) \) is actually a finite disjunction.

By using our inductive description of PDR, it is surprisingly easy to incorporate model-based projection into \IndPDR{}: we simply replace the substitution \( \tau(x,M(y)) \wedge \neg \alpha(M(y)) \) of the value \( M(y) \) for \( y \) (line 6 in Algorithm~\ref{alg:simple-pdr}) with model-based projection \( \mathit{MBP}(\tau(x,y) \wedge \neg \alpha(y), \{y\}, M) \).
The resulting algorithm, which we call \IndPDRmbp{}, is shown in Algorithm~\ref{alg:simple-pdr-mbp}.
% We emphasise that
% \begin{itemize}
%     \item this is the first PDR-style algorithm that always terminates, and
%     \item without using our inductive and declarative characterization of PDR, the development of such an algorithm is far from trivial (see the discussion at the end of this section).
% \end{itemize}

\begin{algorithm}[t]
    \caption{~~Inductive PDR with model-based projection}\label{alg:simple-pdr-mbp}
    \begin{algorithmic}[1]
    \Require{Predicates \( \iota(x), \tau(x,y), \alpha(x) \) defining linear CHCs}    
    \Require{A valid partial proof $ \varphi_0,\dots,\varphi_n $ satisfying (\ref{eq:pdr:input-condition})}
    \Ensure{Maximally conservative refinement $ \varphi_0',\dots,\varphi_n'$} \Comment{\textsc{RecPDR-MBP} may fail}
    \Function{\IndPDRmbp}{$\iota,\tau,\alpha;~~ \varphi_0,\dots,\varphi_n$}
    \State {\textbf{if} \( \models \exists x. \iota(x) \wedge \neg \alpha(x) \) \textbf{then} \textbf{fail}}
    \State {\textbf{if} \( \varphi_n(x) \models \alpha(x) \) \textbf{then} \textbf{return} $(\varphi_0,\dots,\varphi_n)$}
    \While{$ \models \exists x. \exists y. \varphi_{n-1}(x) \wedge \tau(x,y) \wedge \neg \alpha(y)$}
        \State \textbf{let} $M$ \textbf{be} an assignment such that $ M \models \varphi_{n-1}(x) \wedge \tau(x,y) \wedge \neg \alpha(y) $
        \State $ \gamma(x) := \mathit{MBP}(\tau(x,y) \wedge \neg \alpha(y), \{y\}, M) $
        \State $ (\varphi_0,\dots,\varphi_{n-1}) := \IndPDRmbp{}(\iota,\tau,\neg\gamma;~~ \varphi_0,\dots,\varphi_{n-1})$
    \EndWhile
    %\\ \Comment{\( \exists y.\tau(x,y) \wedge \neg \alpha(y) \) is regarded as a predicate over \( x \).  This recursive call may fail}
    \State{} \Comment{\( \exists x. \varphi_{n-1}(x) \wedge \tau(x,y) \models \varphi_n(y) \wedge \alpha(y) \) holds}
    \State $\varphi_n(y) := \mathit{interpolation}\big(\varphi_{n-1}(x) \wedge \tau(x,y),~~~~ \varphi_n(y) \wedge \alpha(y) \big)$
    \Comment{Nondeterministic}
    \State \textbf{return} $\vec{\varphi}$
    \EndFunction
    \end{algorithmic}
\end{algorithm}

\begin{theorem}\label{thm:pdr-mbp-termination}
    \IndPDRmbp{} always terminates.
\end{theorem}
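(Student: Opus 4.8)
The plan is to prove the claim by induction on $n$, the index of the topmost frame $\varphi_n$, using that the recursive call in line~7 always operates on the strictly shorter partial proof $\varphi_0,\dots,\varphi_{n-1}$. For the base case $n = 0$ the partial proof is the single formula $\varphi_0 = \iota$, and since $\iota \models \alpha$ holds exactly when $\not\models \exists x.\,\iota(x)\wedge\neg\alpha(x)$, precisely one of the guards in lines~2 and~3 fires; hence the loop of lines~4--8 is never entered and the call terminates immediately (by failing or by returning).

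For the inductive step, the inductive hypothesis gives that every recursive call $\IndPDRmbp{}(\iota,\tau,\neg\gamma;\;\varphi_0,\dots,\varphi_{n-1})$ terminates, since each acts on a sequence of length $n$. As the loop body performs exactly one such call, plus theory queries discharged by the external solver, it remains only to bound the number of iterations of the \textbf{while} loop.

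The crux is a progress argument based on the finiteness clause of model-based projection. Let $\Psi := \{\,\mathit{MBP}(\tau(x,y)\wedge\neg\alpha(y),\{y\},M)\mid M\models\tau(x,y)\wedge\neg\alpha(y)\,\}$, which is finite. I claim that each iteration installs, as its $\gamma$ of line~6, an element of $\Psi$ that is logically inequivalent to every $\gamma$ used before, so the loop runs at most $|\Psi|$ times. Fix an iteration with chosen model $M$, so $M\models\varphi_{n-1}(x)\wedge\tau(x,y)\wedge\neg\alpha(y)$ and $\gamma\in\Psi$. By the model-based projection property $M\models\gamma(x)$, the value $c:=M(x)$ satisfies $\gamma(c)$. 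After line~7 the reassigned tuple is a refinement satisfying $\neg\gamma$, so the new $\varphi_{n-1}$ obeys $\varphi_{n-1}(x)\models\neg\gamma(x)$; and because refinements only strengthen components, every exclusion of this form produced in an earlier iteration is preserved. Hence in any later iteration the newly chosen model $M'$ satisfies the current $\varphi_{n-1}$ at $M'(x)$ and therefore $\neg\gamma(M'(x))$ for every earlier $\gamma$, while $M'\models\gamma_{\mathrm{new}}(M'(x))$ for its own projection; this forces $\gamma_{\mathrm{new}}$ to be inequivalent to all earlier $\gamma$, so each iteration consumes a fresh member of the finite set $\Psi$.

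The step I expect to be the main obstacle is precisely this progress argument, since it must interlock three facts: the projection guarantee $M\models\gamma$, the guarantee that line~7 returns a refinement genuinely satisfying $\neg\gamma$ (rather than an arbitrary strengthening), and the monotonicity of refinements, so that the exclusions accumulate rather than being overwritten. Making the second of these precise --- tracing through the refinement constraints to confirm that the returned sequence really forces $\varphi_{n-1}\models\neg\gamma$ --- is the delicate point; once it is in place, the finiteness of $\Psi$ closes the argument.
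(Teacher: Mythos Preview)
Your proposal is correct and follows essentially the same approach as the paper: induction on $n$, with the base case handled by the dichotomy in lines~2--3, the recursive call handled by the induction hypothesis, and the loop bounded by showing that the $\gamma$'s produced in successive iterations are pairwise distinct elements of the finite MBP image (using $M\models\gamma$ together with the fact that the refinement returned in line~7 forces the new $\varphi_{n-1}\models\neg\gamma$ and that refinements are monotone). The paper phrases the loop argument by contradiction rather than as an explicit bound $|\Psi|$, but the content is identical.
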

\begin{proof}
    We prove the claim by induction on \( n \).
    If \( n = 0 \), then \( \varphi_0 = \varphi_n = \iota \) by (\ref{eq:pdr:input-condition}).
    Since
    \begin{equation*}
        \neg (\exists x. \iota(x) \wedge \neg \alpha(x))
        \quad\mbox{iff}\quad
        \forall x. \iota(x) \Rightarrow \alpha(x)
        \quad\mbox{iff}\quad
        \forall x. \varphi_n(x) \Rightarrow \alpha(x),
    \end{equation*}
    \textsc{RecPDR-MBP} fails in line 2 or returns a value in line 3.

    Assume that \( n > 0 \).
    By the induction hypothesis, the recursive call in line 7 terminates.
    It suffices to show the termination of the loop starting from line 4.

    Assume for contradiction that the loop continues indefinitely.
    For each \( i = 1,2,\dots \), let \( M_i \) and \( \gamma_i \) be \( M \) and \( \alpha' \) in the \(i\)-th loop and \( \psi_i \) be \( \varphi_{n-1} \) at the end of the \( i \)-th loop.
    By construction, we have
    \begin{equation*}
        M_i \models \psi_{i-1}(x) \wedge \tau(x,y) \wedge \neg \alpha(x)
        \quad\mbox{and}\quad
        M_i \models \gamma_i(x)
    \end{equation*}
    for each \( i = 1,2,\dots \).
    By the definition of refinement,
    \begin{equation*}
        \psi_{i}(x) \models \neg\gamma_i(x)
        \quad\mbox{and}\quad
        \psi_{i}(x) \models \psi_{i-1}(x)
    \end{equation*}
    where \( \psi_0 \) is \( \varphi_{n-1} \) before the loop.
    Suppose \( i < j \).
    Since \( M_j \models \psi_{j-1}(x) \) and \( \forall k. \big[\psi_{k}(x) \models \psi_{k-1}(x)\big] \), we have \( M_j \models \psi_i(x) \).
    Hence \( M_j \models \neg \gamma_i(x) \) since \( \psi_i(x) \models \neg \gamma_i(x) \).
    As \( M_i \models \gamma_i(x) \), we have \( \gamma_i(x) \neq \gamma_j(x) \) if \( i < j \).
    This contradicts the finiteness assumption of MBP since \( \gamma_i(x) = \mathit{MBP}(\tau(x,y)\wedge\neg\alpha(y), \{y\}, M_i) \) and the formula \( \tau(x,y)\wedge\neg\alpha(y) \) is a constant in the loop.
\end{proof}

It is not difficult to give an abstract transition system corresponding to \IndPDRmbp{} (Algorithm~\ref{alg:simple-pdr-mbp}).
It is shown in \autoref{fig:SPDR-MBP}.

\begin{figure}[t]
    \begin{gather*}
        \infer[\textsc{(Candidate/mbp)}]{
            (\epsilon \parallel \varphi_0 \dots \varphi_n) \longrightarrow (\alpha \parallel \varphi_0 \dots \varphi_n)
        }{
            % M \models \varphi_n(x) \wedge \neg \alpha(x)
            % \qquad
            % \gamma(x) = \mathit{MBP}(\neg \alpha(x), \{\}, M)
        }
        \\[5pt]
        \infer[\textsc{(Decide/mbp)}]{
            (\gamma_k\,\dots\,\gamma_n \parallel \varphi_0 \dots \varphi_n) \longrightarrow (\gamma_{k-1} \gamma_k\,\dots \gamma_n \parallel \varphi_0 \dots \varphi_n)
        }{
            \genfrac{}{}{0pt}{0}{
                \models \neg \exists x. \iota(x) \wedge \gamma_{k}(x)
                \quad
                M \models \varphi_{k-1}(x) \wedge \tau(x,y) \wedge \gamma_k(y)
            }{
                \gamma_{k-1}(x) = \mathit{MBP}(\tau(x,y) \wedge \gamma_k(y), \{y\}, M)
            }
        }
        \\[5pt]
        % \infer[\textsc{(Fail/mbp)}]{
        %     (\gamma_k \dots \gamma_n \parallel \varphi_0 \dots \varphi_n) \longrightarrow \mathbf{fail}
        % }{
        %     \models \exists x. \iota(x) \wedge \gamma_k(x)
        % }
        % \\[5pt]
        \infer[\textsc{(Conflict/mbp)}]{
            (\gamma_{k}\,\dots\,\gamma_n \parallel \varphi_0 \dots \varphi_n) \longrightarrow (\gamma_{k+1} \,\dots\, \gamma_n \parallel \varphi_0 \dots \varphi_{k-1} (\varphi_k \wedge \psi) \varphi_{k+1} \dots \varphi_n)
        }{
            \varphi_{k-1}(x) \wedge \tau(x,y) \models \psi(y)
            \qquad
            \psi(x) \models \neg\gamma_k(x)
        }
    \end{gather*}
    \vspace{-20pt}
    \caption{The transition rules for PDR with model-based projection.}
    \label{fig:SPDR-MBP}
\end{figure}

\subsubsection*{Subtlety of model-based projection}
Algorithms using MBP should be carefully designed.
For example, let \IndPDRmbp{}' be the algorithm obtained by replacing line 6 of \IndPDRmbp{} with
\begin{equation*}
    \gamma(x) ~~:=~~
    \mathit{MBP}(\varphi_{n-1}(x) \wedge \tau(x,y) \wedge \neg \alpha(y),~ \{y\},~ M).
\end{equation*}
This small change breaks the termination property of the algorithm.
This is because \( \varphi_{k-1}(x) \) is changed in each iteration of the loop; hence a different iteration calculates model-based projection of different formulas.
In this case, the finiteness property of model-based projection does not help, since the finiteness is ensured only if model-based projection is applied to a single formula.

We give a small toy example to prove that \IndPDRmbp{}' may diverge.
Let us consider the following unsatisfiable linear CHC system
\begin{equation*}
    S
    ~~=~~
    \{~~ x = 0 \Rightarrow P(x), ~~ P(x) \wedge (y = x+1 \vee y = 1-2x) \Rightarrow P(y), ~~ P(x) \Rightarrow x \le 2 ~~\},
\end{equation*}
so \( \iota(x) = (x=0) \), \( \tau(x,y)=(y=x+1\vee y=1-2x) \) and \( \alpha(x)=(x\le 2) \).
Let \( \varphi_0 = \iota \), \( \varphi_1(x) = (x \le 1) \) and \( \varphi_2(x) = \top \).
Then \( (\varphi_0,\varphi_1,\varphi_2) \) has a refinement, e.g.~\( (x=0,x=1,x \le 2) \).
Let \( \mathit{MBP} \) be an arbitrary model-based projection method.
We define \( \mathit{MBP}' \) by
\begin{equation*}
    \mathit{MBP}'(\psi(x,y), \{y\}, M)
    ~~:=~~
    \begin{cases}
        x = m & \mbox{if \( M(x) = m = \max \{ n < 0 \mid {} \models \exists y. \psi(n,y)\} \)}
        \\
        \mathit{MBP}(\psi(x,y), \{y\},M) & \mbox{otherwise.}
    \end{cases}
\end{equation*}
We also assume that
\begin{equation*}
    \mathit{interpolation}(\psi(x,y), \vartheta(y)) ~~:=~~ \vartheta(y).
\end{equation*}
Then \( \IndPDRmbp{}'(\iota,\tau,\alpha;~~ \varphi_0,\varphi_1,\varphi_2) \) diverges, provided that the model \( M_i \) in the \( i \)-th iteration, \( i = 1,2,\dots \), of the loop is \( (M_i(x), M_i(y)) = (-i, 1+2i) \).
Then \( \varphi_1 \) at the end of \( i \)-th iteration is \( (x \le 1 \wedge x \neq -1 \wedge \dots \wedge x \neq -i) \), and the loop never terminates.

\tkchanged{\recmc{}~\cite{Komuravelli2014,Komuravelli2016}, which is the algorithm used in the state-of-the-art software model-checker \spacer{}}, has the same problem as \IndPDRmbp{}'.
It applies model-based projection to a formula containing the current over- and/or under-approximations, which vary over time; see \textsc{(Query)} rule in \cite[Fig.~7]{Komuravelli2016} (where the existential quantifier in the definition of \( \psi \) shall be replaced with mobel-based projection in a later section).
This observation applies to other formalizations of \spacer{}, e.g.~\cite{Komuravelli2015} (that actually behaves slightly differently).

We give a concrete counterexample.
Let \( S' \) be a non-linear CHC system, given by
\begin{equation*}
    % S' ~~=~~
    \{~~ 
        \iota(x) \Rightarrow Q(0,x),\quad
        \top \Rightarrow Q(1,x),\quad
        Q(0,x) \wedge Q(1,x) \wedge \tau(x,y) \Rightarrow Q(0,y),\quad
        Q(0,x) \Rightarrow \alpha(x)
    ~~\}.
\end{equation*}
% \begin{equation*}
%     S'
%     ~~=~~
%     \left\{~~ 
%     \begin{array}{rcl}
%         \iota(x) &\Rightarrow& Q(0,x) \\
%         \top &\Rightarrow& Q(1,x) \\
%         Q(0,x) \wedge Q(1,x) \wedge \tau(x,y) &\Rightarrow& Q(0,y)
%         \\
%         Q(0,x) &\Rightarrow& \alpha(x)
%     \end{array}
%     \right\}.
% \end{equation*}
Since \( Q(1,x) = \top \), one can remove \( Q(1,x) \) from \(S'\) by substituting \( \top \) for it.
The resulting constraint set is essentially the same as \( S \), obtained by replacing \( P(x) \) in \( S \) with \( Q(0,x) \).
A badly-behaved model-based projection \( \mathit{MBP}'' \) is similar to \( \mathit{MBP}' \): it basically behaves as \( \mathit{MBP} \) but returns \( x = m \) if \( M(x) = m \) is either the maximum or minimum in \( \{ n < 0 \mid {}\models \exists y. \psi(n,y) \} \).

This divergent behavior is notable because \tkchanged{a journal paper~\cite{Komuravelli2016} proves that \recmc{} always terminates without assuming any assumptions on the model-based projection process and the interpolating theorem prover}.
This shows the nontriviality and subtlety of termination proofs of algorithms with model-based projection.

% the model-based projection varies 
% We call the new rule \textsc{(SPDR-MBP-Decide')}.
% Then \autoref{fig:SPDR-MBP} + \textsc{(SPDR-MBP-Decide')} has an infinite rewriting sequence.
% The point of the proof of \autoref{thm:pdr-mbp-termination} is that the infinite family \( (\gamma_k^{(i)})_i\) of pairwise distinct formulas is obtained by MBP of a single formula, namely \( \tau(x,y) \wedge \gamma_{k+1}(y) \), which contradicts the finiteness assumption of MBP.
% If \( \gamma^{(i)}_k \) were obtained by \textsc{(SPDR-MBP-Decide')}, i.e.~\( \gamma^{(i)}_k = \mathit{MBP}(\varphi^{(i-1)}_{k}(x) \wedge \tau(x,y) \wedge \gamma_{k+1}(y), \{y\}, M_i) \), then \( \gamma^{(i)}_k \) and \( \gamma^{(j)}_k \) (\( i \neq j \)) are MBPs of different formulas; in this case, \( \gamma^{(i)}_k \neq \gamma^{(j)}_k \) for every \( i \neq j \) does not lead to a contradiction.
% We emphasise that the image of MBP is ensured to be finite only if it is applied to a single formula (or finitely many formulas).
% As we shall discuss in Section~\ref{sec:real-spacer}, the state-of-the-art algorithm \spacer{} has this problem and thus it does not have refutational completeness.

\section{Maximal Conservativity for Infinite Game Solving}\label{sec:game-solving}
The characterization of PDR by maximal conservativity reveals an unexpected connection between PDR and an efficient procedure for game solving~\cite{Farzan2018}.

Farzan and Kincaid~\cite{Farzan2018} proposed a procedure for game solving over infinite graphs of which rules are defined by linear arithmetic.
Assume that a state is represented by a tuple \( \vec{s} = (s_1, \dots, s_n) \) of data of the logic, as well as actions \( \vec{a}_0 \) and \( \vec{a}_1 \) of Players 0 and 1, respectively.
Players alternatively choose actions, starting from Player 0.
The transition of the game is a partial function represented by \( \tau  \): \( \tau(\vec{s}, \vec{a}_0, \vec{a}_1, \vec{t}) \) is true if and only if \( \vec{t} \) is the state after playing \( \vec{a}_0 \) and then \( \vec{a}_1 \).
Player 0 wins the game if the game does not get stuck.
We are interested in whether Player 0 wins the game.
%\hu{has a winning strategy?}.

Their approach is fairly naturally understood as a cyclic proof search, although their paper~\cite{Farzan2018} does not mention the connection.
The winning region of Player 0 is characterized by the greatest solution of
\begin{equation}
    W(\vec{s})
    ~~\Leftrightarrow~~
    \exists \vec{a}_0. \forall \vec{a}_1. \exists \vec{t}. \tau(\vec{s}, \vec{a}_0, \vec{a}_1, \vec{t}) \wedge W(\vec{t}).
    \label{eq:winning-region}
\end{equation}
The definition says that \( \vec{s} \) is winning for Player 0 if there exists an appropriate action \( \vec{a}_0 \) of Player 0 such that, for every action \( \vec{a}_1 \) of Player 1, the next state \( \vec{t} \) exists and it is a winning position of Player 0.
So the game solving is reduced to the validity of \( \iota(\vec{s}) \vdash (\nu W)(\vec{s}) \).

Their procedure can be described as a proof search strategy for sequents of the form \( \varphi(\vec{s}) \vdash (\nu W)(\vec{s}) \).
Application of \textsc{($\nu$-R)} yields a partial proof with an open sequent \( \varphi(\vec{s}) \vdash \delta[\nu W](\vec{s}) \), where \( \delta \) is the right-hand-side of \autoref{eq:winning-region}.
A significant difference from software model-checking is found here: since \( \delta \) contains quantifiers, it is not easy to transform this sequent into \( \psi(\vec{t}) \vdash (\nu W)(\vec{t}) \).
Their idea is to use a proof of a fixed-point-free sequent \( \varphi(\vec{s}) \vdash \delta[\top](\vec{s}) \), from which one can easily obtain a partial proof by replacing \( \top \) with \( (\nu W) \).
To (dis)prove the sequent \( \varphi(\vec{s}) \vdash \delta[\top](\vec{s}) \), they used an external solver, which itself is developed in the first half of their paper~\cite{Farzan2018} using \cite{Farzan2016}.

What is notable is a way to fix the partial proof when \( \varphi(\vec{s}) \vdash \delta[\top](\vec{s}) \) is false.
Recall that \( \varphi(\vec{s}) \vdash (\nu W)(\vec{s}) \) is an open sequent of the current partial proof, which is a leaf.
Their procedure tries to fix the partial proof, by refining a subtree containing this open sequent.
Their procedure first seeks the nearest ancestor of the open sequent \( \varphi(\vec{s}) \vdash (\nu W)(\vec{s}) \) that uses \textsc{($\nu$-R)} rule in the current partial proof.
Let \( \varphi'(\vec{s}') \vdash \delta[\nu W](\vec{s}') \) be its premise.
Then the procedure check the validity of \( \varphi'(\vec{s}') \vdash \delta[\delta[\top]](\vec{s}') \), the two-folded expansion of \( \nu W \).
If the sequent is valid, it replaces the subtree above the ancestor by using the proof; if the sequent is invalid, it seeks the next ancestor and tries to fix the subtree above it.
This is the maximally conservative policy, and their procedure for game solving belongs to the PDR family in this sense.

We conclude with two consequences.
First the notion of maximal conservativity is a useful characterization of PDR that significantly extends the scope of PDR.
Second the maximal conservativity would be useful for implementing an efficient prover: at least, it works well for sequents corresponding to game solving to some extent, as confirmed by the experiments in \cite{Farzan2018}. 

\section{Related work}
\label{sec:related}

We have demonstrated that not only well-known software model checking algorithms, including symbolic execution, bounded model checking~\cite{Biere1999}, %$k$-induction~\cite{Moura2003,Donaldson2011},
predicate abstraction~\cite{Ball2001}, lazy abstraction~\cite{Henzinger2002,Henzinger2004,McMillan2006}, and PDR~\cite{Hoder2012,Cimatti2012}, but also an efficient game solving algorithm~\cite{Farzan2018} can be seen as cyclic-proof search.  This section discusses other related proof search and software model-checking algorithms.

\subsection{Proof search in cyclic and Martin-L\"of-style proof systems}
%\cite{Brotherston2011a} is a journal version of \cite{Brotherston2005,Brotherston2007}?
%meta theorems: \cite{Simpson2017,Berardi2017,Berardi2017a}
%tool: Cyclist: http://www.cyclist-prover.org/
%\cyclist{}~\cite{Brotherston2012}
%SL entailment~\cite{Brotherston2011,Brotherston2012}
%SL satisfiability checking~\cite{Brotherston2014,Brotherston2014a}
%SL model checking~\cite{Brotherston2016}
%SL disprove~\cite{Brotherston2015}
%safety is only discussed in his slides and Cyclist web page?
%liveness prove~\cite{Brotherston2008,Tellez2020}
%safety/liveness abduction~\cite{Brotherston2014b}

%restrict the inductive heap predicates to certain classes such as: predicates whose syntax and semantics are defined beforehand \cite{Berdine2004,Berdine2005,Bozga2010,NavarroPerez2011,NavarroPerez2013,Piskac2013,Piskac2014}
%predicates describing variants of linked lists\cite{Enea2014}
%predicates satisfying a particular bounded tree width property \cite{Iosif2013,Iosif2014}
%
%Slide, \cite{Iosif2013,Iosif2014}
%SLS, lemma discovery \cite{Ta2017}
%Sleek, unfold-and-match for SL? entailment: \cite{Chin2012}
%structural induction for SL entailment: \cite{Chu2015}
%Songbird, mutual induction for SL entailment: \cite{Ta2016}

Proof search in cyclic~\cite{Sprenger2003,Brotherston2011a} and Martin-L\"of-style~\cite{MartinLoef1971} proof systems finds wide applications in safety/liveness program verification
%~\cite{Brotherston2008,Brotherston2012,Tellez2020,Qiu2013,Unno2017b}
and entailment checking in first-order and separation logics with inductive predicates.
%~\cite{Brotherston2011,Brotherston2012,Chin2012,Iosif2013,Chu2015,Ta2016,Ta2017}

\paragraph{Program verification.} Brotherston et al.~\cite{Brotherston2008,Brotherston2012,Tellez2020} gave cyclic proof systems for Hoare logic (with the separation logic as the underlying assertion language).
They mentioned a certain part of a proof search corresponds to symbolic execution.
However, they did not establish a connection to modern software model-checkers, of which the main challenge is to find an invariant.
\citet{Unno2017b} presented an inductive proof system tailored to CHC solving and applied it to relational verification.
The search algorithms presented in these studies correspond to bounded model-checking with covering, and the cut-rules (i.e., \textsc{(Cons)} rule in \cite{Tellez2020} and \textsc{(Apply$\bot$)}/\textsc{(ApplyP)} rules in \cite{Unno2017b}) are used only when one needs to check whether an open leaf node is covered.
Researchers have extended SMT solvers to efficiently handle recursive functions and applied them to verification of programs that manipulate structured data~\cite{Suter2010,Suter2011a,Qiu2013,Reynolds2015}.
These work however neither show a connection to modern software model checker nor discuss heuristics to find a good cut-formula using software model checking techniques.

\paragraph{Entailment checking.}
%proof search for cyclic separation logic (not for Hoare triples but for implication of two formulas)
Many papers are devoted to give (efficient) automated theorem provers for entailment problems with inductive predicates~\cite{Brotherston2005,Brotherston2011,Chin2012,Iosif2013,Chu2015,Ta2016,Ta2017} (in particular, of separation logic) and to find decidable fragments of the entailment problems~\cite{Berdine2004,Le2017}.  However, to our knowledge, there is no work that shows connection to modern software model checker.
An interesting question is whether these developments and ideas are applicable to software model-checking.  For example, global trace condition for cyclic proofs gives us a more flexible covering criterion than the na\"ive one used in most software model-checkers.
%inductive theorem proving \cite{Bundy2001}
Also, lemma discovery techniques developed for entailment checking~\cite{Ta2016,Enea2015,Ta2017} or general purpose inductive theorem proving~\cite{Bundy2001} can be useful in software model checking.
Conversely, it is interesting to investigate if Craig interpolation and other heuristics for finding a good cut-formula that have been studied in the software model checking community can be effectively applied as lemma discovery heuristics to inductive theorem proving.

\subsection{Software model checking}

\paragraph{Unified framework.}
Beyer et al. have been developing a configurable software model checker \cpa{}~\cite{Beyer2007b,Beyer2008a,Beyer2011}, which implements various software model checking algorithms~\cite{Beyer2012,Beyer2018,Beyer2020} in a unified framework in a configurable manner.  Though our logical foundation also provides a unified framework, the level of abstraction is significantly different: we reconstruct well-known algorithms from a few simple and \emph{declarative principles}, whereas \cpa{} achieves the reconstruction via \emph{modular design and implementation} of software model checking components.

\paragraph{Constraint logic programming.}
Constraint logic programming (CLP) has been used as a logical foundation of software model checking~\cite{Flanagan2004,Podelski2007a,Bjorner2015a}.  They use CLP to model the target specification as well as the concrete or abstract execution semantics of the target program and then invoke a CLP solver to perform an actual proof-search such as symbolic execution, predicate abstraction, and abstraction refinement.  By contrast, the logical foundation proposed in this paper aims at analyzing and comparing existing and unseen software model checking algorithms as different proof-search strategies in the same cyclic proof system.  In other words, we provide a logical foundation for not only \emph{modeling} but also \emph{solving} safety verification problems.

%\paragraph{Model-checkers (and CHC solvers) and other tools which cannot be captured by our framework.}

\paragraph{Interpolating theorem provers.}
We presented declarative characterizations of interpolation-based frame refinement, where we regarded interpolating theorem provers as a black box.  Their internal states and syntactic features (e.g., size and shape) of their outputs cannot be captured by the current cyclic-proof framework.  We plan to investigate the following research questions: Is it possible to see the internal states as partial proofs?  Is it possible to give a declarative characterization for practically ``good'' interpolants: those returned by PDR
%(recall \autoref{rem:pdr_interpolants}) 
and beautiful interpolants~\cite{Albarghouthi2013}.
%
%On sharing cut-formulas:
%\begin{itemize}
%    \item tree vs. DAG interpolation~\cite{Rummer2013,Rummer2013a} (acyclic CHC solver)
%    \item Maximal sharing = CHCs (or equivalently directly computing inductive invariant) can be explained by cut-formula sharing.
%\end{itemize}

\paragraph{Other approaches to invariant synthesis.}
Other approaches include constraint-based one~\cite{Colon2003,Sankaranarayanan2004a} that uses \(\exists\forall\)-formula solvers and CEGIS~\cite{Solar-Lezama2006} that uses inductive synthesizers~\cite{Sharma2013,Garg2016}.
%
%example annotations to 
The cyclic-proof framework needs further extensions to capture their internal states: for instance, for the latter we need to annotate each judgement in a partial proof with examples that are iteratively collected by the CEGIS procedure.
% in a counterexample-guided manner 

\section{Conclusion and Future Work}
\label{sec:conc}

This paper establishes a tight connection between software model-checkers and cyclic-proof search strategies.
It is worth emphasizing that our connection relates internal states of modern, sophisticated software model-checkers and partial proofs in the cyclic proof system.
The logical viewpoint is useful for understanding model-checking procedures.
A rich structure of (or strong constraints on) partial proofs
% explains the conditions on internal states of software model-checkers.
% It
allows us to reconstruct important model-checking procedures from a few simple principles: for example, PDR is a proof seach strategy following Principles~\ref{principle:shape-of-goals}, \ref{principle:optimistic-cut}, \ref{principle:refinement} and \ref{principle:maximally-conservative}.
Furthermore the logical characterization of PDR significantly extends the scope of application, as discussed in \autoref{sec:game-solving}.

The most important future work is an empirical study.
An interesting question is about the efficiency of an alternative implementation of PDR, in which maximally conservative interpolants are computed in a different way.
If it is comparable to the original PDR, the development of an efficient maximally-conservatively interpolating theorem prover would be of practical interest.

Future work in another direction is a study of model-checking procedures that are not covered by this paper, such as GPDR~\cite{Hoder2012} and \textsc{Spacer}~\cite{Komuravelli2014} for safety verification of programs with first-order functions.
This class of the verification problem is out of the scope of our framework: although it has an inductive characterization, it does not have the dual, coinductive characterization.

We are also interested in the development of new model-checkers using logical ideas.
For example, the safety verification of programs with functions has an inductive characterization \( (\mu R)(x) \vdash \alpha(x) \),
whereas it has no coinductive characterisation.
The logical analysis of the paper suggests that backward execution looks more natural (cf.~Section~\ref{sec:backward-symbolic-execution}), but empirical evaluation of this suggestion is left for future work.

\begin{acks}                            %% acks environment is optional
                                        %% contents suppressed with 'anonymous'
  %% Commands \grantsponsor{<sponsorID>}{<name>}{<url>} and
  %% \grantnum[<url>]{<sponsorID>}{<number>} should be used to
  %% acknowledge financial support and will be used by metadata
  %% extraction tools.
  % This material is based upon work supported by the
  % \grantsponsor{GS100000001}{National Science
  %   Foundation}{http://dx.doi.org/10.13039/100000001} under Grant
  % No.~\grantnum{GS100000001}{nnnnnnn} and Grant
  % No.~\grantnum{GS100000001}{mmmmmmm}.  Any opinions, findings, and
  % conclusions or recommendations expressed in this material are those
  % of the author and do not necessarily reflect the views of the
  % National Science Foundation.
  This work was supported by \grantsponsor{JST}{JST}{} ERATO HASUO Metamathematics for Systems Design Project (No. \grantnum{JST}{JPMJER1603}) and \grantsponsor{JSPS}{JSPS}{} KAKENHI Grant Numbers \grantnum{JSPS}{JP20H05703}, \grantnum{JSPS}{JP19K22842}, \grantnum{JSPS}{JP20H04162}, \grantnum{JSPS}{JP17H01720}, and \grantnum{JSPS}{JP19H04084}.
\end{acks}

%% Bibliography
\bibliography{prog_lang}

\ifwithappendix
%% Appendix
\clearpage
\appendix
\section{Cyclic Proof System}
This section gives the cyclic proof systems used in this paper, of which details are omitted.
complete list of rules of the cyclic proof system used in this paper.
The description of this section is condensed; see also \cite{Brotherston2011} for the basic ideas of cyclic proof systems.

A \emph{first-order signature} is a pair of sets of function symbols and predicate symbols.
Each symbol is associated with a natural number, called its \emph{arity}.\footnote{So only single-sorted signatures are considered here.}
Assume a first-order signature, fixed in the sequel.

\subsection{Formulas and judgements}
For each natural number \( k \), we assume a (finite or infinite) set \( \mathcal{V}_k \) of \emph{predicate variables} of arity \( k \).
We assume \( \mathcal{V}_k \cap \mathcal{V}_\ell = \emptyset \) if \( k \neq \ell \).

The syntax of \emph{terms} is standard:
\begin{equation*}
    t ::= x \mid f(t_1,\dots,t_k)
\end{equation*}
where \( x \) is a (term or object) variable and \( f \) is an arity-\( k \) function symbol in the signature.
The syntax of \emph{formulas} is also standard:
\begin{equation*}
    \varphi,\psi ::= p(t_1,\dots,t_k) \mid t=t' \mid P(t_1,\dots,t_n) \mid \neg \varphi \mid \varphi \wedge \psi \mid \varphi \vee \psi \mid \forall x. \varphi \mid \exists x. \varphi
\end{equation*}
where \( p \) is an arity-\( k \) predicate symbol in the signature, \( P \) is an arity-\( k \) predicate variable and \( t_1,\dots,t_k \) are terms.

Each predicate variable \( P \in \mathcal{V}_k \) is associated with its defining equation:
\begin{equation*}
    P(x_1,\dots,x_k)
    \quad\Leftrightarrow\quad
    \delta_P(x_1,\dots,x_k).
\end{equation*}
Note that \( \delta_P(x_1,\dots,x_k) \) may contain \( P \) and/or other predicate variables;
every occurrence of a predicate variable in \( \delta_P \) must be positive (i.e.~every occurrence of a predicate variable is under an even number of negation operators).
The meaning of this equation depends on the setting; Section~\ref{sec:logic:lfp} regards \( P \) as the least fixed-point of the equation and Section~\ref{sec:logic:gfp} as the greatest fixed-point.\footnote{Note that \( \delta_P \) may contain a predicate variable other than \( P \).  So the meaning of all predicate variables are defined by mutual recursion, as a solution of the large system of equations.}
We write \( \delta_P[\psi] \) for the formula obtained by replacing \( P \) in \( \delta_P \) with \( \psi \).

We define a \emph{judgement} \( \Gamma \vdash \Delta \) as a pair of finite sequences of formulas.

A \emph{substitution} \( \theta \) is a list \( [t_1/x_1,\dots,t_k/x_k] \) of pairs of terms and variables such that \( x_1,\dots,x_k \) are pairwise distinct.
We write \( \varphi\theta \) for the formula obtained by replacing \( x_i \) in \( \varphi \) with \( t_i \) (for each \( i \)).
A substitution is also applicable to sequences of formulas: if \( \Gamma = \varphi_1,\dots,\varphi_\ell \), then \( \Gamma\theta \) is defined as \( \varphi_1\theta,\dots,\varphi_\ell\theta \).

\subsection{Standard rules for predicate logic}
Figure~\ref{fig:rules-for-predicate-logic} is the list of rules for the sequent calculus for predicate logic (except for the structural rules, which we omit here).
The rules ($\forall$-R) and ($\exists$-L) have the standard side condition: the variable \( x \) has no free occurrence in \( \Gamma \) nor \( \Delta \).
\begin{figure}[t]
    \begin{minipage}{.45\linewidth}
        \leavevmode
    \infrule[Ax]{
        \mathstrut
    }{
        \Gamma, \varphi \vdash \Delta, \varphi
    }
    \infrule[Cut]{
        \Gamma \vdash \Delta, \varphi
        \qquad
        \Gamma, \varphi \vdash \Delta
    }{
        \Gamma \vdash \Delta
    }
    \infrule[Subst]{
        \Gamma \vdash \Delta
    }{
        \Gamma\theta \vdash \Delta\theta
    }
    \infrule[$\top$-L]{
        \Gamma \vdash \Delta
    }{
        \Gamma, \top \vdash \Delta
    }
    \infrule[$\top$-R]{
        \mathstrut
    }{
        \Gamma \vdash \Delta, \top
    }
    \infrule[$\bot$-L]{
        \mathstrut
    }{
        \Gamma, \bot \vdash \Delta
    }
    \infrule[$\bot$-R]{
        \Gamma \vdash \Delta
    }{
        \Gamma \vdash \Delta, \bot
    }
    \infrule[$\wedge$-L]{
        \Gamma, \varphi, \psi \vdash \Delta
    }{
        \Gamma, \varphi \wedge \psi \vdash \Delta
    }
    \infrule[$\wedge$-R]{
        \Gamma \vdash \Delta, \varphi
        \qquad
        \Gamma \vdash \Delta, \psi
    }{
        \Gamma \vdash \Delta, \varphi \wedge \psi
    }
    \end{minipage}
    \begin{minipage}{0.45\linewidth}
        \leavevmode
    \infrule[$\vee$-L]{
        \Gamma, \varphi \vdash \Delta
        \qquad
        \Gamma, \psi \vdash \Delta
    }{
        \Gamma, \varphi \vee \psi \vdash \Delta
    }
    \infrule[$\vee$-R]{
        \Gamma \vdash \Delta, \varphi, \psi
    }{
        \Gamma \vdash \Delta, \varphi \vee \psi
    }
    \infrule[$\neg$-L]{
        \Gamma \vdash \Delta, \varphi
    }{
        \Gamma, \neg \varphi \vdash \Delta
    }
    \infrule[$\neg$-R]{
        \Gamma, \varphi \vdash \Delta
    }{
        \Gamma \vdash \Delta, \neg \varphi
    }
    \infrule[$=$-L]{
        \Gamma[x/y] \vdash \Delta[x/y]
    }{
        \Gamma, x=y \vdash \Delta
    }
    \infrule[$=$-R]{
        \mathstrut
    }{
        \Gamma \vdash \Delta, t=t
    }
    \infrule[$\forall$-L]{
        \Gamma, \varphi[t/x] \vdash \Delta
    }{
        \Gamma, \forall x. \varphi \vdash \Delta
    }
    \infrule[$\forall$-R]{
        \Gamma \vdash \Delta, \varphi
    }{
        \Gamma \vdash \Delta, \forall x. \varphi
    }
    \infrule[$\exists$-L]{
        \Gamma, \varphi \vdash \Delta
    }{
        \Gamma, \exists x. \varphi \vdash \Delta
    }
    \infrule[$\exists$-R]{
        \Gamma \vdash \Delta, \varphi[t/x]
    }{
        \Gamma \vdash \Delta, \exists x. \varphi
    }
    \end{minipage}
    \caption{Standard rules of the sequent calculus for predicate logic}
    \label{fig:rules-for-predicate-logic}
\end{figure}

\subsection{Cyclic proof system for least fixed-points}
\label{sec:logic:lfp}
This subsection describes the cyclic proof system for the predicate logic with least fixed-points.
Hence \( P \) is regarded as the least solution of \( P(x_1,\dots,x_k) \Leftrightarrow \delta_P(x_1,\dots,x_k) \).
To emphasise this interpretation, we write \( P(t_1,\dots,t_n) \) as \( (\mu P)(t_1,\dots,t_n) \).

Figure~\ref{fig:rules-lfp} is the list of rules for the sequent calculus for the predicate logic with least fixed-points.
\begin{figure}[t]
    \begin{minipage}{.45\linewidth}
        \leavevmode
    \infrule[$\mu$-L]{
        \Gamma, \delta_P[\mu P](t_1,\dots,t_k) \vdash \Delta
    }{
        \Gamma, (\mu P)(t_1,\dots,t_k) \vdash \Delta
    }
    \end{minipage}
    \begin{minipage}{0.45\linewidth}
        \leavevmode
    \infrule[$\mu$-R]{
        \Gamma \vdash \Delta, \delta_P[\mu P](t_1,\dots,t_k)
    }{
        \Gamma \vdash \Delta, (\mu P)(t_1,\dots,t_k)
    }
    \end{minipage}
    \caption{The rules for least fixed-points}
    \label{fig:rules-lfp}
\end{figure}

We consider derivations which may be \emph{partial} in the sense that leaves are not necessarily axioms (i.e.~those provable by rules with no premise).
We call a non-axiom leaf an \emph{open leaf} (or a \emph{bud node}).
A \emph{companion} of an open leaf \( l \) is an internal node \( n \) in the derivation such that \( l \) and \( n \) are labelled by the same sequent.

A \emph{pre-proof} is a (possibly partial) derivation with a function that assigns a companion for each open leaf.
Hence it is a directed and pointed graph (instead of a tree) such that
\begin{itemize}
    \item each node is labelled by a sequent, and
    \item if \( n_1,\dots,n_k \) is the set of nodes immediately reachable from \( n \), there is a proof rule whose conclusion is (the label of) \( n \) and whose premises are (the labels of) \( n_1,\dots,n_k \).
\end{itemize}
Here a graph is \emph{pointed} if it is equipped with a chosen node that we call the \emph{initial node}.

As we have seen, a pre-proof is not necessarily valid.
We introduce a correctness criterion, called the \emph{global trace condition}.
We need some auxiliary notions.

An \emph{infinite path} \( \pi = n_0 n_1 n_2 \dots \) of a pre-proof is an infinite path of the corresponding graph (starting from the initial node \( n_0 \)): the first node \( n_0 \) is the conclusion of the derivation and \( n_{i+1} \) is a premise of \( n_i \).
An \emph{infinite trace} of an infinite path \( \pi = n_0 n_1 \dots \) is an infinite sequence \( \varphi_0 \varphi_1 \dots \) of (occurrences of) formulas in the derivation that satisfies the following conditions:
\begin{itemize}
    \item \( \varphi_i \) appears in the sequent of \( n_i \).
    \item \( \varphi_{i+i} \) is an occurrence of a formula in the premise \( n_{i+1} \) of \( n_i \), an occurrence which corresponds to \( \varphi_i \) in \( n_i \).
    \begin{itemize}
        \item If \( \varphi_i \) is the principal occurrence and the rule concluding \( n_i \) is not \textsc{($\wedge$-L)} nor \textsc{($\vee$-R)}, then \( \varphi_{i+1} \) is the unique formula outside of \( \Gamma,\Delta \).  For example, if \( n = (\Gamma, \psi_1 \vee \psi_2 \vdash \Delta) \), \( \varphi_{i} = \psi_1 \vee \psi_2 \), the rule is
        \[
            \dfrac{
                \Gamma, \psi_1 \vdash \Delta
                \qquad
                \Gamma, \psi_2 \vdash \Delta
            }{
                \Gamma, \psi_1 \vee \psi_2 \vdash \Delta
            },
        \]
        and \( n_{i+1} \) is the left premise, then \( \varphi_{i+1} = \psi_1 \).
        The rules \textsc{($\wedge$-L)} nor \textsc{($\vee$-R)} are exceptional: these rules has two formulas outside of \( \Gamma,\Delta \) and \( \varphi_{i+1} \) is one of them.
        \item If \( \varphi_i \) is not principal, i.e.~it is in \( \Gamma \) or \( \Delta \), then \( \varphi_{i+1} \) is the same formula in \( \Gamma \) or \( \Delta \).
    \end{itemize}
\end{itemize}
Given a trace \( \varphi_0,\varphi_1,\dots \), the \( i \)-th occurrence \( \varphi_i \) is a \emph{progressing point} if \( \varphi_i \) is the principal occurrence of \textsc{($\mu$-L)} rule.
A trace is \emph{infinitely progressing} if it has infinitely many progressing points.

A pre-proof is a \emph{proof} if every infinite path has an infinitely progressing trace.

\subsection{Cyclic proof system for greatest fixed-points}
\label{sec:logic:gfp}
The cyclic proof system for the predicate logic with greatest fixed-points is essentially the same as that for least fixed-points.
Semantically, the difference is that \( P \) is regarded as the greatest solution of \( P(x_1,\dots,x_k) \Leftrightarrow \delta_P(x_1,\dots,x_k) \) (so \( P(t_1,\dots,t_n) \) is written as \( (\nu P)(t_1,\dots,t_n) \)).

The proof rules for greatest fixed-points, which are the same as those for least fixed-points, are given in Figure~\ref{fig:rules-gfp}.
The notions of pre-proofs, paths and traces are the same as in the proof system for least fixed-points.
The difference is that \textsc{(\( \nu \)-R)} is progressive whereas \textsc{(\(\nu\)-L)} is not.
Hence the global trance condition requires that each infinite path has an infinite trace that appears at the principal position of \textsc{(\( \nu \)-R)} rule infinitely many times.
\begin{figure}[t]
    \begin{minipage}{.45\linewidth}
        \leavevmode
    \infrule[$\nu$-L]{
        \Gamma, \delta_P[\nu P](t_1,\dots,t_k) \vdash \Delta
    }{
        \Gamma, (\nu P)(t_1,\dots,t_k) \vdash \Delta
    }
    \end{minipage}
    \begin{minipage}{0.45\linewidth}
        \leavevmode
    \infrule[$\nu$-R]{
        \Gamma \vdash \Delta, \delta_P[\nu P](t_1,\dots,t_k)
    }{
        \Gamma \vdash \Delta, (\nu P)(t_1,\dots,t_k)
    }
    \end{minipage}
    \caption{The rules for greatest fixed-points}
    \label{fig:rules-gfp}
\end{figure}

\begin{remark}
    This paper does not deal with formulas which contains both least and greatest fixed-points.
    The idea of cyclic proofs is applicable to logics with both least and greatest fixed-points, but we do not give a detail of cyclic proof systems for such logics because they need more complicated global trace conditions.
    \qed
\end{remark}

\section{Simulating PDR in \cite{Hoder2012}}
\label{sec:appx:practical-pdr}
Recall PDR and an extension of \IndPDR{} discussed in Remark~\ref{rem:pdr:practical}.
Let \( \sim \) be a relation between configurations defined by
\begin{align*}
    \textstyle
    &
    (c_k \dots c_n \parallel \varphi_0 \dots \varphi_n)
    \sim
    (c_k \dots c_n \parallel F_0 \dots F_n)
    \qquad\mbox{if and only if}\qquad
    \mbox{for every \( 0 \le j \le n \),}
    \\
    &
    \textstyle
    \hspace{30pt}
    \bigvee_{0 \le i \le j} \varphi_i(x) \models F_j(x)
    \quad\mbox{and}\quad
    (F_j(x) \wedge \neg \bigvee_{0 \le i \le j} \varphi_i(x)) \quad\mbox{is safe within \( (n-j) \) steps.}
    % (c_k \dots c_n \parallel \varphi_0 (\varphi_0 \vee \varphi_1) \dots (\bigvee_{0 \le i \le k} \varphi_i) \dots (\bigvee_{0 \le i \le n} \varphi_i))
\end{align*}
We show that the relation \( \sim \) is a weak bisimulation.
\begin{theorem}
    Suppose that \( (\vec{c} \parallel \vec{\varphi}) \sim (\vec{c} \parallel \vec{F}) \).
    If \( (\vec{c} \parallel \vec{\varphi}) \longrightarrow (\vec{c}' \parallel \vec{\varphi}') \), then \( (\vec{c} \parallel \vec{F}) \longrightarrow^*_{\textsc{pdr}} (\vec{c}' \parallel \vec{F}') \) and \( (\vec{c}' \parallel \vec{\varphi}') \sim (\vec{c}', \parallel \vec{F}') \) for some \((\vec{c}' \parallel \vec{F}')\).
    Conversely, if \( (\vec{c} \parallel \vec{F}) \longrightarrow_{\textsc{pdr}} (\vec{c}' \parallel \vec{F}') \), then \( (\vec{c} \parallel \vec{\varphi}) \longrightarrow^* (\vec{c}' \parallel \vec{\varphi}') \) and \( (\vec{c}' \parallel \vec{\varphi}') \sim (\vec{c}' \parallel \vec{F}') \) for some \((\vec{c}' \parallel \vec{\varphi}')\).
\end{theorem}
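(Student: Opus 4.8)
The plan is to show that $\sim$ is a weak bisimulation by a rule-by-rule case analysis in both directions, reusing and extending the argument already carried out for the strict bisimulation $\approx$ in the proof of Theorem~\ref{thm:PDR-MC}. The only genuinely new ingredients here are the rules \textsc{(Induction)} and \textsc{(Induction/PDR)}, together with the weakened \textsc{(Conflict/PDR)} (which drops the premise $F_{k-1}(x)\models\psi(x)$ and instead conjoins $\psi$ to the whole prefix $F_0\dots F_k$ to preserve monotonicity). Accordingly, the relation is loosened from the exact identity $F_j\Leftrightarrow\bigvee_{0\le i\le j}\varphi_i$ used in $\approx$ to the \emph{inclusion} $\bigvee_{0\le i\le j}\varphi_i\models F_j$ plus the clause that the complement $F_j\wedge\neg\bigvee_{0\le i\le j}\varphi_i$ is safe within $(n-j)$ steps. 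The inclusion clause is exactly what lets every step of \IndPDR{} be mirrored by PDR: a witness (candidate or decided predecessor) satisfying a $\varphi$ automatically satisfies the corresponding weaker $F_j$, so \textsc{(Candidate)}, \textsc{(Decide)} and \textsc{(Conflict)} are each matched by a single PDR step, essentially as in Theorem~\ref{thm:PDR-MC}.

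The safety clause is the device that handles the converse direction, where PDR fires using the weaker cumulative frame. First I would record the elementary fact that ``safe within $k$ steps'' is the bounded-model-checking notion of Section~\ref{sec:bmc}, so that an unsafety witness of length $m$ rules out membership in any region safe within $m$ steps. Then, when \textsc{(Candidate/sPDR)} produces $c$ with $\models F_n(c)\wedge\neg\alpha(c)$, the level-$n$ safety clause ($n-j=0$, i.e.\ $F_n\wedge\neg\bigvee_{i\le n}\varphi_i\models\alpha$) forces $c\models\bigvee_{i\le n}\varphi_i$, and since $\varphi_i\models\alpha$ for $i<n$ by (\ref{eq:pdr:input-condition}) we get $c\models\varphi_n$, so \IndPDR{} can take the matching \textsc{(Candidate)}. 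Likewise, when \textsc{(Decide/sPDR)} picks a predecessor $c'$ at level $k-1$, the suffix $c'c_k\dots c_n$ with $\neg\alpha(c_n)$ is an unsafety witness of length $n-(k-1)$; the level-$(k-1)$ safety clause therefore excludes $c'$ from the complement and forces $c'\models\bigvee_{i\le k-1}\varphi_i$, enabling the matching \IndPDR{} \textsc{(Decide)}. Thus the safety component of $\sim$ is precisely what transfers candidates and predecessors across the inclusion gap between the PDR frames and the \IndPDR{} diffs.

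The main obstacle is the simulation of the multi-frame rules. Since \textsc{(Conflict/PDR)} conjoins $\psi$ to $F_0\dots F_k$ while \textsc{(Conflict)} touches only the single diff $\varphi_k$, I would simulate one \textsc{(Conflict/PDR)} step by a \textsc{(Conflict)} step (handling $\varphi_k$ and consuming the candidate $c_k$) together with an \textsc{(Induction)} step with $\ell=k$ (conjoining $\psi$ to $\varphi_0\dots\varphi_{k-1}$), and dually match an \IndPDR{} \textsc{(Induction)}/\textsc{(Conflict)} pair by the corresponding PDR rules. The side conditions of \textsc{(Induction)}, namely $\iota(y)\models\psi(y)$ and $\exists x.(\varphi_i(x)\wedge\psi(x))\wedge\tau(x,y)\models\psi(y)$ for $i<\ell$, I would derive from the \textsc{(Conflict/PDR)} premise $\iota(y)\vee(\exists x.F_{k-1}(x)\wedge\tau(x,y))\models\psi(y)$ using the monotonicity $\bigvee_{i\le k-1}\varphi_i\models F_{k-1}$ supplied by the inclusion clause. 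After each such update I would re-verify both clauses of $\sim$ at every level: the inclusion clause is stable because conjunction is monotone, and the safety clause is preserved because conjoining $\psi$ only shrinks the complement $F_j\wedge\neg\bigvee_{i\le j}\varphi_i$ at each level, so a region safe within $(n-j)$ steps stays safe.

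I expect the hard part to be the bookkeeping that keeps the two candidate sequences synchronized while the two sides update differing numbers of frames, and, at the temporarily loosened intermediate configurations produced during a multi-step match, confirming the safety-of-complement clause — which is exactly where the bounded-model-checking reading and the unsafe-trace exclusion argument above carry the weight. In particular, verifying that the inductiveness side conditions genuinely transfer across the inclusion gap between $F_{k-1}$ and $\bigvee_{i\le k-1}\varphi_i$ (rather than merely over the $\varphi_i$'s) is the delicate point; once that is settled for \textsc{(Conflict/PDR)} and \textsc{(Induction/PDR)}, the remaining rules reduce to the analysis of Theorem~\ref{thm:PDR-MC}, and the two simulation diagrams close, establishing that $\sim$ is a weak bisimulation.
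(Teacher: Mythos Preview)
Your treatment of the direction $\longrightarrow_{\textsc{pdr}}\Rightarrow\longrightarrow^*$ is essentially the paper's argument: the safety clause is indeed what pushes a PDR-chosen candidate or predecessor into the $\varphi$'s, and \textsc{(Conflict/PDR)} is correctly simulated by \textsc{(Conflict)} followed by \textsc{(Induction)} with the same~$\psi$, using $\varphi_i\models F_{k-1}$.

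The gap is in the other direction. You claim that \textsc{(Conflict)} (and implicitly \textsc{(Induction)}) is ``matched by a single PDR step'' via the inclusion clause, as in Theorem~\ref{thm:PDR-MC}. But the inclusion points the \emph{wrong} way for this: the premise of \textsc{(Conflict/PDR)} has $F_{k-1}$ in the antecedent, so a weaker $F_{k-1}\supseteq\bigvee_{i\le k-1}\varphi_i$ makes that premise \emph{harder}, not easier, to satisfy. From $\varphi_{k-1}(x)\wedge\tau(x,y)\models\psi(y)$ you cannot in general obtain any $\psi'$ with $\iota(y)\vee(\exists x.\,F_{k-1}(x)\wedge\tau(x,y))\models\psi'(y)$ and $\psi'(c_k)$ false: states in the slack $F_{k-1}\wedge\neg\bigvee_{i\le k-1}\varphi_i$ may well have a one-step successor equal to $c_k$, since the safety clause only bounds their \emph{forward} behaviour for $(n-k+1)$ steps and says nothing about where that successor lies relative to $\psi$ or $F_{k-1}$. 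The same obstruction blocks a single-step match of \textsc{(Induction)} by \textsc{(Induction/PDR)}.

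The paper's missing ingredient is a \emph{normalisation} pass: from any $(\vec c\parallel\vec F)$ with $(\vec c\parallel\vec\varphi)\sim(\vec c\parallel\vec F)$ one can first reach, by iterated \textsc{(Induction/PDR)} with $\psi=\bigvee_{i\le\ell}\varphi_i$ for $\ell=1,\dots,n$, a configuration where $F_j\Leftrightarrow\bigvee_{i\le j}\varphi_i$ exactly (the inclusion clause guarantees $F_j\wedge\psi=\psi$, and the partial-proof conditions~(\ref{eq:pdr:input-condition}) supply the inductiveness premise). Only after this tightening does one apply \textsc{(Conflict/PDR)} with $\psi':=F_{k-1}\vee\psi$; now the premise holds by case analysis on $i$ in $\bigvee_{i\le k-1}\varphi_i$, and $\neg\psi'(c_k)$ follows because the trace $c_k\dots c_n$ with $\neg\alpha(c_n)$ would, via~(\ref{eq:pdr:input-condition}), force $\alpha(c_n)$ if any $\varphi_i(c_k)$ held for $i\le k-1$. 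This normalisation step is what your proposal lacks, and without it the simulation of \textsc{(Conflict)} (hence of \IndPDR{}) by PDR does not go through.
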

\begin{proof}
    This is intuitively clear as each rule has a counterpart, but we need some care to establish the precise correspondence.
    Note that \( (\vec{c} \parallel \vec{F}) \longrightarrow_{\textsc{pdr}}^* (\vec{c} \parallel \vec{F}') \) with \( F'_j(x) \Leftrightarrow \bigvee_{0 \le i \le j} \varphi_i(x) \) for every \( 0 \le j \le n \) by iteratively applying \textsc{(Induction/pdr)} for \( \ell = 1,2,\dots,n \).
    This is useful when simulating \(\longrightarrow\) by \(\longrightarrow_{\textsc{pdr}}\).

    Correspondence of \textsc{(Induction/PDR)} and \textsc{(Induction)} is easy.

    \textsc{(Candidate/PDR)} can be simulated by \textsc{(Candidate)}.
    To see this, we have to check that \( \models F_n(c) \wedge \neg \alpha(c) \) implies \( \models \varphi_n(c) \wedge \neg \alpha(c) \).
    This follows from the assumption \( F_n(x) \wedge (\bigvee_{0 \le i \le n} \varphi_i(x)) \) is safe within \( 0 \) steps, i.e.~\( F_n(x) \wedge (\bigvee_{0 \le i \le n} \varphi_i(x)) \models \alpha(x) \).
    The converse is trivial because \( \varphi_n(x) \models F_n(x) \).

    The correspondence of \textsc{(Decide/PDR)} and \textsc{(Decide)} can be show similarly.

    \textsc{(Conflict)} can be simulated by \textsc{(Conflict/PDR)} and \textsc{(Induction/PDR)}.
    By using \textsc{(Induction/PDR)}, we can assume without loss of generality that \( F'_j(x) \Leftrightarrow \bigvee_{0 \le i \le j} \varphi_i(x) \) for every \( 0 \le j \le n \).
    Suppose \( \varphi_{k-1}(x) \wedge \tau(x,y) \models \psi(y) \) and \( \psi(x) \models x \neq c_k \).
    Let \( \psi'(x) = F_{k-1}(x) \vee \psi(x) \).
    Then \textsc{(Conflict/PDR)} with the interpolation \( \psi' \) gives a desired transition.

    \textsc{(Conflict/PDR)} can be simulated by \textsc{(Conflict)} and \textsc{(Induction)}.
    Assume \( \iota(y) \vee \big(\exists x. F_{k-1}(x) \wedge \tau(x,y) \big) \models \psi(y) \).
    By \( \varphi_i(x) \models F_{k-1}(x) \) for every \( i \le k-1 \), we have \( \iota(y) \models \psi(y) \) and \( \exists x. \varphi_i(x) \wedge \tau(x,y) \models \psi(y) \) for every \( i \le k-1 \).
    By \textsc{(Conflict)} and \textsc{(Induction)},
    \begin{align*}
        (c_k \dots c_n \parallel \varphi_0 \dots \varphi_n)
        &\qquad\longrightarrow\qquad
        (c_{k+1} \dots c_n \parallel \varphi_0 \dots \varphi_{k-1} (\varphi_k \wedge \psi) \varphi_{k+1} \dots \varphi_n)
        \\
        &\qquad\longrightarrow\qquad
        (c_{k+1} \dots c_n \parallel (\varphi_0 \wedge \psi) \dots (\varphi_k \wedge \psi) \varphi_{k+1} \dots \varphi_n)
    \end{align*}
    and this is the required transition.
\end{proof}

\section{Diverging behaviour of \recmc{}}
We shows that \recmc{}~\cite{Komuravelli2014,Komuravelli2016} is possibly diverging even for bounded depth.

\paragraph{CHC}
For ease of explanation, we use a CHC system that slightly differs from that in the main text.
The CHC system has multiple predicate variables \( P \), \( Q \) and \( R \); 
if you would like to have a counterexample with a single predicate variable (in order to fit a formalisation of, for example, \cite{Komuravelli2015}), you can replace \( P(x) \), \( Q(x) \) and \( R(x) \) with \( S(0,x) \), \( S(1,x) \) and \( S(2,x) \), respectively, for a fresh predicate variable \( S \).

The non-linear CHC system is given by
\begin{equation*}
    \{~~ 
        x = 0 \Rightarrow P(x),\quad
        \top \Rightarrow Q(x),\quad
        P(x) \wedge Q(x) \Rightarrow R(x),\quad
        R(x) \Rightarrow \bot
    ~~\}.
\end{equation*}
This is unsatisfiable, since \( P(0) \), \( Q(0) \) and \( R(0) \) must be true, contradicting \( R(x) \Rightarrow \bot \).

For an arbitrary model-based projection \( \mathit{MBP} \), we define a badly-behaved model-based projection \( \mathit{MBP}' \) by
\begin{equation*}
    \mathit{MBP}'(\psi(x,\vec{y}), \{\vec{y}\}, M)
    ~~:=~~
    \begin{cases}
        x = m & \mbox{if \( M(x) = m = \max \{ n > 0 \mid {} \models \exists \vec{y}. \psi(n,y)\} \)}
        \\
        x = m & \mbox{if \( M(x) = m = \min \{ n > 0 \mid {} \models \exists \vec{y}. \psi(n,y)\} \)}
        \\
        \mathit{MBP}(\psi(x,y), \{y\},M) & \mbox{otherwise.}
    \end{cases}
\end{equation*}
This satisfies the requirements for model-based projection: note that the cardinality of the image of \( \mathit{MBP}'(\psi(x,\vec{y}), \{\vec{y}\}, ({-})) \) is that of \( \mathit{MBP} \) plus \( 2 \) and thus finite.

We also assume that
\begin{equation}
    \mathit{interpolation}(\psi(x,y), \vartheta(y)) ~~:=~~ \vartheta(y).
    \label{eq:appx:interpolation}
\end{equation}

\paragraph{Divergence of\/ \recmc{}}
The behaviour of \recmc{}, specialised to the above setting, can be explained as follows.

We run the procedure to check the safety of the CHC with depth \( 2 \).
\recmc{} in maintains both over- and under-approximations for each predicate for each depth.
We write \( \displaystyle A_i = \left[ \begin{array}{c} \varphi \\ \psi \end{array} \right] \) to mean that the under- and over-approximations of predicate \( A \) at level \( i \) are \( \varphi \) and \( \psi \), respectively.
The expressions \( \overline{A_i} \) and \( \underline{A_i} \) indicate \( \varphi \) and \( \psi \), respectively.

Consider the following situation:
\begin{align*}
    P_1 &= \left[ \begin{array}{c} \top \\ \bot \end{array} \right]
    &
    Q_1 &= \left[ \begin{array}{c} \top \\ \bot \end{array} \right]
    &
    R_1 &= \left[ \begin{array}{c} \bot \\ \bot \end{array} \right]
    \\
    P_2 &= \left[ \begin{array}{c} \top \\ \bot \end{array} \right]
    &
    Q_2 &= \left[ \begin{array}{c} \top \\ \bot \end{array} \right]
    &
    R_2 &= \left[ \begin{array}{c} \top \\ \bot \end{array} \right].
\end{align*}
We would like to construct a refinement such that \( \overline{R}_2(x) \Rightarrow \bot \); this is impossible and \( x=0 \) is the unique counterexample.

\recmc{} maintains a set of \emph{queries}, which are constraints that the process tries to make true.
In the current situation, it starts from the singleton \( \{ R_2(x) \models \bot \} \).\footnote{
    The representation of queries differs from that in \cite{Komuravelli2016}.
    We describe a query by the condition \( A(\vec{x}) \models \varphi(\vec{x}) \) that a predicate should satisfy, but \cite{Komuravelli2016} represent it as a pair \( (A(\vec{x}), \neg \varphi(\vec{x})) \) expressing that \( A \) should not intersect with \( \neg \varphi \) (i.e.~\( \models \neg \exists \vec{x}. A(\vec{x}) \wedge \neg \varphi(\vec{x}) \)).
    The difference is not essential, as it is trivial to move from one to the other by negating the formula.
}

The behaviour of \recmc{} can be understood as rewriting of approximations and queries.
We do not give the complete description of the rules, but readers familiar with \recmc{} should easily understand which rule is rule at each step.

\recmc{} first checks if \( \overline{R_2}(x) \models \bot \) by changing only \( R_2 \).
It is impossible because
\begin{equation*}
        \overline{P_1}(x) \wedge \overline{Q_1}(x) ~~\Rightarrow~~ \overline{R_2}(x)
\end{equation*}
should hold, which requires \( \top \wedge \top \Rightarrow \overline{R_2}(x) \).
The impossibility can be witnessed by the model \( M(x) = 1 \), which satisfies the left-hand-side (i.e.~\( M \models \top \)) and thus \( \overline{R_2}(x) \) but violates the requirement (i.e.~\( M \not\models \bot \)).

So one needs to strengthen \( \overline{P_1} \) and/or \( \overline{Q_1} \).
For example, it may try to strengthen \( \overline{Q_1} \).
In this case, \recmc{} issues a new query \( Q_1(x) \models \psi_1(x) \), where \( \psi_1(x) \) is obtained by the model-based projection:
\begin{equation*}
    \mathit{MBP}'(\overline{P_1}(x) \wedge \neg \bot,~~\{\},~~M).
\end{equation*}
By definition of \( \mathit{MBP}' \), we have \( \psi_1(x) = (x = 1) \).
The new queries requires that \( Q_1(x) \) does not intersect with \( (x=1) \), i.e.~\( Q_1(x) \models x \neq 1 \).

This requirement cannot be fulfilled because of \( \top \Rightarrow Q(x) \).
This impossibility is recorded by updating the underappoximation \( \underline{Q_1} \) of \( Q_1 \), i.e.~\( \top \not\models \psi_1(x) \); it has the unique counter-model \( M \) such that \( M(x) = 1 \).
Update of underapproximation also uses by model-based projection:
\begin{equation*}
    \mathit{MBP}'(\top \wedge \neg \psi_1, \{\}, M),
\end{equation*}
which must be \( (x=1) \) (since \( \top \wedge \neg \psi_1 \) denotes the singleton set \( \{ 1 \} \)).
Now
\begin{equation*}
    Q_1 = \left[\begin{array}{c} \top \\ x = 1 \end{array}\right].
\end{equation*}
We remove the query \( Q_1(x) \models x \neq 1 \), since it has been made clear that it cannot be fulfilled.

We need to reconsider the query \( R_2(x) \models \bot \).
A model \( M \) with \( x = 1 \) is still a counterexample.
As the attempt to strengthen \( Q_1 \) failed, we try to strengthen \( P_1 \).
In this case, \recmc{} issues a new query \( Q_1(x) \models \vartheta_1(x) \), where \( \vartheta_1(x) \) is obtained by the model-based projection:\footnote{
    Here we use the underapproximation of \( Q \) to generate the query.  The general rule is that, from a pair of a constraint
    \begin{equation*}
        A^1(x^1) \wedge \dots \wedge A^n(x^n) ~~\Rightarrow~~ B(y)
    \end{equation*}
    a query \( B_\ell(y) \models \varphi(y) \),
    a new query for \( A^j \) is obtained by model-based projection of
    \begin{equation*}
        \overline{A_{\ell-1}}^1(x^1) \wedge \dots \wedge \overline{A_{\ell-1}}^{j-1} \wedge \underline{A_{\ell-1}}^{j+1} \wedge \dots \wedge \underline{A_{\ell-1}}^n(x^n) \wedge \neg \varphi(y).
    \end{equation*}    
}
\begin{equation*}
    \mathit{MBP}'(\underline{Q_1}(x) \wedge \neg \bot,~~\{\},~~M).
\end{equation*}
Since \( \underline{Q_1}(x) \wedge \neg \bot \) is equivalent to \( x=1 \), we have \( \vartheta_1(x) = (x = 1) \).
The new query is \( P_1(x) \models \neg \vartheta_1(x) \), i.e.~\( P_1(x) \models x \neq 1 \).

One can strengthen \( P_1(x) \) to meet the requirement (since \( x = 0 \Rightarrow P_1(x) \) is the unique constraint for \( P_1 \)).
Then \recmc{} strengthens \( P_1(x) \) by conjoining an interpolation
\begin{equation*}
    \mathit{interpolation}(\bot,~~x\neq 1).
\end{equation*}
Our interpolating theorem prover returns \( x \neq 1 \) as the interpolation (recall the assumption (\ref{eq:appx:interpolation})).
The current approximations are
\begin{align*}
    P_1 &= \left[ \begin{array}{c} x \neq 1 \\ \bot \end{array} \right]
    &
    Q_1 &= \left[ \begin{array}{c} \top \\ x = 1 \end{array} \right]
    &
    R_1 &= \left[ \begin{array}{c} \bot \\ \bot \end{array} \right]
    \\
    P_2 &= \left[ \begin{array}{c} \top \\ \bot \end{array} \right]
    &
    Q_2 &= \left[ \begin{array}{c} \top \\ \bot \end{array} \right]
    &
    R_2 &= \left[ \begin{array}{c} \top \\ \bot \end{array} \right].
\end{align*}

As a result of the update of \( \overline{P_1}(x) \), we can update the \huchanged{over-approximation} of \( R_2 \) by an argument similar to the case of \( P_1 \):
\begin{align*}
    P_1 &= \left[ \begin{array}{c} x \neq 1 \\ \bot \end{array} \right]
    &
    Q_1 &= \left[ \begin{array}{c} \top \\ x = 1 \end{array} \right]
    &
    R_1 &= \left[ \begin{array}{c} \bot \\ \bot \end{array} \right]
    \\
    P_2 &= \left[ \begin{array}{c} \top \\ \bot \end{array} \right]
    &
    Q_2 &= \left[ \begin{array}{c} \top \\ \bot \end{array} \right]
    &
    R_2 &= \left[ \begin{array}{c} x \neq 1 \\ \bot \end{array} \right].
\end{align*}
However \( \overline{R_2}(x) \) still has infinite counterexamples to \( \overline{R_2}(x) \models \bot \).

We can do the same for the model \( M(x) = 2 \).
The new query \( Q_1(x) \models \psi_2(x) \) is obtained by model-based projection:
\begin{equation*}
    \psi_2(x) ~~=~~
    \mathit{MBP}'(\overline{P_1}(x) \wedge \neg \bot,~~\{\},~~M).
\end{equation*}
Since \( 2 \) is the minimum positive value satisfying \( \overline{P_1}(x) \wedge \neg \bot \), we have \( \psi_2(x) = (x = 2) \) by the definition of \( \mathit{MBP}' \).
So model-based projection \( \mathit{MBP}' \) dose not generalise the counterexample in this case.
The new query leads to an update of \( \underline{Q_1} \), resulting in
\begin{align*}
    P_1 &= \left[ \begin{array}{c} x \neq 1 \\ \bot \end{array} \right]
    &
    Q_1 &= \left[ \begin{array}{c} \top \\ x = 1 \vee x = 2 \end{array} \right]
    &
    R_1 &= \left[ \begin{array}{c} \bot \\ \bot \end{array} \right]
    \\
    P_2 &= \left[ \begin{array}{c} \top \\ \bot \end{array} \right]
    &
    Q_2 &= \left[ \begin{array}{c} \top \\ \bot \end{array} \right]
    &
    R_2 &= \left[ \begin{array}{c} x \neq 1 \\ \bot \end{array} \right].
\end{align*}

We can then issue a new query \( P_1(x) \models \neg \vartheta_2(x) \) to \( P_1 \), where
\begin{equation*}
    \vartheta_2(x)
    ~~=~~
    \mathit{MBP}'(\underline{Q_1}(x) \wedge \neg \bot,~~\{\},~~M)
\end{equation*}
with \( M(x) = 2 \).
Note that \( \underline{Q_1}(x) \wedge \neg \bot \) is equivalent to \( 1 \le x \le 2 \).
Since \( 2 \) is the maximum positive integer satisfying \( \underline{Q_1}(x) \wedge \neg \bot \), \( \mathit{MBP}' \) does not generalise the model \( M(x) = 2 \) even in this case.
So \( \vartheta_2(x) = (x = 2) \).
This query causes update of the \huchanged{over-approximation} of \( P_1 \), and then \( R_2 \):
\begin{align*}
    P_1 &= \left[ \begin{array}{c} x \neq 1 \wedge x \neq 2 \\ \bot \end{array} \right]
    &
    Q_1 &= \left[ \begin{array}{c} \top \\ x = 1 \vee x = 2 \end{array} \right]
    &
    R_1 &= \left[ \begin{array}{c} \bot \\ \bot \end{array} \right]
    \\
    P_2 &= \left[ \begin{array}{c} \top \\ \bot \end{array} \right]
    &
    Q_2 &= \left[ \begin{array}{c} \top \\ \bot \end{array} \right]
    &
    R_2 &= \left[ \begin{array}{c} x \neq 1 \wedge x \neq 2 \\ \bot \end{array} \right].
\end{align*}

The same argument for the counterexample \( M(x) = 3 \), we reaches
\begin{align*}
    P_1 &= \left[ \begin{array}{c} x \neq 1 \wedge x \neq 2 \wedge x \neq 3 \\ \bot \end{array} \right]
    &
    Q_1 &= \left[ \begin{array}{c} \top \\ x = 1 \vee x = 2 \vee x = 3 \end{array} \right]
    &
    R_1 &= \left[ \begin{array}{c} \bot \\ \bot \end{array} \right]
    \\
    P_2 &= \left[ \begin{array}{c} \top \\ \bot \end{array} \right]
    &
    Q_2 &= \left[ \begin{array}{c} \top \\ \bot \end{array} \right]
    &
    R_2 &= \left[ \begin{array}{c} x \neq 1 \wedge x \neq 2 \wedge x \neq 3 \\ \bot \end{array} \right]
\end{align*}
and this process can continue indefinitely.

\ifverylong
\clearpage
\input{tree}
\fi
\fi

\end{document}